\newtheorem*{theorem*}{Theorem}
\newtheorem{theorem}{Theorem}[section]   
\newtheorem{proposition}[theorem]{Proposition}
\newtheorem{lemma}[theorem]{Lemma}        
\newtheorem{corollary}[theorem]{Corollary}
\newcommand{\N}{\mathbbm{N}}
\newcommand{\R}{\mathbbm{R}}
\newcommand{\C}{\mathbbm{C}}
\newcommand{\id}{\mathbbm{1}}            
\newcommand{\hilbert}{{\mathcal H}}      
\DeclareMathOperator{\im}{Im}            
\DeclareMathOperator{\re}{Re}            
\DeclareMathOperator{\sign}{sign}        
\DeclareMathOperator{\res}{res}          
\DeclareMathOperator{\ran}{ran}          
\DeclareMathOperator{\tr}{tr}            
\DeclareMathOperator{\arcosh}{arcosh}
\numberwithin{equation}{section}
\begin{document}
\title{A Szeg\H{o} Limit Theorem Related to the Hilbert Matrix}
\author{
Peter Otte\thanks{peter.otte@rub.de}\\ 
Institute of Mathematics, University of Kassel, Germany
}
\date{\today}
\maketitle
\begin{abstract}
The Szeg\H{o} limit theorem by Fedele and Gebert for matrices of the type identity minus Hankel matrix 
is proved for the special case $\id-\frac{\beta}{\pi}H_{N,\alpha}$ where $H_{N,\alpha}$ is the $N\times N$-Hilbert matrix, $\alpha\geq\frac{1}{2}$,
and $\beta\in\C$. The proof uses operator theoretic tools and a reduction to the classical Kac--Akhiezer theorem for the Carleman operator.
Thereby, the validity of the theorem for this special Hankel matrix can be extended from $|\beta|<1$ to $\beta\in\C\setminus\interval[open]{1}{\infty}$.
The bound on the correction term is improved to $O(1)$ instead of $o(\ln(N))$ for $\beta\in\C\setminus\interval[open right]{1}{\infty}$.
The limit case $\beta=1$ is derived directly from the asymptotics for general $\beta$.

\end{abstract}
%
%
\section{Introduction\label{intro}}
The Hilbert matrix appeared recently in the investigation of several problems such as 
Anderson's orthogonality catastrophe for Fermi gases \cite{GebertKuettlerMuellerOtte2016}, \cite{KnoerrOtteSpitzer2015}
and the spectral statistics of random matrices \cite{GebertPoplavskyi2019}.
In particular, all those problems led to some sort of Szeg\H{o} limit theorem for determinants. Subsequently, Fedele and Gebert \cite{FedeleGebert2019}
proved a Szeg\H{o} limit theorem for $\det(\id-\frac{\beta}{\pi} H_N)$ with a general $N\times N$ Hankel matrix $H_N$ and a parameter $\beta\in\C$, $|\beta|<1$.

Here, we give an alternative proof for the special case when $H_N$ is the Hilbert matrix. The proof uses operator theoretic methods. A key ingredient
is Wouk's integral formula \eqref{wouk} for the operator logarithm instead of the usual Taylor series. 
Thereby, the restriction $|\beta|<1$ can be replaced by the much weaker $\beta\notin\interval[open right]{1}{\infty}$
and the correction term is improved to $O(1)$ instead of $o(\ln(N)$ as in \cite{FedeleGebert2019}.
The limit case $\beta=1$ is directly deduced from the asymptotics for general $\beta$'s by use of a simple product formula, see \eqref{product}, which eventually
is a consequence of the third binomial formula.

To be more precise, we consider the general Hilbert matrix
\begin{equation*}
  H_{N,\alpha} = \Bigl( \frac{1}{j+k+\alpha} \Bigr)_{j,k=0,\ldots,N-1},\ N\in\N,\ \alpha>0  .
\end{equation*}
and obtain a Szeg\H{o} limit theorem for $\id-\frac{\beta}{\pi}H_{N,\alpha}$ with $\alpha\geq\frac{1}{2}$.
The case $0<\alpha<\frac{1}{2}$ is not treated herein since it would cause addtional technical difficulties.
The first main result of the paper is the following, see Theorem \ref{slt05t}. 

\begin{theorem}\label{intro01t}
Let $N\in\N$, $\alpha\geq\frac{1}{2}$ and $\beta\in\C\setminus\interval[open right]{1}{\infty}$. Then, 
the Hilbert matrix $H_{N,\alpha}$ satisfies
\begin{equation*}
  \det\bigl(\id - \frac{\beta}{\pi} H_{N,\alpha}\bigr) = \exp[ 2n_\alpha(N) \gamma(\beta) + O(1) ]\ \text{as}\ N\to\infty,\
   n_{\frac{\alpha}{2}}(N) = \frac{1}{4} \ln\bigl(\frac{N+\frac{\alpha}{2}}{\frac{\alpha}{2}}\bigr),
\end{equation*}
with the coefficient
\begin{equation*}
  \gamma(\beta) = \frac{1}{\pi^2} [\arcosh(-\beta)]^2 + \frac{1}{4} .
\end{equation*}
\end{theorem}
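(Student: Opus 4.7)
The plan is to convert the determinant into a resolvent-trace integral via Wouk's representation of the operator logarithm,
\begin{equation*}
  \log(\id - X) = -\int_0^1 X(\id - tX)^{-1}\,dt ,
\end{equation*}
applied to $X = \tfrac{\beta}{\pi} H_{N,\alpha}$. Taking the trace yields
\begin{equation*}
  \log\det\Bigl(\id - \tfrac{\beta}{\pi}H_{N,\alpha}\Bigr)
   = -\int_0^1 \tr\Bigl[\tfrac{\beta}{\pi}H_{N,\alpha}\bigl(\id - \tfrac{t\beta}{\pi}H_{N,\alpha}\bigr)^{-1}\Bigr]\,dt ,
\end{equation*}
which is valid provided $\id - t\tfrac{\beta}{\pi}H_{N,\alpha}$ is invertible throughout $t \in [0,1]$. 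Since $\sigma(H_{N,\alpha})\subset[0,\pi)$ uniformly in $N$, invertibility holds exactly for $\beta \in \C\setminus\interval[open right]{1}{\infty}$, which is precisely why Wouk's formula enlarges the admissible parameter range from the disk $|\beta|<1$ (needed for a Taylor-series approach) to the full slit plane.

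The second step is to compare the discrete resolvent trace with its continuous counterpart. The Hilbert matrix $H_{N,\alpha}$ is a Riemann-sum discretization of the Carleman integral operator $K_{N,\alpha}$ on $L^2([0,N])$ with kernel $(x+y+\alpha)^{-1}$, and the classical Kac--Akhiezer theorem for the Carleman operator supplies
\begin{equation*}
  \log\det\Bigl(\id - \tfrac{\beta}{\pi} K_{N,\alpha}\Bigr) = 2\,n_\alpha(N)\,\gamma(\beta) + O(1) ,
\end{equation*}
with $\gamma(\beta) = \pi^{-2}[\arcosh(-\beta)]^2 + \tfrac{1}{4}$ emerging from the Mellin-type spectral resolution of the full Carleman operator on $L^2([0,\infty))$, whose essential spectrum is $[0,\pi]$; the logarithmic rate $n_\alpha(N)$ reflects the accumulation of eigenvalues of $K_{N,\alpha}$ at the upper edge $\pi$. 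The branch of $\arcosh$ is to be fixed so that $\gamma$ is analytic on $\C\setminus\interval[open right]{1}{\infty}$, matching exactly the slit on which Wouk's formula applies.

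The main obstacle will be the quantitative trace-class comparison
\begin{equation*}
  \tr\bigl[H_{N,\alpha}(\id - \tfrac{t\beta}{\pi}H_{N,\alpha})^{-1}\bigr]
  - \tr\bigl[K_{N,\alpha}(\id - \tfrac{t\beta}{\pi}K_{N,\alpha})^{-1}\bigr]
  = O(1/N) ,
\end{equation*}
uniformly in $t\in[0,1]$ and in $\beta$ on compact subsets of the slit plane, so that integration over $t$ leaves an $O(1)$ remainder — a sharpening of the $o(\ln N)$ bound in Fedele--Gebert. This reduces to an Euler--Maclaurin-type estimate for the iterated kernels built from $(x+y+\alpha)^{-1}$ on the square $[0,N]^2$; the hypothesis $\alpha\geq\tfrac{1}{2}$ is exactly what tames the corner singularity at the origin and permits a uniform bound, whereas the regime $0<\alpha<\tfrac{1}{2}$ would require explicit boundary contributions and is therefore excluded.
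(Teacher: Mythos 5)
Your overall architecture --- Wouk's integral formula to reach all $\beta$ in the slit plane, reduction to the Carleman operator, and a Kac--Akhiezer theorem for the leading term --- is the same as the paper's. But the bridge between the finite Hilbert matrix and the truncated Carleman operator, which is the technical heart of the proof, is missing. You assert that $H_{N,\alpha}$ is a ``Riemann-sum discretization'' of the integral operator with kernel $(x+y+\alpha)^{-1}$ on $L^2([0,N])$ and that the resolvent traces differ by $O(1/N)$ uniformly in $t\in[0,1]$, to be established by ``an Euler--Maclaurin-type estimate.'' This is not carried out, and it is the hard part: both traces grow like $\ln N$, the resolvent kernels are not explicit, and nothing in your argument controls the boundary and off-diagonal contributions. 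Moreover, the claimed $O(1/N)$ rate is \emph{stronger} than what the paper itself obtains --- there the discrepancy between the two determinants is only $O(1)$, produced by a perturbation $D_N$ whose trace norm is bounded uniformly in $N$ but does not tend to zero --- so you should expect your comparison to be unprovable as stated.

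The paper avoids discretization entirely. It writes the entries as Laplace transforms, $\frac{1}{j+k+\alpha}=\int_0^\infty e^{-(j+k+\alpha)x}\,dx$, factors $H_{N,\alpha}=AB$ with $A:L^2(\R^+)\to\C^N$ and $B:\C^N\to L^2(\R^+)$, and uses $\sigma(AB)\setminus\{0\}=\sigma(BA)\setminus\{0\}$ to replace $H_{N,\alpha}$ \emph{exactly} by a Hankel integral operator $G_{N,\alpha}$ (Lemma \ref{hm01t}); then $G_{N,\alpha}=E_\alpha P_{[0,N]}E_\alpha^*+D_N$ with $\|D_N\|_1\le C_\alpha$ uniformly in $N$ (Lemma \ref{hm02t}), and the same $AB$-versus-$BA$ trick identifies $E_\alpha P_{[0,N]}E_\alpha^*$ spectrally with $P_{[\frac{\alpha}{2},N+\frac{\alpha}{2}]}KP_{[\frac{\alpha}{2},N+\frac{\alpha}{2}]}$ (Lemma \ref{hm03t}). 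Wouk's formula enters only through the perturbation determinant \eqref{determinant_integral}, where the uniform resolvent bound of Lemma \ref{hm04t} --- this is where $\alpha\ge\frac{1}{2}$ is actually used, via $H_{N,\alpha}<\pi\id$, not to tame a corner singularity --- yields $\ln\Delta_N(\beta)=O(1)$. Finally, the ``classical'' Kac--Akhiezer input you invoke also needs justification for complex $\beta$: the paper transforms $K$ unitarily to the convolution operator with symbol $\sqrt{\pi/2}/\cosh(\pi\omega/2)$ (this exponential change of variables is precisely where $n_{\frac{\alpha}{2}}(N)=\frac{1}{4}\ln\bigl(\frac{N+\alpha/2}{\alpha/2}\bigr)$ comes from) and then applies a Szeg\H{o} theorem for \emph{normal} operators, Proposition \ref{slt01t}, after verifying that the spectrum of the logarithm $\ln(1-\beta/\cosh(\cdot))$ lies in a horizontal strip of height less than $\pi$. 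To salvage your route you would have to either prove the Euler--Maclaurin comparison (unlikely at the claimed rate) or replace it with an exact identity of the Laplace-transform type.
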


The use of $n_{\frac{\alpha}{2}}(N)$ instead of the simple logarithm $\ln(N)$ arises naturally during the proof, see below. Obviously,
\begin{equation*}
   n_{\frac{\alpha}{2}}(N) \sim \frac{1}{2}\ln(N) .
\end{equation*}
Note that $\gamma(\beta)$ is given in a different but equivalent form in \cite{FedeleGebert2019}, see \eqref{fedele_gebert}.

The proof of the theorem consists of two parts. In the first part, we relate the Hilbert matrix $H_{N,\alpha}$ to an integral operator $G_{N,\alpha}$ such that,
Lemma \ref{hm01t},
\begin{equation*}
  \det(\id - \frac{\beta}{\pi}H_{N,\alpha}) = \det(\id - \frac{\beta}{\pi}G_{N,\alpha}) .
\end{equation*}
The idea here is, essentially, to write the matrix entries of $H_{N,\alpha}$ as Laplace transforms
\begin{equation*}
  \frac{1}{j+\alpha} = \int_0^\infty e^{-(j+\alpha)x} \, dx .
\end{equation*}
We then show, Proposition \ref{hm05t}, that
\begin{equation*}
   \det(\id - \frac{\beta}{\pi}G_{N,\alpha}) \bigr) 
    = \det(\id - \frac{\beta}{\pi} P_{[\frac{\alpha}{2},N+\frac{\alpha}{2}]}KP_{[\frac{\alpha}{2},N+\frac{\alpha}{2}]}) \times \Delta(\beta)
\end{equation*}
Here $P_{[a,b]}$ denotes the orthogonal projection corresponding to the characteristic function $\chi_{[a,b]}$ of the interval $[a,b]$
and $K$ is the Carleman operator
\begin{equation*}
  (K\varphi)(x) = \int_0^\infty \frac{1}{x+y} \varphi(y)\, dy .
\end{equation*}
The so-called perturbation determinant $\Delta(\beta)$, cf. \eqref{perturbation_determinant}, can be shown to satisfy 
\begin{equation*}
  \ln(\Delta(\beta)) = O(1)\ \text{as}\  N\to\infty.
\end{equation*}
Here is where Wouk's integral formula \eqref{wouk} is used, see \eqref{determinant_integral}.

In the second part, we transform the Carleman operator $K$ unitarily to a convolution operator $K_0$, Lemma \ref{slt02t}. Since the projection
$P_{[\frac{\alpha}{2},N+\frac{\alpha}{2}]}$ has to be transformed accordingly $N$ becomes $n_{\frac{\alpha}{2}}(N)$.
Finally, we apply a general version of the classical Kac--Akhiezer theorem, Proposition \ref{slt01t}, to $K_0$ thereby completing the proof

If we had used the Taylor series of the logarithm as in \cite{FedeleGebert2019} we would have to work with
\begin{equation*}
  \ln\Bigl( \det\bigl(\id-\frac{\beta}{\pi}H_{N,\alpha}\bigr) \Bigr) = -\sum_{n=1}^\infty \frac{1}{n} \frac{\beta^n}{\pi^n} \tr(H_{N,\alpha}^n) .
\end{equation*}
However, the infinite series restricts the result to those $\beta$ for which the series converges, to wit $|\beta|<1$.

The second main result concerns the limit case $\beta=1$, see \ref{lc07t}.

\begin{theorem}\label{intro02t}
Let $\alpha\geq\frac{1}{2}$. Then,
\begin{equation*}
  \ln\bigl(\det(\id - \frac{1}{\pi}H_{N,\alpha})\bigr)
     = 2n_{\frac{\alpha}{2}}(N)  \gamma(1) + o(\ln(N)) \ \text{as}\ N\to\infty,\
        n_{\frac{\alpha}{2}}(N)=\frac{1}{4}\ln\Bigl( \frac{N+\frac{\alpha}{2}}{\frac{\alpha}{2}} \Bigr) 
\end{equation*}
with $\gamma(1)=-\frac{4}{3}$.
\end{theorem}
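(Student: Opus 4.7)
The strategy is to reduce the $\beta = 1$ case to parameter values where Theorem~\ref{intro01t} applies, by means of the third-binomial (product) identity
\begin{equation*}
  (\id - \tfrac{1}{\pi} H_{N,\alpha})(\id + \tfrac{1}{\pi} H_{N,\alpha}) = \id - \tfrac{1}{\pi^2} H_{N,\alpha}^2,
\end{equation*}
which upon taking determinants and rearranging gives
\begin{equation*}
  \ln\det(\id - \tfrac{1}{\pi} H_{N,\alpha}) = \ln\det(\id - \tfrac{1}{\pi^2} H_{N,\alpha}^2) - \ln\det(\id + \tfrac{1}{\pi} H_{N,\alpha}).
\end{equation*}
The ``plus'' term on the right corresponds to the admissible parameter $\beta = -1 \in \C\setminus [1,\infty)$; Theorem~\ref{intro01t} applies directly and yields $\ln\det(\id + \tfrac{1}{\pi} H_{N,\alpha}) = 2 n_{\frac{\alpha}{2}}(N)\,\gamma(-1) + O(1)$. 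It remains to estimate the squared factor.

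For this, I use the product identity once more, now with a variable parameter $\beta \in (0,1)$:
\begin{equation*}
  \det(\id - \tfrac{\beta^2}{\pi^2} H_{N,\alpha}^2) = \det(\id - \tfrac{\beta}{\pi} H_{N,\alpha}) \det(\id + \tfrac{\beta}{\pi} H_{N,\alpha}).
\end{equation*}
Both right-hand factors lie in the regime covered by Theorem~\ref{intro01t} (since $\pm\beta \in \C\setminus [1,\infty)$), so summing the two expansions gives
\begin{equation*}
  \ln\det(\id - \tfrac{\beta^2}{\pi^2} H_{N,\alpha}^2) = 2 n_{\frac{\alpha}{2}}(N)\bigl[\gamma(\beta) + \gamma(-\beta)\bigr] + R_N(\beta),
\end{equation*}
with $R_N(\beta) = O(1)$ at each fixed $\beta < 1$. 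Now I pass to the limit $\beta \to 1^{-}$: the left-hand side is continuous in $\beta$ and tends to $\ln\det(\id - \tfrac{1}{\pi^2} H_{N,\alpha}^2)$, while $\gamma(\beta)+\gamma(-\beta)$ extends continuously to $\beta = 1$ (the singular behavior of $\gamma$ at $1$ occurs only in $\gamma(\beta)$ itself, not in $\gamma(-\beta)$). Choosing $\beta = \beta(N) \to 1^{-}$ slowly enough that both $R_N(\beta(N))$ and $n_{\frac{\alpha}{2}}(N)\bigl(\gamma(\pm\beta(N)) - \gamma(\pm 1)\bigr)$ are $o(\ln N)$, I obtain
\begin{equation*}
  \ln\det(\id - \tfrac{1}{\pi^2} H_{N,\alpha}^2) = 2 n_{\frac{\alpha}{2}}(N)\bigl[\gamma(1) + \gamma(-1)\bigr] + o(\ln N).
\end{equation*}
Substituting back into the rearranged identity, the two $\gamma(-1)$ contributions cancel and the claimed asymptotic with coefficient $\gamma(1)$ drops out (with the $O(1)$ error from the ``plus'' factor absorbed into $o(\ln N)$).

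The main obstacle is precisely this limiting step. As $N \to \infty$ the spectrum of $\tfrac{1}{\pi} H_{N,\alpha}$ accumulates at $1$, so $\id - \tfrac{\beta}{\pi} H_{N,\alpha}$ becomes arbitrarily ill-conditioned as $\beta \to 1^{-}$, and the $O(1)$ remainder supplied by Theorem~\ref{intro01t} is not a priori uniform in $\beta$. To make the squeeze $\beta(N) \to 1^{-}$ quantitative one must trace the $\beta$-dependence through the proof of Theorem~\ref{intro01t}---in particular the growth of the perturbation determinant $\Delta(\beta)$ (controlled via Wouk's formula) and of the Kac--Akhiezer asymptotics for the transformed Carleman operator---to obtain an explicit rate of growth of $R_N(\beta)$ as $\beta \to 1$. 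This loss of uniformity is exactly what forces the final correction term down from $O(1)$ to the weaker $o(\ln N)$.
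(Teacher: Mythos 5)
Your reduction via the third binomial formula is in the spirit of the paper (Lemma \ref{det01t} is exactly the iterated version of this identity), and your upper-bound direction is fine: monotonicity $\det(\id-\frac{1}{\pi}H_{N,\alpha})\le\det(\id-\frac{\beta}{\pi}H_{N,\alpha})$ for $\beta<1$, together with Theorem \ref{intro01t} and $\gamma(\beta)\to\gamma(1)$, already gives $\limsup\le\gamma(1)$ (this is Proposition \ref{lc00t}). The genuine gap is in the lower bound, which in your scheme is the step
\begin{equation*}
  \ln\det(\id-\tfrac{1}{\pi^2}H_{N,\alpha}^2) \;\ge\; \ln\det(\id-\tfrac{\beta(N)^2}{\pi^2}H_{N,\alpha}^2) - o(\ln N).
\end{equation*}
You justify this by continuity of the left-hand side in $\beta$, but continuity holds only at fixed $N$; along a sequence $\beta(N)\to1^-$ the discrepancy is
\begin{equation*}
  \sum_j \ln\frac{1-\beta(N)^2\mu_j^2}{1-\mu_j^2},\qquad \mu_j=\lambda_j(H_{N,\alpha})/\pi ,
\end{equation*}
where every term is nonnegative and becomes large for eigenvalues with $1-\mu_j\ll 1-\beta(N)$. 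Since the $\mu_j$ cluster at $1$ as $N\to\infty$ (on the order of $\ln N$ of them accumulate there), bounding this sum by $o(\ln N)$ requires quantitative control of the eigenvalue distribution of $H_{N,\alpha}$ near $\pi$ --- which is essentially the content of the theorem you are trying to prove. Note also that the identity $\det(\id-\frac{1}{\pi^2}H^2)=\det(\id-\frac{1}{\pi}H)\det(\id+\frac{1}{\pi}H)$ makes a lower bound for $\ln\det(\id-\frac{1}{\pi^2}H^2)$ exactly equivalent to a lower bound for $\ln\det(\id-\frac{1}{\pi}H)$, so a single application of the binomial formula merely relocates the difficulty. You correctly flag the non-uniformity of the $O(1)$ remainder $R_N(\beta)$ as $\beta\to1^-$, but even an explicit rate for $R_N(\beta)$ would not close the argument without the eigenvalue-clustering estimate above.

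The paper escapes this by iterating the binomial identity to all orders: Lemma \ref{det01t} gives $1/\det(\id-A)=\prod_{m\ge0}\det(\id+A^{2^m})$, each factor splits over parameters $-e^{\pm i\pi\eta_k}$ that stay uniformly away from $\interval[open right]{1}{\infty}$ (Corollary \ref{slt06t}), and the tail of the product is controlled by the trace bounds on $\tr(H_{N,\alpha}^{2^m})$ from Lemmas \ref{lc01t}--\ref{lc05t}, so no limit $\beta\to1$ inside the asymptotics is ever required. To salvage your route you would have to replace the continuity step by such a quantitative bound; as it stands the proof is incomplete.
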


The key idea of the proof is to write, Lemma \ref{det01t},
\begin{equation*}
  \frac{1}{\det(\id-H_{N,\alpha})} = \prod_{m=0}^\infty \det\bigl( \id + (\frac{1}{\pi} H_{N,\alpha} )^{2^m} \bigr)
\end{equation*}
and use, at least formally, the asymptotics of each factor from Corollary \ref{slt06t}. The corollary itself follows easily from Theorem \ref{slt05t}
with the aid of the roots of unity. This idea can be made rigorous yielding, however, only a lower bound for the desired asymptotics, Proposition \ref{lc07t}.
Fortunately, since $H_{N,\alpha}$ is non-negative operator an upper bound,Proposition \ref{lc00t}, follows immediately from
\begin{equation*}
  \det(\id - \frac{1}{\pi}H_{N,\alpha}) \leq \det(\id - \frac{\beta}{\pi}H_{N,\alpha}),\ \beta<1,
\end{equation*}
and Theorem \ref{slt05t}.

The limit case $\beta=1$ was (for a special $\alpha$) also treated in \cite[Thm. 1.4]{GebertPoplavskyi2019}.
The method used therein relied on the explicit diagonalization of the infinite Hilbert matrix.

\section{Determinants\label{det}}
For a trace class operator $A:\hilbert\to\hilbert$ one can define the determinant $\det(\id-A)$. One way to do this is via the trace
\begin{equation}\label{determinant}
  \det(\id-A) = \tr[\ln(\id-A)] .
\end{equation}
In order to define the operator logarithm we recall the formula for the principal branch of the logarithm
\begin{equation}\label{logarithm}
  \ln(1-z) = -z\int_0^1 \frac{1}{1-rz},\ z\in \C\setminus\interval[open right]{1}{\infty} .
\end{equation}
This generalizes to Wouk's integral formula \cite{Wouk1965}
\begin{equation}\label{wouk}
  \ln(\id-A) = -\int_0^1 A(\id-rA)^{-1}\, dr
\end{equation}
which is valid whenever $\sigma(A)\cap\interval[open right]{1}{\infty}=\emptyset$. 
For alternative definitions and further properties
see e.g. \cite[XIII]{ReedSimon1978}. Standard estimates are (see \cite[Lemma 4, p. 323]{ReedSimon1978})
\begin{gather}
  |\det(\id-A)| \leq e^{\|A\|_1},  \label{det_estimate1}\\
  |\det(\id-A)-\det(\id-B)| \leq \|A-B\|_1 \exp\bigl[ \|A\|_1 + \|B\|_1 + 1\bigr] \label{det_estimate2}
\end{gather}
Another estimate, which is of special importance herein (see Section \ref{lc}), is based upon the infinite product
\begin{equation}\label{product}
  \frac{1}{1-x} = \prod_{m=0}^\infty (1 + x^{2^m}),\ x\in\R,\ |x|<1,
\end{equation}
more precisely on the version for determinant.

\begin{lemma}\label{det01t}
Let $A:\hilbert\to\hilbert$ be a trace class operator with $\|A\|<1$. Then,
\begin{equation}\label{det01t01}
  \frac{1}{\det(\id-A)} = \prod_{m=0}^\infty \det(\id+A^{2^m})
\end{equation}
where the infinite product converges absolutely. Furthermore,
\begin{equation}\label{det01t02}
  \frac{1}{\det(\id-A)} \leq \prod_{m=0}^M \det(\id + A^{2^m}) \exp\biggl[ \sum_{m=M+1}^\infty \|A^{2^m}\|_1\biggr],\ M\in\N_0 .
\end{equation}
\end{lemma}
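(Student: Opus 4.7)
The guiding idea is to turn the scalar factorisation $1-x^2=(1-x)(1+x)$ into an operator telescoping. Since $A$ commutes with all of its powers, a straightforward induction on $M$ gives
$$(\id-A)\prod_{m=0}^M(\id+A^{2^m})=\id-A^{2^{M+1}},$$
and taking the Fredholm determinant of both sides (multiplicative on finite products of operators of the form $\id$ minus trace class) produces
$$\det(\id-A)\prod_{m=0}^M\det(\id+A^{2^m})=\det(\id-A^{2^{M+1}}).$$

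To pass to the limit $M\to\infty$ I would invoke $\|A\|<1$ twice: first, so that $1\notin\sigma(A)$ and $\det(\id-A)\neq 0$; second, through the submultiplicative trace-norm bound $\|A^n\|_1\leq\|A\|_1\|A\|^{n-1}$, which makes $\|A^{2^{M+1}}\|_1\to 0$ super-exponentially. Applied with $B=0$, estimate \eqref{det_estimate2} then forces $\det(\id-A^{2^{M+1}})\to 1$, and dividing the displayed identity by $\det(\id-A)$ yields \eqref{det01t01}. Absolute convergence of the product follows from the same decay combined again with \eqref{det_estimate2}:
$$|\det(\id+A^{2^m})-1|\leq\|A^{2^m}\|_1\exp[\|A^{2^m}\|_1+1]\leq\|A\|_1\|A\|^{2^m-1}e^{\|A\|_1+1},$$
the right-hand side being summable in $m$.

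For the inequality \eqref{det01t02} I would split the infinite product at $M$ and bound the tail via \eqref{det_estimate1}: writing $\id+A^{2^m}=\id-(-A^{2^m})$ gives $|\det(\id+A^{2^m})|\leq e^{\|A^{2^m}\|_1}$, hence
$$\prod_{m=M+1}^\infty|\det(\id+A^{2^m})|\leq\exp\biggl[\sum_{m=M+1}^\infty\|A^{2^m}\|_1\biggr].$$
Taking absolute values in \eqref{det01t01} and inserting this tail bound yields \eqref{det01t02}; the statement as displayed (without absolute bars) is calibrated to the positive self-adjoint case $A=\frac{1}{\pi}H_{N,\alpha}$ relevant to Section \ref{lc}, where every determinant in sight is positive.

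I do not foresee a substantive obstacle; the argument is essentially algebraic (the telescoping identity) plus two routine applications of the standard estimates \eqref{det_estimate1}--\eqref{det_estimate2}. The one point worth attention is the super-geometric decay of $\|A^{2^m}\|_1$ triggered by $\|A\|<1$, which both legitimises the product's absolute convergence and renders the tail in \eqref{det01t02} quantitatively small.
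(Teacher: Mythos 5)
Your proposal is correct and follows essentially the same route as the paper: the telescoping identity $(\id-A)\prod_{m=0}^{M}(\id+A^{2^m})=\id-A^{2^{M+1}}$ together with multiplicativity of the determinant, H\"older's bound $\|A^{2^m}\|_1\leq\|A\|_1\|A\|^{2^m-1}$, estimate \eqref{det_estimate2} to force $\det(\id-A^{2^{M+1}})\to 1$, and \eqref{det_estimate1} on the tail for \eqref{det01t02}. Your side remark that \eqref{det01t02} really requires absolute values (or non-negativity of $A$, as in the application of Section \ref{lc}) is also apt.
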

\begin{proof}
We start off from the analogon of \eqref{product}
\begin{equation*}
  \frac{1}{\det(\id-A)} = \frac{1}{\det(\id-A^{2^M})} \prod_{m=0}^{M-1}\det(\id+A^{2^m}),\ M\in\N .
\end{equation*}
By \eqref{det_estimate1}, \eqref{det_estimate2}, and H\"older's inequality for the trace norm
\begin{equation*}
  \Bigl|\prod_{m=0}^{M-1}\det(\id+A^{2^m})\Bigr| \leq \prod_{m=0}^{M-1} (1+\|A\|^{2^m-1} \|A\|_1)
\end{equation*}
and
\begin{equation*}
  |\det(\id-A^{2^M}) -1 | \leq \|A\|^{2^M-1} \|A\|_1 \exp \bigl[ \|A^{2^M-1}\| \|A\|_1 + 1\bigr]   .
\end{equation*}
Using the assumption $\|A\|<1$ we deduce
\begin{equation*}
  \frac{1}{\det(\id-A)} 
    = \lim_{M\to\infty} \frac{1}{\det(\id-A^{2^M})} \prod_{m=0}^{M-1}\det(\id+A^{2^m})
    =  \prod_{m=0}^\infty\det(\id+A^{2^m}) .
\end{equation*}
This is \eqref{det01t01}. Finally, write
\begin{equation*}
  \frac{1}{\det(\id-A)}= \prod_{m=0}^M \det(\id+A^{2^m}) \times\prod_{m=M+1}^\infty \det(\id+A^{2^m})
\end{equation*}
and apply \eqref{det_estimate1} to the second factor. This shows \eqref{det01t02}.
\end{proof}

The determinants of two operators $A$ and $B$ are related via the perturbation determinant $\Delta(A,B)$
\begin{equation}\label{perturbation_determinant}
  \det(\id-A) = \det(\id-B) \times \Delta(A,B),\ \Delta(A,B) \coloneqq \det(\id-(\id-A)^{-1}(B-A)) .
\end{equation}
Wouk's formula \eqref{wouk} yields
\begin{equation}\label{determinant_integral}
  \ln(\Delta(A,B)) = - \tr\bigl[ (B-A)\int_0^1 (\id - rA - (1-r)B)^{-1}\, dr \bigr] .
\end{equation}

\section{Hilbert matrix and Carleman operator\label{hm}}
The Hilbert matrix is
\begin{equation}\label{hilbert_matrix}
  H_{N,\alpha} = \Bigl( \frac{1}{j+k+\alpha} \Bigr)_{j,k=0,\ldots,N-1},\ \alpha>0 .
\end{equation}
It is well-known that as an operator $H_{N,\alpha}: \C^N\to \C^N$ it satisfies
\begin{equation}\label{hilbert_matrix_spectrum}
  0 \leq H_{N,\alpha} \ \text{for}\ \alpha>0\ \text{and}\ H_{N,\alpha} < \pi\id\ \text{for}\ \alpha\geq \frac{1}{2}
\end{equation}
in the sense of quadratic forms. We will not treat the case $0<\alpha<\frac{1}{2}$ and, thus, do not need the corresponding norm.
With the aid of the Laplace transform
\begin{equation*}
  \frac{1}{j+\alpha} = \int_0^\infty e^{-jx} e^{-\alpha x} \, dx ,\ \alpha > 0
\end{equation*}
we obtain a Hankel integral operator with, essentially, the same spectrum as $H_{N,\alpha}$.

\begin{lemma}\label{hm01t}
Let $\alpha>0$ and $N\in\N$. 
Define the Hankel integral operator $G_{N,\alpha}: L^2(\R^+)\to L^2(\R^+)$,
\begin{equation*}
  (G_{N,\alpha}\varphi)(x) = \int_0^\infty G_{N,\alpha}(x+y)\varphi(y)\, dy,\ x\in\R^+,
\end{equation*}
with kernel function
\begin{equation*}
  G_{N,\alpha}(x) \coloneqq e^{-\frac{\alpha}{2}x} \sum_{j=0}^N e^{-jx} 
    = e^{-\frac{\alpha}{2}x} \frac{e^{\frac{x}{2}}}{2\sinh(\frac{x}{2})} (1-e^{-Nx}) .
\end{equation*}
Then, $\sigma(H_{N,\alpha})\setminus\{0\} = \sigma(G_{N,\alpha})\setminus\{0\}$. In particular, $\|G_{N,\alpha}\|=\|H_{N,\alpha}\|$.
\end{lemma}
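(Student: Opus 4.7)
The plan is to realise the Hilbert matrix and the Hankel integral operator as the two ends of a factorisation $V^{*}V$ versus $VV^{*}$, and then invoke the classical fact that such a pair shares non-zero spectrum. Concretely, I would introduce the finite-rank operator $V\colon\C^{N}\to L^{2}(\R^{+})$ defined by
\begin{equation*}
  (Vc)(x) = e^{-\alpha x/2}\sum_{j=0}^{N-1} c_{j} e^{-jx} ,
\end{equation*}
whose adjoint is the evaluation $(V^{*}\varphi)_{j} = \int_{0}^{\infty} e^{-jy}e^{-\alpha y/2}\varphi(y)\,dy$.

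First, I would verify by Fubini together with the Laplace-transform identity reproduced just above the lemma that the matrix of $V^{*}V$ has entries $\int_{0}^{\infty}e^{-(j+k)y}e^{-\alpha y}\,dy = \frac{1}{j+k+\alpha}$, so $V^{*}V = H_{N,\alpha}$; and that $VV^{*}$ is the Hankel integral operator with kernel $e^{-\alpha(x+y)/2}\sum_{j=0}^{N-1}e^{-j(x+y)}$. Summing the geometric series and using $1-e^{-x}=2e^{-x/2}\sinh(x/2)$ converts this into exactly the kernel $G_{N,\alpha}(x+y)$ appearing in the statement (the upper summation index should read $N-1$ to match the printed closed form, which is what the argument actually produces).

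Second, I would record the standard argument that $V^{*}V$ and $VV^{*}$ have identical non-zero spectrum with equal multiplicities: if $V^{*}Vx=\lambda x$ with $\lambda\neq 0$, then $\|Vx\|^{2}=\lambda\|x\|^{2}\neq 0$ and $VV^{*}(Vx)=\lambda Vx$, so $x\mapsto Vx$ embeds the $\lambda$-eigenspace of $V^{*}V$ into that of $VV^{*}$, and the symmetric map $\varphi\mapsto V^{*}\varphi$ supplies the reverse embedding. This yields $\sigma(H_{N,\alpha})\setminus\{0\}=\sigma(G_{N,\alpha})\setminus\{0\}$. Both operators are non-negative and self-adjoint (being of Gramian form $V^{*}V$ and $VV^{*}$), so their operator norms equal their spectral radii, and the equality $\|G_{N,\alpha}\|=\|H_{N,\alpha}\|$ follows at once.

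I anticipate no real obstacle here: the entire proof is a single factorisation plus the textbook $V^{*}V$ versus $VV^{*}$ dichotomy, with a routine geometric-sum computation as the only piece of explicit calculation. The one piece of bookkeeping worth double-checking is the compatibility between the summation range and the printed closed form for $G_{N,\alpha}(x)$, but this is purely cosmetic and does not affect the conclusion.
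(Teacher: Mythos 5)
Your proof is correct and is essentially the paper's own argument: the paper factorizes $H_{N,\alpha}=AB$ and $G_{N,\alpha}=BA$ with $A=V^{*}$ and $B=V$ in your notation, then invokes $\sigma(AB)\setminus\{0\}=\sigma(BA)\setminus\{0\}$. Your observation about the summation index is also right --- the closed form $\frac{e^{x/2}}{2\sinh(x/2)}(1-e^{-Nx})e^{-\alpha x/2}$ equals $e^{-\alpha x/2}\sum_{j=0}^{N-1}e^{-jx}$, so the upper limit $N$ in the displayed kernel is a typo and the factorization with $j=0,\ldots,N-1$ is the correct one.
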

\begin{proof}
With the functions
\begin{equation*}
  e_j \in L^2(\R^+),\ e_j(x) = e^{-jx-\frac{\alpha}{2}x},\ j\in\N_0,
\end{equation*}
we define the operators $A:L^2(\R^+)\to \C^N$ and $B:\C^N\to L^2(\R^+)$,
\begin{equation*}
  (A\varphi)_j = \int_0^\infty e_j(x)\varphi(x)\, dx,\ j=0,\ldots, N-1,
  (Bc)(x) = \sum_{j=0}^{N-1} c_j e_j(x),\ x\in\R^+ .
\end{equation*}
It is easily checked that $H_{N,\alpha} = AB:\C^N\to\C^N$. On the other hand, $BA:L^2(\R^+)\to L^2(\R^+)$,
\begin{equation*}
  (BA\varphi)(x) = \sum_{j=0}^{N-1} e_j(x) \int_0^\infty e_j(y)\varphi(y)\, dy 
    = \int_0^\infty \varphi(y)\sum_{j=0}^{N-1} e_j(x) e_j(y)\, dy
    = (G_{N,\alpha}\varphi)(x)
\end{equation*}
since $e_j(x)e_j(y) = e_j(x+y)$. Now, $\sigma(AB)\setminus\{0\} = \sigma(BA)\setminus\{0\}$ which completes the proof.
\end{proof}

We extract the asymptotically relevant part of the operator $G_{N,\alpha}$.
This gives rise to orthogonal projections generated by characteristic functions. Throughout, we will use the notation
\begin{equation}\label{characteristic_function}
  P_{[a,b]}: L^2\to L^2,\ (P_{[a,b]}\varphi)(x) = \chi_{[a,b]}(x)\varphi(x)
\end{equation}
where $\chi_{[a,b]}$ is the characteristic function of the interval $[a,b]$.

\begin{lemma}\label{hm02t}
Let $E_\alpha:L^2(\R^+)\to L^2(\R^+)$ be the integral operator with kernel function
\begin{equation}\label{hm02t01}
  E_\alpha(x,y) = e^{-x(y+\frac{\alpha}{2})} .
\end{equation}
Then $E_\alpha$, $\alpha\geq 0$, is bounded with $\|E_\alpha\|\leq \sqrt{\pi}$. Moreover, $E_\alpha P_{[0,N]}E_\alpha$, $\alpha>0$,
is a trace class operator with
\begin{equation}\label{hm02t02}
  \| E_\alpha P_{[0,N]} E_\alpha^*\|_1 = \frac{1}{2} \ln\bigl( \frac{2N+\alpha}{\alpha} \bigr) .
\end{equation}
The difference
\begin{equation*}
  D_N \coloneqq G_{N,\alpha} - E_\alpha P_{[0,N]} E_\alpha^*
\end{equation*}
is in the trace class with $\|D_N\| \leq C_\alpha< \infty $ for all $N\in\N$.
\end{lemma}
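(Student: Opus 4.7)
The three claims will be handled in sequence.
For the norm bound on $E_\alpha$, I would factor $E_\alpha = M_{\alpha/2} L$, where $L$ is the self-adjoint integral operator with symmetric kernel $e^{-xy}$ and $M_{\alpha/2}$ is multiplication by $e^{-\alpha x/2}$. A Fubini computation gives $L^2 = K$ for the Carleman operator $K$ from the introduction, so $\|L\|^2 = \|L^2\| = \|K\| = \pi$; together with $\|M_{\alpha/2}\| \leq 1$ this yields $\|E_\alpha\| \leq \sqrt{\pi}$. For \eqref{hm02t02}, observe that $E_\alpha P_{[0,N]} E_\alpha^*$ is non-negative self-adjoint, so $\|\cdot\|_1 = \tr$; its kernel is $\int_0^N e^{-(y+\alpha/2)(x+z)}\, dy$, and applying Fubini to the trace gives $\int_0^N [2(y+\alpha/2)]^{-1}\, dy = \tfrac{1}{2}\ln\tfrac{2N+\alpha}{\alpha}$.

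For the bound on $D_N$, the key observation is that both $G_{N,\alpha}$ and $E_\alpha P_{[0,N]}E_\alpha^*$ are Hankel with respect to $u = x+z$, with kernels equal respectively to the Riemann sum $\sum_{j=0}^{N-1} e^{-(j+\alpha/2)u}$ (by Lemma \ref{hm01t}) and its continuous counterpart $\int_0^N e^{-(t+\alpha/2)u}\, dt$. I would apply the first-order Euler--Maclaurin identity with periodic Bernoulli remainder $B_1(\{t\}) = \{t\} - \tfrac{1}{2}$ to $g(t) = e^{-(t+\alpha/2)u}$, yielding the operator decomposition
\begin{equation*}
  D_N = \tfrac{1}{2}\bigl(P_{\alpha/2} - P_{N+\alpha/2}\bigr) - \int_0^N B_1(\{t\})\, R_{t+\alpha/2}\, dt,
\end{equation*}
where $P_\mu$ is the rank-one Hankel with kernel $e^{-\mu(x+z)}$ and $R_\mu$ is the rank-$\leq 2$ symmetric Hankel with kernel $(x+z)e^{-\mu(x+z)} = xe^{-\mu x}\cdot e^{-\mu z} + e^{-\mu x}\cdot ze^{-\mu z}$. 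Elementary integrals give $\|P_\mu\|_1 = 1/(2\mu)$ and, via the triangle inequality on the two rank-one summands of $R_\mu$, $\|R_\mu\|_1 \leq 1/(\sqrt{2}\,\mu^2)$. Using $|B_1(\{t\})|\leq\tfrac{1}{2}$,
\begin{equation*}
  \Bigl\|\int_0^N B_1(\{t\}) R_{t+\alpha/2}\, dt\Bigr\|_1 \leq \tfrac{1}{2}\int_0^\infty \frac{dt}{\sqrt{2}\,(t+\alpha/2)^2} = \frac{1}{\sqrt{2}\,\alpha}
\end{equation*}
uniformly in $N$, and the boundary term is bounded by $1/(2\alpha) + 1/(2(2N+\alpha))$. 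Together these give $\|D_N\|_1 \leq C_\alpha$, hence the trace-class claim and the stated operator-norm bound.

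The main substantive step is organizing the Riemann-sum/integral discrepancy through Euler--Maclaurin so that the remainder is controlled in trace norm rather than merely in operator norm; the $\mu^{-2}$ decay of $\|R_\mu\|_1$ then follows from elementary Gamma-type integrals, and the Bochner-integral representation of the remainder is justified by the continuity of $t \mapsto R_{t+\alpha/2}$ in trace norm. No higher-order Bernoulli terms are needed.
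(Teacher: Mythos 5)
Your proposal is correct, and for the two nontrivial parts it takes a genuinely different route from the paper. For the norm bound the paper runs a Schur test with test functions $p(x)=q(x)=x^{-1/2}$, whereas you factor $E_\alpha = M_{\alpha/2}L$ and use $L^2=K$ together with $\|K\|=\pi$; both work, though your version tacitly uses that $L$ is a priori well defined and self-adjoint on a dense domain before computing $\|L\varphi\|^2=\langle K\varphi,\varphi\rangle$ — a standard but worth-stating point about the Laplace transform on $L^2(\R^+)$. The trace computation for $E_\alpha P_{[0,N]}E_\alpha^*$ is identical in both arguments. The real divergence is in the treatment of $D_N$: the paper writes $1-e^{-Nx}=x\int_0^N e^{-xt}\,dt$, isolates the kernel $[\tfrac{x}{1-e^{-x}}-1]\int_{\alpha/2}^{N+\alpha/2}e^{-xt}\,dt$, and invokes Howland's trace-class criterion for Hankel operators, which requires estimating $D_N'(x)$ and an iterated integral; you instead view $D_N$ as a Riemann-sum-minus-integral discrepancy and apply first-order Euler--Maclaurin, reducing everything to explicit rank-one and rank-two Hankel pieces whose trace norms are elementary Gamma integrals ($\|P_\mu\|_1=\tfrac{1}{2\mu}$, $\|R_\mu\|_1\leq\tfrac{1}{\sqrt{2}\mu^2}$), integrated against the bounded periodic Bernoulli factor. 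Your computations check out (including the uniform-in-$N$ bound $\|D_N\|_1\leq \tfrac{1}{2\alpha}+\tfrac{1}{2(2N+\alpha)}+\tfrac{1}{\sqrt{2}\alpha}$, which is stronger than the stated operator-norm bound), and the Bochner-integral justification via trace-norm continuity of $t\mapsto R_{t+\alpha/2}$ is the right way to pass from the pointwise kernel identity to the operator identity. What your approach buys is explicit constants and self-containedness — no external trace-class criterion is needed; what the paper's approach buys is brevity, since Howland's theorem absorbs the bookkeeping once one has a pointwise bound on the derivative of the kernel function. One cosmetic remark: you use $\sum_{j=0}^{N-1}$ for the kernel of $G_{N,\alpha}$, which matches the closed form $\frac{1-e^{-Nx}}{1-e^{-x}}$ and the proof of Lemma \ref{hm01t} (the upper limit $N$ in the displayed sum there is a typo); had you used the other convention, the discrepancy is a single rank-one Hankel of trace norm $\tfrac{1}{2N+\alpha}$ and nothing changes.
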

\begin{proof}
We use a generalized version of the Schur test (see \cite[Thm. 5.2]{HalmosSunder1978}) with test functions $p(x)=q(x)=\frac{1}{\sqrt{x}}$.
Then, by standard computations
\begin{equation*}
  \int_0^\infty e^{-x(y+\frac{\alpha}{2})} \frac{1}{\sqrt{y}}\, dy 
     = \frac{\sqrt{\pi}}{\sqrt{x}} e^{-\frac{\alpha x}{2}} \leq \sqrt{\pi}\frac{1}{\sqrt{x}}.
\end{equation*}
Likewise,
\begin{equation*}
  \int_0^\infty \frac{1}{\sqrt{x}} e^{-x(y+\frac{\alpha}{2})} \, dx
     \leq \int_0^\infty \frac{1}{\sqrt{x}} e^{-xy}\, dx = \sqrt{\pi}\frac{1}{\sqrt{y}} .
\end{equation*}
This implies $E_\alpha$ is bounded with the given estimate for the norm.

In order to show the trace class property we start from the simple formula
\begin{equation*}
  1 - e^{-Nx} = x \int_0^N e^{-xt}\, dt
\end{equation*}
and rewrite the kernel function $G_{N,\alpha}$
\begin{equation*}
  G_{N,\alpha}(x) = e^{-\frac{\alpha}{2}x} \frac{e^{\frac{x}{2}x}}{2\sinh(\frac{x}{2})} \int_0^N e^{xt}\, dt
         = \int_0^N e^{-x(t+\frac{\alpha}{2})}\, dt  
             + e^{-\frac{\alpha}{2}x} \bigl[ \frac{e^{\frac{x}{2}}x}{2\sinh(\frac{x}{2})} - 1\bigr] \int_0^N e^{-xt}\, dt .
\end{equation*}
The first term gives rise to the Hankel operator $\tilde G_{N,\alpha}$ with kernel function
\begin{equation*}
  \tilde G_{N,\alpha}(x) = \int_0^N e^{-x(t+\frac{\alpha}{2})}\, dt .
\end{equation*}
We write this as follows (cf. \eqref{hm02t01})
\begin{equation*}
  \tilde G_{N,\alpha}(x+y) = \int_0^N e^{-(x+y)(t+\frac{\alpha}{2})}\, dt
                  = \int_0^N e^{-x(t+\frac{\alpha}{2})} e^{-(t+\frac{\alpha}{2})y}\, dt
                  = \int_0^N E_\alpha(x,t) E_\alpha(y,t)\, dt
\end{equation*}
which implies $\tilde G_{N,\alpha} = E_\alpha P_{[0,N]} E_\alpha^*$. Since, obviously, $E_\alpha P_{[0,N]}E_\alpha^*\geq 0$ we obtain
\begin{equation*}
  \| E_\alpha P_{[0,N]} E_\alpha^* \|_1
     = \tr(E_\alpha P_{[0,N]} E_\alpha^* )
     = \int_0^\infty \int_0^N e^{-2x(y+\frac{\alpha}{2})}\, dy\, dx
     = \int_0^N \frac{1}{\alpha+2y}\, dy
     = \frac{1}{2}\ln\bigl( \frac{\alpha + 2N}{\alpha} \bigr).
\end{equation*}
The remaining difference is the Hankel operator $D_N$ with kernel function
\begin{equation*}
  D_N(x) \coloneqq e^{-\frac{\alpha}{2}x} \bigl[ \frac{e^{\frac{x}{2}}x}{2\sinh(\frac{x}{2})} - 1\bigr] \int_0^N e^{-xt}\, dt 
          =  \bigl[ \frac{e^{\frac{x}{2}}x}{2\sinh(\frac{x}{2})} - 1\bigr] \int_{\frac{\alpha}{2}}^{N+\frac{\alpha}{2}} e^{-xt}\, dt .
\end{equation*}
In order to show that $D_N$ is in the trace class we use Howland's criterion \cite[Thm. 2.1]{Howland1971}, which also
gives a bound on the trace norm. To this end, we need the derivative
\begin{equation*}
  D_N'(x) = \biggl\{ \frac{1-e^{-x}-xe^{-x}}{(1-e^{-x})^2} - \bigl[ \frac{x}{1-e^{-x}} - 1\bigr]x\biggr\} 
               \int_{\frac{\alpha}{2}}^{N+\frac{\alpha}{2}} e^{-xt}\, dt .
\end{equation*}
Via the elementary estimates 
\begin{equation*}
  0 \leq \frac{x}{1-e^{-x}} - 1 \leq x,\ 0 \leq \frac{1-e^{-x}-xe^{-x}}{(1-e^{-x})^2}\leq 1\ \text{for}\ x\geq 0 ,
\end{equation*}
we obtain
\begin{equation*}
  |D_N'(x)| \leq (1+x^2) \int_{\frac{\alpha}{2}}^{N+\frac{\alpha}{2}} e^{-xt}\, dt
            \leq (x+\frac{1}{x}) e^{-\frac{\alpha}{2}x} .
\end{equation*}
Then, Howland's criterion shows that $D_N$ is in the trace class with
\begin{equation*}
  \|D_N\|_1 \leq \int_0^\infty x^{\frac{1}{2}} \biggl[ \int_x^\infty |D_N'(y)|^2\, dy \biggr]^{\frac{1}{2}}\, dx
            \leq \int_0^\infty x^{\frac{1}{2}} \biggl[ \int_x^\infty (y^2 + 2 + \frac{1}{y^2})e^{-\alpha y} \, dy\biggr]^{\frac{1}{2}}\, dx 
            \eqqcolon C_\alpha .
\end{equation*}
Elementary estimates show that $C_\alpha<\infty$ for $\alpha>0$.
Note that $C_\alpha$ does not depend on $N$.
\end{proof}

We relate $E_\alpha P_{[0,N]}E_\alpha^*$ to the Carleman operator $K: L^2(\R^+)\to L^2(\R^+)$,
\begin{equation}\label{carleman_operator}
  (K\varphi)(x) = \int_0^\infty \frac{1}{x+y} \varphi(y)\, dy,\ x\in\R^+ .
\end{equation}
It is well-known that $K$ is self-adjoint and satisfies (see \cite[Theorem~8.14]{Peller2003} for the operator norm)
\begin{equation}\label{carleman_operator_norm}
  0\leq K  \leq \pi .
\end{equation}
We define the translation operator
\begin{equation}\label{translation_operator}
  T_\alpha:L^2(\R^+)\to L^2(\R^+),\ 
  (T_\alpha \varphi)(x) = 
\begin{cases}
  \varphi(x-\frac{\alpha}{2}) & \text{for}\ x\geq \frac{\alpha}{2},\\
    0                         & \text{for}\ 0\leq x< \frac{\alpha}{2} .
\end{cases}
\end{equation}
Its pseudo inverse is given by
\begin{equation*}
  T_\alpha^+:L^2(\R^+)\to L^2(\R^+),\ 
  (T_\alpha^+\varphi)(x) = \varphi(x+\frac{\alpha}{2}),\ x\geq 0 .
\end{equation*}
That is to say,
\begin{equation}\label{translation_operator_inverse}
  P_{\interval[open right]{\frac{\alpha}{2}}{\infty}} T_\alpha T_\alpha^+ = P_{\interval[open right]{\frac{\alpha}{2}}{\infty}} .
\end{equation}
We move the $\alpha$ from the integral operator to the projection.

\begin{lemma}\label{hm03t}
Let $\alpha>0$ and $N\in\N$.
The operator $E_\alpha P_{[0,N]}E_\alpha^*$ and the Carleman operator $K$, cf. \eqref{hm02t01} and \eqref{carleman_operator},
satisfy
\begin{equation}\label{hm03t01}
  \sigma(E_\alpha P_{[0,N]} E_\alpha^* )\setminus \{0\} 
    = \sigma( P_{[\frac{\alpha}{2},N+\frac{\alpha}{2}]} K P_{[\frac{\alpha}{2},N+\frac{\alpha}{2}]}) \setminus\{ 0\} .
\end{equation}
\end{lemma}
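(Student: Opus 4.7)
The plan is to apply the standard spectral identity $\sigma(AB)\setminus\{0\}=\sigma(BA)\setminus\{0\}$ twice, with an intermediate computation that identifies $E_\alpha^*E_\alpha$ with a shifted Carleman operator. First, I would write $E_\alpha P_{[0,N]}E_\alpha^* = AA^*$ with $A = E_\alpha P_{[0,N]}$, so that
\begin{equation*}
  \sigma(AA^*)\setminus\{0\} = \sigma(A^*A)\setminus\{0\} = \sigma\bigl(P_{[0,N]}\,E_\alpha^*E_\alpha\,P_{[0,N]}\bigr)\setminus\{0\}.
\end{equation*}

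Next, a direct Fubini computation using \eqref{hm02t01} gives
\begin{equation*}
  (E_\alpha^*E_\alpha\varphi)(x) = \int_0^\infty\!\!\int_0^\infty e^{-y(x+\frac{\alpha}{2})}e^{-y(z+\frac{\alpha}{2})}\varphi(z)\,dz\,dy
   = \int_0^\infty \frac{1}{x+z+\alpha}\varphi(z)\,dz,
\end{equation*}
and the substitutions $x\mapsto x+\tfrac{\alpha}{2}$, $z\mapsto z+\tfrac{\alpha}{2}$ identify this with $T_\alpha^+ K T_\alpha$ (with $K$ the Carleman operator of \eqref{carleman_operator}). I would then check the intertwining relation $T_\alpha P_{[0,N]} = P_{[\frac{\alpha}{2},N+\frac{\alpha}{2}]}T_\alpha$ directly from \eqref{characteristic_function} and \eqref{translation_operator}, take adjoints to obtain $P_{[0,N]}T_\alpha^+ = T_\alpha^+ P_{[\frac{\alpha}{2},N+\frac{\alpha}{2}]}$, and conclude
\begin{equation*}
  P_{[0,N]}\,T_\alpha^+ K T_\alpha\,P_{[0,N]}
   = T_\alpha^+\,P_{[\frac{\alpha}{2},N+\frac{\alpha}{2}]}\,K\,P_{[\frac{\alpha}{2},N+\frac{\alpha}{2}]}\,T_\alpha .
\end{equation*}

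Finally, applying the spectral identity once more with the outermost $T_\alpha^+$ moved to the right produces the factor $T_\alpha T_\alpha^+$, which by \eqref{translation_operator_inverse} equals $P_{\interval[open right]{\frac{\alpha}{2}}{\infty}}$ and is absorbed by the adjacent projection $P_{[\frac{\alpha}{2},N+\frac{\alpha}{2}]}$. What remains is precisely $\sigma(P_{[\frac{\alpha}{2},N+\frac{\alpha}{2}]}KP_{[\frac{\alpha}{2},N+\frac{\alpha}{2}]})\setminus\{0\}$, giving \eqref{hm03t01}.

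No step is genuinely hard — the only real nuisance is careful bookkeeping of the translations and the intervals to verify that the shift-compression intertwining produces exactly the interval $[\tfrac{\alpha}{2},N+\tfrac{\alpha}{2}]$ and that $T_\alpha T_\alpha^+$ is invisible when sandwiched between the new projections. All operators involved are bounded (in fact the first is trace class by Lemma \ref{hm02t} and $K$ is bounded by \eqref{carleman_operator_norm}), so the spectral identity is applied without complication.
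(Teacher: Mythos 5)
Your proposal is correct and follows essentially the same route as the paper: the cyclic spectral identity $\sigma(AB)\setminus\{0\}=\sigma(BA)\setminus\{0\}$, the kernel computation $E_\alpha^*E_\alpha=\bigl(\tfrac{1}{x+y+\alpha}\bigr)$, and the translation intertwining with $T_\alpha$, $T_\alpha^+$ absorbing $T_\alpha T_\alpha^+$ into the projection. The only (immaterial) difference is that you symmetrize via $AA^*$ versus $A^*A$ and conjugate the compressed operator on both sides, whereas the paper moves $P_{[0,N]}$ entirely to the right first and cycles once at the end.
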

\begin{proof}
We know that 
\begin{equation*}
  \sigma(E_\alpha P_{[0,N]}E_\alpha^*)\setminus \{0\} =  \sigma(E_\alpha^*E_\alpha P_{[0,N]})\setminus\{0\} .
\end{equation*}
The product $E_\alpha^*E_\alpha$ is a quasi-Carleman operator
\begin{equation*}
  (E_\alpha^*E_\alpha)(x,y) = \int_0^\infty e^{-(x+\frac{\alpha}{2})t} e^{-t(y+\frac{\alpha}{2})}\, dt = \frac{1}{x+y+\alpha} .
\end{equation*}
By using $T_\alpha$ (cf. \eqref{translation_operator})
\begin{equation*}
\begin{split}
  (E_\alpha^*E_\alpha P_{[0,N]}\varphi)(x)
   & = \int_0^N \frac{1}{x+y+\alpha}\varphi(y)\, dy \\
   & = \int_{\frac{\alpha}{2}}^{N+\frac{\alpha}{2}} \frac{1}{x+y+\frac{\alpha}{2}} \varphi(y-\frac{\alpha}{2})\, dy\\
   & = \int_0^\infty \frac{1}{x+y+\frac{\alpha}{2}} _{[\frac{\alpha}{2},N+\frac{\alpha}{2}]} (T_\alpha\varphi)(y)\, dy\\
   & = (T_\alpha^+ K P_{[\frac{\alpha}{2},N+\frac{\alpha}{2}]} T_\alpha \varphi)(x) .
\end{split}
\end{equation*}
In operator form this reads
\begin{equation*}
  E_\alpha^* E_\alpha P_{[0,N]} = T_\alpha^+ K P_{[\frac{\alpha}{2},N+\frac{\alpha}{2}]} T_\alpha 
\end{equation*}
which implies
\begin{equation*}
  \sigma(E_\alpha^*E_\alpha P_{[0,N]}) \setminus \{0\}
   = \sigma( KP_{[\frac{\alpha}{2},N+\frac{\alpha}{2}]} T_\alpha T_\alpha^+) \setminus\{0\}
   = \sigma(KP_{[\frac{\alpha}{2},N+\frac{\alpha}{2}]})\setminus\{0\} .
\end{equation*}
Here we used \eqref{translation_operator_inverse}. This implies \eqref{hm03t01}.
\end{proof}

In order to use the perturbation determinant \eqref{perturbation_determinant} we need a certain inverse.

\begin{lemma}\label{hm04t}
Let $\alpha\geq \frac{1}{2}$.
Furthermore, let $\beta\in\C\setminus\interval[open right]{1}{\infty}$, $s\in [0,1]$, and $N\in\N$. Then, the operator $\id - \beta A_{N,\alpha}(s)$,
\begin{equation*}
  A_{N,\alpha}(s) \coloneqq \frac{1}{\pi}\bigl((1-s)E_\alpha P_{[0,N]} E_\alpha^* + s G_{N,\alpha}\bigr),
\end{equation*}
is invertible with
\begin{equation*}
  \|(\id-\beta A_{N,\alpha}(s)^{-1}\| \leq 
\begin{cases}
  1                            & \text{for}\ \re(\beta)\leq 0, \\
  \frac{1}{1-\re(\beta)}       & \text{for}\ 0< \re(\beta)< 1,\\
  \frac{|\beta|}{|\im(\beta)|} & \text{for}\ \im(\beta)\neq 0.
\end{cases}
\end{equation*}
\end{lemma}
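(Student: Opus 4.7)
The plan is to exploit that $A_{N,\alpha}(s)$ is a self-adjoint operator with spectrum contained in $[0,1]$, and then reduce the desired resolvent-norm bound to an elementary scalar estimate on the line segment $\{1-\beta t:t\in[0,1]\}$ via the functional calculus.

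First I would establish the two-sided spectral inclusion $0\leq A_{N,\alpha}(s)\leq\id$. Both summands are manifestly nonnegative self-adjoint: $E_\alpha P_{[0,N]}E_\alpha^*=CC^*$ with $C=E_\alpha P_{[0,N]}$, and in the notation of Lemma \ref{hm01t} one checks $B=A^*$, so $G_{N,\alpha}=BA=A^*A\geq 0$. The corresponding upper bounds are $\|E_\alpha P_{[0,N]}E_\alpha^*\|\leq\|E_\alpha\|^2\leq\pi$ from the Schur estimate in Lemma \ref{hm02t}, and $\sigma(G_{N,\alpha})\setminus\{0\}=\sigma(H_{N,\alpha})\setminus\{0\}\subseteq[0,\pi)$ from Lemma \ref{hm01t} combined with \eqref{hilbert_matrix_spectrum} (using $\alpha\geq\frac{1}{2}$). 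Taking the convex combination and dividing by $\pi$ yields $\sigma(A_{N,\alpha}(s))\subseteq[0,1]$. Spectral mapping then shows $\id-\beta A_{N,\alpha}(s)$ is invertible: the putative spectrum $\{1-\beta t:t\in[0,1]\}$ is the line segment from $1$ to $1-\beta$, and it avoids $0$ exactly when $\beta\notin\interval[open right]{1}{\infty}$.

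With invertibility in hand, self-adjointness of $A_{N,\alpha}(s)$ and the spectral theorem give
\begin{equation*}
  \|(\id-\beta A_{N,\alpha}(s))^{-1}\| = \sup_{\lambda\in\sigma(A_{N,\alpha}(s))}\frac{1}{|1-\beta\lambda|} \leq \sup_{t\in[0,1]}\frac{1}{|1-\beta t|}.
\end{equation*}
Expanding $|1-\beta t|^2=1-2t\,\re(\beta)+t^2|\beta|^2$, I would then handle the three regimes separately. If $\re(\beta)\leq 0$, every summand on the right is nonnegative for $t\in[0,1]$, hence $|1-\beta t|\geq 1$. If $0<\re(\beta)<1$, the estimate $|1-\beta t|\geq\re(1-\beta t)=1-t\,\re(\beta)\geq 1-\re(\beta)$ does the job. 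Finally, if $\im(\beta)\neq 0$, completing the square gives $|1-\beta t|^2=|\beta|^2\bigl(t-\re(\beta)/|\beta|^2\bigr)^2+\im(\beta)^2/|\beta|^2\geq\im(\beta)^2/|\beta|^2$; taking reciprocals yields $|\beta|/|\im(\beta)|$.

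There is no serious obstacle: the entire argument rests on the spectral inclusion $0\leq A_{N,\alpha}(s)\leq\id$, so the only point requiring care is correctly invoking \eqref{hilbert_matrix_spectrum} (which needs $\alpha\geq\frac{1}{2}$) for $G_{N,\alpha}$ through Lemma \ref{hm01t}, and tracking the normalisation $\frac{1}{\pi}$ in the definition of $A_{N,\alpha}(s)$. After that the three cases follow by a one-line computation each.
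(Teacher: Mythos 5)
Your proof is correct, and it reaches the conclusion by a route that differs from the paper's in one identifiable respect. The paper does not invoke the spectral theorem at all: it uses the Lax--Milgram (accretivity) argument, showing $\re(\id-\beta A_{N,\alpha}(s))\geq c\,\id$ with $c=1$ or $c=1-\re(\beta)$ in the first two cases, and $\im(\tfrac{1}{\beta}\id-A_{N,\alpha}(s))=-\tfrac{\im(\beta)}{|\beta|^2}\id$ (a definite sign) in the third, which yields invertibility and the bound $1/c$ on the inverse directly from the numerical range. You instead observe that $A_{N,\alpha}(s)$ is self-adjoint --- which is true, and your verification via $E_\alpha P_{[0,N]}E_\alpha^*=CC^*$ and $G_{N,\alpha}=A^*A$ is sound --- and then reduce everything to $\sup_{t\in[0,1]}|1-\beta t|^{-1}$ by the functional calculus; your three scalar estimates are the exact pointwise analogues of the paper's three operator inequalities. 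Both arguments rest on the same input, namely $0\leq A_{N,\alpha}(s)\leq\id$, which each obtains from Lemmas \ref{hm01t}, \ref{hm02t} and \eqref{hilbert_matrix_spectrum}. What the paper's version buys is robustness: Lax--Milgram needs only the numerical range to lie in $[0,1]$, not self-adjointness, so the argument would survive a non-symmetric perturbation. What yours buys is that the resolvent norm is identified exactly as a supremum over the spectrum, so the case distinctions become one-line computations with complex numbers; it is arguably the more transparent presentation, at the cost of having to justify self-adjointness of both summands explicitly.
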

\begin{proof}
We use the Lax--Milgram theorem. Note that Lemmas \ref{hm02t} and \ref{hm01t} along with \eqref{hilbert_matrix_spectrum}
imply $0\leq A_{N,\alpha}(s)\leq \id$ in the sense of quadratic forms. Furthermore,
\begin{equation*}
  \re(\id-\beta A_{N,\alpha}(s)) = \id -\re(\beta)A_{N,\alpha}(s) .
\end{equation*}
Hence, for $\re(\beta)\leq 0$
\begin{equation*}
  \re(\id-\beta A_{N,\alpha}(s)) \geq \id
\end{equation*}
and for $0<\re(\beta)<1$
\begin{equation*}
  \re(\id-\beta A_{N,\alpha}(s)) \geq (1-\re(\beta))\id\ \text{with}\ 1-\re(\beta)>0,
\end{equation*}
which yield the first two cases. In the third case, surely $\beta\neq 0$. Hence,
\begin{equation*}
  \id - \beta A_{N,\alpha}(s) = \beta( \frac{1}{\beta}\id - A_{N,\alpha}(s))
\end{equation*}
and
\begin{equation*}
  \im(\frac{1}{\beta}\id - A_{N,\alpha}(s)) = - \frac{\beta}{|\beta|^2}\id.
\end{equation*}
This implies that the inverse exists and is bounded with
\begin{equation*}
  \| (\id - \beta A_{N,\alpha}(s))^{-1}\|
     = \frac{1}{|\beta|} \|(\frac{1}{\beta}\id - A_{N,\alpha}(s))^{-1}\|
     \leq \frac{1}{|\beta|} \frac{|\beta|^2}{|\im(\beta)|} .
\end{equation*}
This completes the proof.
\end{proof}

The asymptotics of the determinant under study is given by the corresponding determinant of the Carleman operator.

\begin{proposition}\label{hm05t}
Let $\alpha>0$ and $N\in\N$. 
The operator $P_{[\frac{\alpha}{2},N+\frac{\alpha}{2}]}KP_{[\frac{\alpha}{2},N+\frac{\alpha}{2}]}: L^2(\R^+)\to L^2(\R^+)$, cf. \eqref{carleman_operator}, is in the trace class.
Furthermore, if $\alpha\geq\frac{1}{2}$ and $\beta\in\C\setminus\interval[open right]{1}{\infty}$,
\begin{equation*}
  \det(\id - \frac{\beta}{\pi} H_{N,\alpha})
     = \det(\id - \frac{\beta}{\pi} P_{[\frac{\alpha}{2},N+\frac{\alpha}{2}]}KP_{[\frac{\alpha}{2},N+\frac{\alpha}{2}]})
        \times \Delta_N(\beta)
\end{equation*}
where the perturbation determinant can be bounded as
\begin{equation*}
  \exp[ - C(\beta) |\beta| \|D_N\|_1 ] \leq |\Delta_N(\beta)| \leq \exp[ C(\beta)|\beta| \|D_N\|_1 ]
\end{equation*}
with $0\leq C(\beta)<\infty$ independent of $N$, cf. Lemmas \ref{hm04t} and \ref{hm02t}.
\end{proposition}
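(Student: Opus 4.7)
The plan is to chain the three preceding lemmas together via the perturbation determinant formula. First, the trace-class claim: by Lemma \ref{hm03t} the non-negative operators $E_\alpha P_{[0,N]}E_\alpha^*$ and $P_{[\frac{\alpha}{2},N+\frac{\alpha}{2}]}KP_{[\frac{\alpha}{2},N+\frac{\alpha}{2}]}$ share their nonzero spectra, so their trace norms, which equal the sums of their eigenvalues with multiplicity, coincide. Since Lemma \ref{hm02t} gives the finite value $\frac{1}{2}\ln(\frac{2N+\alpha}{\alpha})$ for the first trace norm, the second operator is likewise trace class. Moreover, $G_{N,\alpha} = E_\alpha P_{[0,N]}E_\alpha^* + D_N$ is trace class as a sum of two trace-class operators by Lemma \ref{hm02t}.

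For the determinant identity I would use the fact that Fredholm determinants $\det(\id-\cdot)$ of trace-class operators depend only on the nonzero eigenvalues (with multiplicity). Combined with Lemmas \ref{hm01t} and \ref{hm03t} this yields
\begin{equation*}
  \det(\id - \tfrac{\beta}{\pi}H_{N,\alpha}) = \det(\id - \tfrac{\beta}{\pi}G_{N,\alpha}), \quad
  \det(\id - \tfrac{\beta}{\pi}E_\alpha P_{[0,N]}E_\alpha^*) = \det(\id - \tfrac{\beta}{\pi}P_{[\frac{\alpha}{2},N+\frac{\alpha}{2}]}KP_{[\frac{\alpha}{2},N+\frac{\alpha}{2}]}).
\end{equation*}
Setting $A = \frac{\beta}{\pi}G_{N,\alpha}$ and $B = \frac{\beta}{\pi}E_\alpha P_{[0,N]}E_\alpha^*$, Lemma \ref{hm04t} at $s=1$ guarantees invertibility of $\id - A$, so \eqref{perturbation_determinant} produces a well-defined $\Delta_N(\beta) \coloneqq \Delta(A,B)$ satisfying $\det(\id - A) = \det(\id - B)\,\Delta_N(\beta)$. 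Substituting the two identities above yields the asserted factorisation.

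The bound on $\Delta_N(\beta)$ is where Wouk's formula does its work. With the above $A$ and $B$ one has $B - A = -\frac{\beta}{\pi}D_N$, and $rA + (1-r)B = \beta A_{N,\alpha}(r)$ in the notation of Lemma \ref{hm04t}. Hence \eqref{determinant_integral} gives
\begin{equation*}
  \ln\Delta_N(\beta) = \frac{\beta}{\pi}\tr\Bigl[ D_N \int_0^1 (\id - \beta A_{N,\alpha}(r))^{-1}\, dr \Bigr].
\end{equation*}
Lemma \ref{hm04t} bounds the resolvent uniformly in $r \in [0,1]$ by a constant $\tilde C(\beta)$ depending only on $\beta$ (via the three-case distinction there), independent of $N$. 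H\"older's inequality for the trace norm then yields $|\ln\Delta_N(\beta)| \leq \frac{\tilde C(\beta)}{\pi}|\beta|\,\|D_N\|_1$, and the claimed two-sided bound on $|\Delta_N(\beta)|$ follows by taking real parts and exponentiating, with $C(\beta) \coloneqq \tilde C(\beta)/\pi$.

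The main obstacle is essentially bookkeeping: ensuring that every operator in the chain lives in the trace class on the appropriate Hilbert space so that \eqref{perturbation_determinant} and \eqref{determinant_integral} genuinely apply, and that the spectral identifications from Lemmas \ref{hm01t} and \ref{hm03t} transfer cleanly between the finite-dimensional $\C^N$ and the infinite-dimensional $L^2(\R^+)$ at the level of Fredholm determinants. No additional analytic input beyond Lemmas \ref{hm01t}--\ref{hm04t} is required.
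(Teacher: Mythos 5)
Your proposal is correct and follows essentially the same route as the paper: identify the determinants via the shared nonzero spectra from Lemmas \ref{hm01t} and \ref{hm03t}, apply the perturbation-determinant factorisation \eqref{perturbation_determinant} to the splitting $G_{N,\alpha}=E_\alpha P_{[0,N]}E_\alpha^*+D_N$, and bound $\ln\Delta_N(\beta)$ via Wouk's formula \eqref{determinant_integral}, H\"older's inequality for the trace norm, and the uniform resolvent bound of Lemma \ref{hm04t}. The only difference is that you spell out the spectral/trace-norm bookkeeping that the paper leaves implicit.
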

\begin{proof}
The trace class property follows immediately from Lemmas \ref{hm03t} and \ref{hm02t}. We apply the
formula \eqref{perturbation_determinant} for the perturbation determinant to the operator (cf. Lemma \ref{hm02t})
\begin{equation*}
  G_{N,\alpha} = E_\alpha P_{[0,N]} E_\alpha^* + D_N
\end{equation*}
thereby obtaining
\begin{equation*}
  \det(\id - \frac{\beta}{\pi} G_{N,\alpha})
    = \det(\id - \frac{\beta}{\pi} E_\alpha P_{[0,N]}E_\alpha^*) \times \Delta_N(\beta) .
\end{equation*}
Using the formula \eqref{determinant_integral} for the perturbation determinant we write this as
\begin{equation*}
  \Delta_N(\beta) 
    = \exp\biggl[ - \frac{\beta}{\pi} \int_0^1 \tr\biggl\{ \bigl[ \id - (1-s)\frac{\beta}{\pi} E_\alpha P_{[0,N]} E_\alpha^*
                      -s\frac{\beta}{\pi} G_{N,\alpha}\bigr]^{-1} D_N \biggr\} \, ds\biggr] .
\end{equation*}
Finally, we bound the trace by the trace norm and use Lemma \ref{hm04t} to estimate the norm of the inverse.
This completes the proof.
\end{proof}

\section{\texorpdfstring{Szeg\H{o} limit theorem}{Szego limit theorem}\label{slt}}
In order to handle the complex parameter $\beta$ we formulate an abstract Szeg\H{o} theorem
for normal operators based upon \cite{Otte2004} and \cite{BoettcherOtte2005}.

\begin{proposition}\label{slt01t}
Let $A:\hilbert\to\hilbert$ be a bounded normal operator with
\begin{equation}\label{slt01t01}
  \re(\lambda)\geq m,\ \im(\lambda)\in[y_0-h,y_0+h]\ \text{for all}\ \lambda\in\sigma(A)
\end{equation}
where $m\in\R$ and $0\leq h<\frac{\pi}{2}$.
Furthermore, let $P:\hilbert\to\hilbert$ be an orthonormal projection such that $PAP$ is in the trace class. Then,
the determinant of the operator $Pe^AP:\ran(P)\to\ran(P)$ satisfies
\begin{equation}\label{slt01t02}
  \det(Pe^AP) = \exp[ \tr(PAP)+ \rho(A) ]
\end{equation}
where
\begin{equation}\label{slt01t03}
  |\rho(A)| \leq \frac{1}{2} \frac{e^{|m|}}{\cos(h)} e^{\|A\|} \|PA(\id-P)\|_2 \|(\id - P)AP\|_2 .
\end{equation}
\end{proposition}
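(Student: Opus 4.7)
The plan is to decompose $A$ according to the block structure induced by $P$ and $Q := \id - P$ and to treat the off-diagonal piece perturbatively. I would write $A = B + V$ with $B := PAP + QAQ$ block-diagonal and $V := PAQ + QAP$ strictly off-diagonal. Since $PQ = QP = 0$, the two summands of $B$ commute, so $e^B$ is block-diagonal with respect to $\ran(P)\oplus\ran(Q)$, and consequently $Pe^BP|_{\ran(P)} = e^{A_P}$ where $A_P := PAP|_{\ran(P)}$. The classical identity $\det(e^{A_P}) = e^{\tr(A_P)} = e^{\tr(PAP)}$ accounts for the linear term in the exponent, so the entire content of the bound on $\rho(A)$ lies in controlling the ratio $\det(Pe^AP)/\det(e^{A_P})$.

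Next I would apply Duhamel's formula $e^A - e^B = \int_0^1 e^{sA}V e^{(1-s)B}\,ds$ twice. The first application, using $VP = QAP$ and $e^{(1-s)B}P = e^{(1-s)A_P}$, reduces the right end of the integrand to the $\ran(P)$ block. Since $Pe^{sB}Q = 0$, the remaining off-diagonal factor satisfies $Pe^{sA}Q = P(e^{sA} - e^{sB})Q = \int_0^s Pe^{uA}(PAQ)e^{(s-u)QAQ}Q\,du$, and combining yields the key structural identity
\begin{equation*}
X \;:=\; P(e^A - e^B)P \;=\; \int_0^1\!\!\int_0^s Pe^{uA}\,(PAQ)\,e^{(s-u)QAQ}\,(QAP)\,e^{(1-s)A_P}\,du\,ds,
\end{equation*}
in which the correction is explicitly bilinear in $PAQ$ and $QAP$. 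Bounding each exponential in operator norm by $e^{u\|A\|}$, $e^{(s-u)\|A\|}$, $e^{(1-s)\|A\|}$ respectively (the first from normality of $A$, the latter two from dissipativity of $QAQ$ and $A_P$, whose numerical ranges lie in $[m,\|A\|]+i[y_0-h,y_0+h]$), inserting $\|CD\|_1 \leq \|C\|_2\|D\|_2$ into the integrand, and using that $u+(s-u)+(1-s) = 1$ while the simplex has area $\tfrac{1}{2}$, I obtain $\|X\|_1 \leq \tfrac{1}{2}\,e^{\|A\|}\|PAQ\|_2\|QAP\|_2$.

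From the factorization $\det(Pe^AP) = \det(e^{A_P})\det(\id + e^{-A_P}X)$ on $\ran(P)$ one has $\rho(A) = \log\det(\id + e^{-A_P}X)$. Applying Wouk's formula \eqref{wouk} and exploiting cyclic invariance of the trace, together with the algebraic identity $(\id + re^{-A_P}X)^{-1}e^{-A_P} = ((1-r)e^{A_P} + rPe^AP)^{-1}$, produces the clean representation
\begin{equation*}
\rho(A) \;=\; \int_0^1 \tr\bigl[X\,\bigl((1-r)e^{A_P} + rPe^AP\bigr)^{-1}\bigr]\,dr,
\end{equation*}
whence $|\rho(A)| \leq \|X\|_1 \cdot \sup_{r\in[0,1]}\|((1-r)e^{A_P} + rPe^AP)^{-1}\|$. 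The claimed estimate then follows once one establishes the uniform resolvent bound $\sup_r \|((1-r)e^{A_P} + rPe^AP)^{-1}\| \leq e^{|m|}/\cos(h)$.

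This resolvent estimate is the main obstacle, and it is precisely here that both the hypothesis $h<\pi/2$ and the normality of $A$ become indispensable. Via a Lax--Milgram argument the task reduces to the coercivity $\re\langle x, e^{-iy_0}[(1-r)e^{A_P} + rPe^AP]x\rangle \geq e^m\cos(h)\|x\|^2$ on $\ran(P)$, the rotation by $e^{-iy_0}$ being needed to realign the spectral strip with the real axis. For the $Pe^AP$ summand this follows directly from the spectral theorem applied to the normal $A$: $\re(e^{-iy_0}e^\lambda) = e^{\re\lambda}\cos(\im\lambda - y_0)\geq e^m\cos(h)$ for every $\lambda\in\sigma(A)$, which lifts to a quadratic-form inequality on $\hilbert$ and survives restriction to $\ran(P)$. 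The delicate parallel estimate for the non-normal operator $e^{A_P}$ proceeds via the spectral inclusion $\sigma(A_P)\subset\overline{W(A_P)}\subset\overline{W(A)}$, which places $\sigma(e^{A_P})$ in the same sector, and then transfers this into a coercivity statement on the quadratic form through a functional-calculus representation. The hypothesis $h<\pi/2$ is what keeps the sector strictly inside the open right half-plane after rotation, making $\cos(h)>0$ and the uniform resolvent bound $e^{-m}/\cos(h)\leq e^{|m|}/\cos(h)$ finite; taken together with the estimate on $\|X\|_1$ from the previous paragraph, this yields \eqref{slt01t03}.
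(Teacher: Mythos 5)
Your structural work is sound and in fact reconstructs from scratch what the paper simply imports from \cite{Otte2004}: the double Duhamel expansion of $X=P(e^A-e^B)P$, the trace-norm bound $\|X\|_1\le\tfrac12 e^{\|A\|}\|PAQ\|_2\|QAP\|_2$ (with the $\tfrac12$ coming from the area of the simplex), and the Wouk-type integral representation of $\rho(A)$ are all correct. The gap is in the last step, the uniform bound on $\|((1-r)e^{A_P}+rPe^AP)^{-1}\|$. Your interpolation family has the endpoint $e^{A_P}$, the exponential of the \emph{non-normal} compression $A_P=PAP|_{\ran(P)}$, and for this operator the coercivity $\re\langle x,e^{-iy_0}e^{A_P}x\rangle\ge e^{m}\cos(h)\|x\|^2$ does not follow from the spectral inclusion $\sigma(A_P)\subset\overline{W(A)}$: localizing the spectrum of a non-normal operator says nothing about its numerical range, and there is no functional calculus that converts $\sigma(e^{A_P})\subset\exp(S)$ into a quadratic-form lower bound. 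The inequality itself is in doubt, not merely your argument for it: for $T=\bigl(\begin{smallmatrix} i\theta & c\\ 0 & -i\theta\end{smallmatrix}\bigr)$ with $\theta$ close to $\tfrac{\pi}{2}$ and $c$ small one has $W(T)$ equal to the ellipse with foci $\pm i\theta$ and minor semi-axis $\tfrac{c}{2}$, so $m=-\tfrac{c}{2}$ and $h=\sqrt{\theta^2+c^2/4}<\tfrac{\pi}{2}$, while $W(e^T)$ is the ellipse with foci $e^{\pm i\theta}$ and minor semi-axis $\tfrac{c\sin\theta}{2\theta}$, whose leftmost point $\cos\theta-\tfrac{c\sin\theta}{2\theta}$ falls below $e^{-c/2}\cos(h)$ (and can even be negative); and such a $T$ does occur as a compression of a normal operator with spectrum in the prescribed region. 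Without the coercivity of the endpoint $e^{A_P}$ you do not even get invertibility of the convex combination for all $r\in[0,1]$, so the representation of $\rho(A)$ via Wouk's formula is not justified either.

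The cited proof avoids exactly this trap by interpolating along $t\mapsto Pe^{tA}P$ instead of along $r\mapsto(1-r)e^{A_P}+rPe^{A}P$: differentiating $t\mapsto\ln\det(Pe^{tA}P)-t\tr(PAP)$ and expanding by Duhamel yields (19) of \cite{Otte2004}, namely $|\rho(A)|\le e^{\|A\|}\|PAQ\|_2\|QAP\|_2\int_0^1 t\,\|(Pe^{tA}P)^{-1}P\|\,dt$, in which every inverse is that of a \emph{compression of the exponential of the normal operator} $tA$. For those the spectral theorem does give the coercivity $\re\langle x,e^{-ity_0}e^{tA}x\rangle\ge e^{tm}\cos(th)\|x\|^2\ge e^{-|m|}\cos(h)\|x\|^2$ uniformly in $t\in[0,1]$, which is the content of (15)--(16) in \cite{BoettcherOtte2005}, and the factor $\tfrac12=\int_0^1 t\,dt$ reappears. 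If you replace your factorization through $\det(e^{A_P})$ by this $t$-homotopy, the rest of your computation carries over essentially verbatim.
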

\begin{proof}
From (19) in \cite{Otte2004} follows
\begin{equation*}
  |\rho(A)| \leq e^{\|A\|} \|PA(\id-P)\|_2 \|(\id-P)AP\|_2 \int_0^1 t \| (Pe^{tA}P)^{-1}P\|\, dt .
\end{equation*}
From (15) and (16) in \cite{BoettcherOtte2005} we infer
\begin{equation*}
  \| (Pe^{tA}P)^{-1}P \| \leq \frac{e^{|m|}}{\cos(h)},\ 0\leq t\leq 1 ,
\end{equation*}
which proves the statement.
\end{proof}

In the special case when 
$A:L^2(\R)\to L^2(\R)$ is a convolution operator with even kernel function $A(x)=A(-x)$
and $P=P_{[-n,n]}$ 
the Hilbert--Schmidt norm appearing in Proposition \ref{slt01t} can be written after some simple calculations
\begin{equation}\label{kernel_estimate}
\begin{split}
  \|P_{[-n,n]}A(\id-P_{[-n,n]})\|_2^2
    & = \int_{|x|\leq n}\int_{|y|\geq n} |A(x-y)|^2\, dy\, dx\\
    & = 2\int_0^n x|A(x)|^2 \, dx + 2n \int_n^\infty |A(x)|^2\, dx + 2 \int_0^n \int_n^\infty |A(x+y)|^2\, dy\, dx .
\end{split}
\end{equation}
In order to apply the abstract result in Proposition \ref{slt01t} to our case, we have to write the operator at hand as
$\id - \frac{\beta}{\pi}K = e^A$. In other words we need a logarithm which is no problem here since the Carleman operator $K$ can
be diagonalized explicitly by means of the Mellin transform. For our purposes it is more convenient to stop halfway
and transform it into a convolution operator.

\begin{lemma}\label{slt02t}
The operator $W_a: L^2(\R^+)\to L^2(\R)$, $a\in\R$,
\begin{equation}\label{slt02t01}
  (W_a\varphi)(s) = \sqrt{2} e^{s+a} \varphi(e^{2s+2a}),\ s\in\R,\ \varphi\in L^2(\R^+)
\end{equation}
is unitary. It transforms the Carleman operator $K$ into a convolution operator
\begin{equation}\label{slt02t02}
  W_a K W_a^* = K_0,\ K_0: L^2(\R)\to L^2(\R),\ K_0(x-y) = \frac{1}{\cosh(x-y)}
\end{equation}
and the projection with $a=\frac{1}{4}(\ln(N+\frac{\alpha}{2})+ \ln(\frac{\alpha}{2}))$
\begin{equation}\label{slt02t03}
  W_a P_{[\frac{\alpha}{2}, N+\frac{\alpha}{2}]} W_a^* = P_{[-n_{\frac{\alpha}{2}}(N),n_{\frac{\alpha}{2}}(N)]},\ n_{\frac{\alpha}{2}}(N) = \frac{1}{4} \ln(\frac{N+\frac{\alpha}{2}}{\frac{\alpha}{2}}).
\end{equation}
\end{lemma}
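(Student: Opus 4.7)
The plan is to verify the three assertions by direct calculation, all driven by the substitution $u=e^{2s+2a}$, whose Jacobian $du=2u\,ds$ is precisely what the prefactor $\sqrt{2}\,e^{s+a}$ in the definition of $W_a$ is designed to cancel.

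First, I would check unitarity by inserting this substitution into $\|W_a\varphi\|_{L^2(\R)}^2$: the weight $|\sqrt{2}\,e^{s+a}|^2 = 2u$ coming from the prefactor cancels the Jacobian $1/(2u)$, leaving $\|\varphi\|_{L^2(\R^+)}^2$. The same substitution applied to $\langle W_a\varphi,\psi\rangle$ yields the explicit adjoint
\begin{equation*}
(W_a^*\psi)(u) = \frac{1}{\sqrt{2u}}\,\psi\bigl(\tfrac{1}{2}\ln u - a\bigr),
\end{equation*}
from which $W_a^*W_a=\id$ and $W_aW_a^*=\id$ can be read off, so $W_a$ is unitary.

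Next, for \eqref{slt02t02} I would apply $W_aKW_a^*$ to a test function $\psi$. Inserting the formulae for $W_a^*$ and $K$ produces a double integral whose inner kernel is $\frac{1}{e^{2s+2a}+y}\cdot\frac{1}{\sqrt{2y}}$. Substituting $y=e^{2t+2a}$ recasts the integration over $\R$, and after elementary simplification the kernel collapses to $\frac{2e^{s+t}}{e^{2s}+e^{2t}}=\frac{1}{\cosh(s-t)}$, which is exactly $K_0(s-t)$; notice that the shift $a$ drops out, as it must, since the answer $K_0$ is $a$-independent.

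For \eqref{slt02t03} I would use that $W_a$ conjugates pointwise multipliers diagonally: a short calculation shows $(W_aP_{[\frac{\alpha}{2},N+\frac{\alpha}{2}]}W_a^*\psi)(s)=\chi_{[\frac{\alpha}{2},N+\frac{\alpha}{2}]}(e^{2s+2a})\,\psi(s)$. The condition $\frac{\alpha}{2}\le e^{2s+2a}\le N+\frac{\alpha}{2}$ is equivalent to $s\in[\tfrac{1}{2}\ln(\tfrac{\alpha}{2})-a,\,\tfrac{1}{2}\ln(N+\tfrac{\alpha}{2})-a]$, and the prescribed value $a=\tfrac{1}{4}(\ln(N+\tfrac{\alpha}{2})+\ln(\tfrac{\alpha}{2}))$ is precisely the midpoint of these two endpoints, so the image interval becomes the symmetric interval $[-n_{\frac{\alpha}{2}}(N),\,n_{\frac{\alpha}{2}}(N)]$.

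Conceptually there is no real obstacle: the entire lemma amounts to a change of variables in the multiplicative group $\R^+$. The one point requiring care is the bookkeeping of the shift $a$, which is introduced solely to symmetrize the image interval so that the abstract Szeg\H{o} theorem in Proposition \ref{slt01t} and the convolution Hilbert--Schmidt estimate \eqref{kernel_estimate}, both formulated for symmetric intervals $[-n,n]$, can be invoked in the subsequent arguments.
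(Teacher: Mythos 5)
Your proposal is correct and follows essentially the same route as the paper: the substitution $u=e^{2s+2a}$ for unitarity, the change of variables $y=e^{2t+2a}$ reducing the Carleman kernel to $\tfrac{2e^{s+t}}{e^{2s}+e^{2t}}=\tfrac{1}{\cosh(s-t)}$, and the observation that the chosen $a$ centers the image of $[\tfrac{\alpha}{2},N+\tfrac{\alpha}{2}]$ at the origin. The only cosmetic difference is that you conjugate with the explicit adjoint $W_a^*$ directly, whereas the paper verifies the intertwining relation $W_aK=K_0W_a$ and then multiplies by $W_a^*$; both are sound.
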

\begin{proof}
Cf. \cite[Ch. 10, Thm. 2.1]{Peller2003} and also \cite{Yafaev2013}. We will use the substitution
\begin{equation*}
  x = e^{2s+2a},\ dx = 2e^{2s+2a}\, ds .
\end{equation*}
The unitarity follows from, $\varphi\in L^2(\R^+)$,
\begin{equation*}
  \|W_a\varphi\|^2
     = 2\int_\R |\varphi(e^{2s+2a})|^2 e^{2s+2a}\, ds = \int_0^\infty |\varphi(x)|^2\, dx = \|\varphi\|^2 
\end{equation*}
and the analogous calculation for $W_a^*$. For the Carleman operator we obtain
\begin{equation*}
\begin{split}
  (W_aK\varphi)(s)
    & = \sqrt{2} e^{s+a} \int_0^\infty \frac{1}{e^{2s+2a}+y} \varphi(y)\, dy \\
    & = \sqrt{2} e^{s+a} \int_\R \frac{2 e^{2t+2a}}{e^{2s+2a} + e^{2t+2a}} \varphi(e^{2t+2a})\, dt\\
    & = \sqrt{2} \int_\R \frac{2}{e^{s-t}+e^{t-s}} e^{t+a} \varphi(e^{2t+2a})\, dt\\
    & = \int_\R \frac{1}{\cosh(s-t)} (W_a\varphi)(t)\, dt\\
    & = (K_0W_a\varphi)(s)
\end{split}
\end{equation*}
which reads in operator form
\begin{equation*}
  W_a K = K_0 W_a .
\end{equation*}
This yields \eqref{slt02t02}. Finally,
\begin{equation*}
  \chi_{[\frac{\alpha}{2},\N+\frac{\alpha}{2}]}(e^{2s+2a}) =
\begin{cases}
  1 & \text{for}\ \frac{\alpha}{2} \leq e^{2s+2a} \leq N+\frac{\alpha}{2}, \\
  0 & \text{otherwise},
\end{cases}
  =
\begin{cases}
  1 & \text{for}\ \frac{1}{2}\ln(\frac{\alpha}{2}) - a \leq s \leq \frac{1}{2}\ln(N+\frac{\alpha}{2}) - a ,\\
  0 & \text{otherwise} .
\end{cases}
\end{equation*}
The special $a$ yields the formula \eqref{slt02t03} for the projection.
\end{proof}

Via the Fourier transform
\begin{equation}\label{fourier_transform}
  (\mathcal{F}\varphi)(\omega) \coloneqq \hat\varphi(\omega) \coloneqq \frac{1}{\sqrt{2\pi}} \int_\R e^{-i\omega x}\varphi(x)\, dx
\end{equation}
the convolution operator $K_0$ can be transformed into a multiplication operator
\begin{equation}\label{symbol}
  \mathcal{F}K_0\varphi = \sqrt{2\pi}\hat K_0 \hat\varphi,\ \hat K_0(\omega) = \sqrt{\frac{\pi}{2}} \frac{1}{\cosh(\frac{\omega\pi}{2})} .
\end{equation}
Thereby, we can construct the logarithm needed for the Szeg\H{o} theorem.

\begin{lemma}\label{slt03t}
Let $\beta\in\C\setminus\interval[open right]{1}{\infty}$ and
let the convolution operator $A_0: L^2(\R)\to L^2(\R)$ be given by its kernel function
\begin{equation}\label{slt03t01}
  A_0(x) = \frac{1}{2\pi} \int_\R e^{i\omega x} \ln\bigl( 1 - \frac{\beta}{\cosh(\frac{\omega\pi}{2})}\bigr)\, d\omega,\
  \hat A_0(\omega) = \frac{1}{\sqrt{2\pi}} \ln\bigl( 1 - \frac{\beta}{\cosh(\frac{\omega\pi}{2})}\bigr) .
\end{equation}
Then,
\begin{equation}\label{slt03t02}
  e^{A_0} = \id - \frac{\beta}{\pi} K_0 .
\end{equation}
Furthermore, the spectrum $\sigma(A_0)$ of $A_0$ satisfies
\begin{equation}\label{slt03t03}
\begin{gathered}
  \{ \re(\lambda) \mid \lambda\in\sigma(A_0)\} = [m,M]\ \text{with}\ M=\max\{ 0,\ln|1-\beta|\},\\
  m=
\begin{cases}
  \ln|1-\frac{\re(\beta)}{\beta}| & \text{if}\ 0\leq \re(\beta)\leq |\beta|^2,\\
   \min\{0,\ln|1-\beta|\}         & \text{otherwise},
\end{cases}
\end{gathered}
\end{equation}
and
\begin{equation}
\begin{gathered}\label{slt03t04}
  \{ \im(\lambda)\mid \lambda\in\sigma(A_0)\} = [y_0-h,y_0+h],
      \ y_0=\frac{1}{2}a(\beta),\ h=\frac{1}{2}|a(\beta)| < \frac{\pi}{2} ,\\
   a(\beta) = -\sign(\im(\beta)) \biggl[ \frac{\pi}{2} - \arctan(\frac{1-\re(\beta)}{|\im(\beta)|}) \biggr] .
\end{gathered}
\end{equation}
\end{lemma}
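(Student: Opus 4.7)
The plan is to transfer everything to the Fourier side, where the convolution operator $A_0$ becomes multiplication by its symbol. First, I would establish \eqref{slt03t02} as follows. By \eqref{symbol}, the operator $\id-\frac{\beta}{\pi}K_0$ acts on the Fourier side as multiplication by $1-\beta/\cosh(\omega\pi/2)$, while $A_0$ acts as multiplication by $\sqrt{2\pi}\hat A_0(\omega)=\ln(1-\beta/\cosh(\omega\pi/2))$. Since $s=1/\cosh(\omega\pi/2)$ takes values in $(0,1]$ as $\omega$ ranges over $\R$, and $\beta\notin[1,\infty)$, the segment $\{1-\beta s:s\in(0,1]\}$ avoids the branch cut $(-\infty,0]$ of the principal logarithm; hence the symbol of $A_0$ is continuous and bounded, so $A_0$ is a well-defined bounded normal operator. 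Exponentiating on the Fourier side then gives $e^{A_0}$ the symbol $1-\beta/\cosh(\omega\pi/2)$, which is precisely $\id-\frac{\beta}{\pi}K_0$.

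For the spectrum, note that $A_0$ is unitarily equivalent (via $\mathcal{F}$) to multiplication by its continuous symbol, so $\sigma(A_0)$ equals the closure of the image $\{\ln(1-\beta s):s\in(0,1]\}$. I would study $g(s)=\ln(1-\beta s)$ on $s\in[0,1]$, noting the limit point $g(0)=0$ is attained in the closure. For the real part $\ln|1-\beta s|$, the quadratic $f(s)=|1-\beta s|^2=1-2\re(\beta)s+|\beta|^2s^2$ opens upward, so its maximum on $[0,1]$ is attained at an endpoint, yielding $M=\max\{0,\ln|1-\beta|\}$. The minimum is either at the vertex $s^*=\re(\beta)/|\beta|^2$ — precisely when $s^*\in[0,1]$, i.e.\ $0\le\re(\beta)\le|\beta|^2$ — or at an endpoint. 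A direct computation gives $|1-\beta s^*|=|\im(\beta)|/|\beta|=|1-\re(\beta)/\beta|$, reproducing the first case of $m$; otherwise $m=\min\{0,\ln|1-\beta|\}$.

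For the imaginary part I would use $\im\ln(1-\beta s)=\arg(1-\beta s)$ (principal branch), exploiting that $s\mapsto 1-\beta s$ parameterises the straight segment from $1$ to $1-\beta$. A short calculation shows $\frac{d}{ds}\arg(1-\beta s)=-\im(\beta)/|1-\beta s|^2$, which has constant sign on the segment, so $\arg(1-\beta s)$ varies monotonically between $\arg(1)=0$ and $\arg(1-\beta)$. The elementary identity $\arctan(x)+\arctan(1/x)=\pm\pi/2$, together with careful sign bookkeeping in the four cases $\re(\beta)\lessgtr 1$, $\im(\beta)\lessgtr 0$, identifies $\arg(1-\beta)$ with $a(\beta)$ as stated; so the imaginary range is the closed interval with endpoints $0$ and $a(\beta)$, and centring yields $y_0=a(\beta)/2$ and $h=|a(\beta)|/2$. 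The strict bound $h<\pi/2$ follows because the segment from $1$ to $1-\beta$ is disjoint from $(-\infty,0]$, so $\arg(1-\beta)\in(-\pi,\pi)$ strictly, forcing $|a(\beta)|<\pi$.

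The main obstacle I anticipate is the case analysis for the real part — handling whether the vertex of the quadratic $f$ lies inside or outside $[0,1]$ and checking the two formulas for $m$ agree at the boundary of the split — together with getting the signs right in the arctan identity for $a(\beta)$ when $\re(\beta)>1$. Everything else reduces to standard functional calculus for convolution operators and elementary trigonometry.
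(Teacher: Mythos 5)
Your proposal is correct and follows essentially the same route as the paper: pass to the Fourier side where $A_0$ is multiplication by $\ln(1-\beta/\cosh(\omega\pi/2))$, identify $\sigma(A_0)$ with the closure of $\{\ln(1-\beta s):s\in(0,1]\}$, and analyse real and imaginary parts of $\ln(1-\beta s)$ elementarily (the paper obtains the imaginary part by writing $\ln(1-s\beta)=-\beta\int_0^s(1-\beta t)^{-1}\,dt$ and substituting to reach the $\frac{\pi}{2}-\arctan$ form directly, which is just a repackaging of your $\frac{d}{ds}\arg(1-\beta s)$ computation). The vertex analysis of $|1-\beta s|^2$ for the real part and the exclusion of the branch cut for $h<\frac{\pi}{2}$ match the paper's argument.
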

\begin{proof}
To get all the $\pi$'s right note that \eqref{slt03t02} is, via the Fourier transform (cf. \eqref{fourier_transform}), equivalent to
\begin{equation*}
  \exp(\sqrt{2\pi} \hat A_0(\omega)) = 1 - \frac{\beta}{\pi}\sqrt{2\pi}\hat K_0(\omega) .
\end{equation*}
Solving for $\hat A_0(\omega)$ and using \eqref{symbol} for $\hat K_0(\omega)$
as well as the inverse Fourier transform prove \eqref{slt03t01}.

The spectrum of $A_0$ is given up to factor through the numerical range of the function $\hat A_0$
\begin{equation*}
  \sigma(A_0) = \{ \ln( 1 - \frac{\beta}{\cosh(\frac{\omega\pi}{2})} ) \mid \omega\in\R \} \cup \{0\}
              = \{ \ln(1-s\beta)\mid 0\leq s\leq 1\}.
\end{equation*}
Using the the principal branch of the logarithm as in \eqref{logarithm} yields
\begin{equation*}
  \ln(1-s\beta) = -\beta\int_0^s \frac{1}{1-\beta t} \, dt
                = -\int_0^s \frac{\beta-|\beta|^2t}{|1-\beta t|^2}\, dt .
\end{equation*}
The imaginary part is
\begin{equation*}
  \im(\ln(1-s\beta)) = -\im(\beta) \int_0^s \frac{1}{|1-\beta t|^2}\, dt .
\end{equation*}
The integral vanishes at $s=0$ and attains its maximal value at $s=1$. For $\im(\beta)\neq 0$ we obtain after some standard 
substitutions
\begin{equation*}
  \im(\ln(1-\beta)) = -\im(\beta) \int_1^\infty \frac{1}{|t-\beta|^2}\, dt
                    = -\sign(\im(\beta)) \int_{\frac{1-\re(\beta)}{|\im(\beta)|}}^\infty \frac{1}{t^2+1}\, dt
\end{equation*}
and for the remaining case
\begin{equation*}
  \im(\ln(1-\frac{\beta}{s})) = 0 \ \text{for}\ \im(\beta)=0 .
\end{equation*}
this implies \eqref{slt03t04}. The bound $h\leq \frac{\pi}{2}$ is obvious.
Since $h=\frac{\pi}{2}$ would require $1-\re(\beta)<0$ and $\im(\beta)=0$ this cannot
occur due to the assumptions on $\beta$.

The real part is
\begin{equation*}
  \re(\ln(1-s\beta)) = - \int_0^s \frac{\re(\beta)-|\beta|^2t}{|1-\beta t|^2}\, dt
                     = \ln|1-s\beta|
                     \eqqcolon f(s) .
\end{equation*}
For those $\beta$'s satisfying
\begin{equation*}
  0\leq \re(\beta) \leq |\beta|^2
\end{equation*}
the function $f$ has a single local extremum at $s_-\in[0,1]$, which is a minimum with
\begin{equation*}
  f(s_-) = \ln\bigl| 1 - \frac{\re(\beta)}{\beta}\bigr| = \ln\bigl( \frac{|\im(\beta)|}{|\beta|}\bigr) \leq 0.
\end{equation*}
For any other $\beta$ the extremal values are given by $f(0)=0$ and $f(1) = \ln|1-\beta|$. This proves \eqref{slt03t03}.
\end{proof}

We apply Proposition \ref{slt01t} to the operator $K_0$.

\begin{proposition}\label{slt04t}
Let $\beta\in\C\setminus\interval[open right]{1}{\infty}$ and $n\geq 0$. Then for $K_0$ from \eqref{slt02t02},
\begin{equation}\label{slt04t01}
  \det\bigl(\id - \frac{\beta}{\pi} P_{[-n,n]} K_0 P_{[-n,n]}\bigr) = \exp[ 2n\gamma(\beta) + \rho_n ] .
\end{equation}
Here
\begin{equation}\label{slt04t02}
  \gamma(\beta) 
                = \frac{1}{\pi^2} \int_0^\infty \ln\bigl( 1-\frac{\beta}{\cosh(\omega)}\bigr)\, d\omega
                = \frac{1}{\pi^2} \bigl[ \arcosh(-\beta) \bigr]^2 + \frac{1}{4}
\end{equation}
and the correction term satisfies (cf. \eqref{slt03t01})
\begin{equation*}
  |\rho_n | \leq \frac{3}{4\pi} \frac{e^{|m|}}{\cos(h)} \bigl( \|\hat A_0\|_1 + \|\hat A_0''\|_1\bigr)^2
\end{equation*}
with $m$ from \eqref{slt03t03} and $0\leq h<\frac{\pi}{2}$ from \eqref{slt03t04}.
\end{proposition}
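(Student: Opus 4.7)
The plan is to apply the abstract Szeg\H{o} theorem, Proposition \ref{slt01t}, with $A = A_0$ from Lemma \ref{slt03t} and $P = P_{[-n,n]}$. The identity $e^{A_0} = \id - \frac{\beta}{\pi}K_0$ rewrites the left-hand side of \eqref{slt04t01} as $\det(P_{[-n,n]} e^{A_0} P_{[-n,n]})$, so the desired formula is a direct instance of \eqref{slt01t02}--\eqref{slt01t03}. The spectral hypothesis \eqref{slt01t01} is supplied verbatim by Lemma \ref{slt03t}: the real-part lower bound $m$ comes from \eqref{slt03t03}, and the imaginary-part half-width $h<\frac{\pi}{2}$ from \eqref{slt03t04}.

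For the main term I compute $\tr(P_{[-n,n]} A_0 P_{[-n,n]})$. Because $A_0$ is a convolution with kernel $A_0(x-y)$, the trace collapses to $\int_{-n}^{n} A_0(0)\,dx = 2n\, A_0(0)$. Setting $x=0$ in the Fourier representation \eqref{slt03t01} and substituting $u = \omega\pi/2$ produces $\gamma(\beta)$, evenness of the integrand accounting for the reduction to the half-line. The closed form $\gamma(\beta) = \frac{1}{\pi^2}[\arcosh(-\beta)]^2 + \frac{1}{4}$ is then a standard integral identity, verifiable, for instance, by differentiating in $\beta$ and matching at $\beta=0$.

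The remainder $\rho_n$ is handled via \eqref{slt01t03}. Since $A_0$ has an even convolution kernel, the Hilbert--Schmidt norms $\|P_{[-n,n]} A_0(\id-P_{[-n,n]})\|_2$ and $\|(\id-P_{[-n,n]}) A_0 P_{[-n,n]}\|_2$ coincide and, by \eqref{kernel_estimate}, reduce to three explicit integrals of $|A_0(x)|^2$ over half-line regions. Pointwise decay of $A_0$ follows from two integrations by parts in the Fourier representation, giving simultaneously $|A_0(x)| \leq \|\hat A_0\|_1/\sqrt{2\pi}$ and $|x^2 A_0(x)| \leq \|\hat A_0''\|_1/\sqrt{2\pi}$, hence $(1+x^2)|A_0(x)| \leq (\|\hat A_0\|_1 + \|\hat A_0''\|_1)/\sqrt{2\pi}$. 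Substituted into \eqref{kernel_estimate}, each of the three integrals is dominated by an $n$-independent quantity such as $\int_0^\infty x(1+x^2)^{-2}\,dx = \tfrac{1}{2}$, yielding a bound of the form $C\,(\|\hat A_0\|_1 + \|\hat A_0''\|_1)^2$.

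The main obstacle will be the numerical bookkeeping needed to match the precise prefactor $\frac{3}{4\pi}\frac{e^{|m|}}{\cos h}$, and in particular to absorb the $e^{\|A_0\|}$ factor appearing in \eqref{slt01t03}. This probably requires exploiting the translation invariance of $A_0$ to bound $\|A_0\|$ directly by the supremum of its symbol and then comparing with $e^{|m|}/\cos h$, or a small refinement of Proposition \ref{slt01t} tailored to the convolution setting.
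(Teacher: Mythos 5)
Your proposal follows essentially the same route as the paper: apply Proposition \ref{slt01t} to $A_0$ with $P=P_{[-n,n]}$, compute the main term as $2nA_0(0)$ (evaluated in the paper via Lemma \ref{int02t}), and bound $\rho_n$ through \eqref{kernel_estimate} together with the decay $(1+x^2)|A_0(x)|\leq \frac{1}{\sqrt{2\pi}}(\|\hat A_0\|_1+\|\hat A_0''\|_1)$ obtained by two integrations by parts. The obstacle you flag at the end is real but is in fact a discrepancy in the paper itself: its own proof also produces the factor $e^{\|A_0\|}$ from \eqref{slt01t03}, which is silently dropped in the stated bound; since $\|A_0\|=\sup_{0\leq s\leq 1}|\ln(1-s\beta)|$ is finite and independent of $n$, this affects only the explicit constant and not the $O(1)$ conclusion.
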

\begin{proof}
We check the conditions of Proposition \ref{slt01t}.
The second part of \eqref{slt01t01} follows immediately from \eqref{slt03t04}
since $0\leq h<\frac{\pi}{2}$ for $\beta\in\C\setminus\interval[open right]{1}{\infty}$.
For the real part the only critical cases in \eqref{slt03t03}
are $\beta=1$ and $\frac{\re(\beta)}{\beta}=1$, which is equivalent to $\beta=1$. Since $\beta\notin\interval[open left]{1}{\infty}$
this cannot occur. Hence, there is an $m\in\R$ with $|m|<\infty$ such that the first part in \eqref{slt01t01} holds true.

In order to bound the correction $\rho_n$ term we use \eqref{kernel_estimate}. Since $A_0$ is the Fourier transform of an $L^1$-function
$\hat A_0$ that is arbitrarily often differentiable and vanishes at infinity appropriately, cf. \eqref{slt03t01},
a simple integration by parts shows
\begin{equation*}
  |A_0(x)| \leq \frac{1}{\sqrt{2\pi}} \frac{1}{1+x^2} \bigl[ \|\hat A_0\|_1 + \|\hat A_0''\|_1 \bigr],\ x\in\R.
\end{equation*}
For $\beta\notin \interval[open left]{1}{\infty}$ the $L^1$-norms are finite which follows most conveniently from the representation
\begin{equation*}
  \hat A_0(\omega) = - \frac{\beta}{\sqrt{2\pi}} \int_0^1 \frac{1}{\cosh(\frac{\omega\pi}{2}) - t\beta}\, dt
\end{equation*}
and the analogous formula for $\hat A_0''(\omega)$. The integrals in \eqref{kernel_estimate} become in our case
\begin{gather*}
  2 \int_0^n \frac{x}{(1+x^2)^2}\, dx \leq 1,\
  2n\int_n^\infty \frac{1}{(1+x^2)^2}\, dx \leq \int_n^\infty \frac{2x}{(1+x^2)^2}\, dx = \frac{1}{1+n^2},\\
  2 \int_0^n \int_n^\infty \frac{1}{(1+(x+y)^2)^2} \, dy\, dx \leq 2n\int_n^\infty \frac{1}{(1+y^2)^2}\, dy \leq \frac{1}{1+n^2}.
\end{gather*}
Thereby,
\begin{equation*}
  \|P_{[-n,n]}A_0(\id-P_{[-n,n]})\| \cdot \|(\id-P_{[-n,n]})A_0P_{[-n,n]}\| \leq\frac{3}{2\pi}[ \|\hat A_0\|_1 + \|\hat A_0''\|_1]
\end{equation*}
Finally, the leading term in \eqref{slt01t02} is
\begin{equation*}
  \tr( P_{[-n,n]} A_0 P_{[-n,n]} )  = 2n A_0(0)
\end{equation*}
Now,
\begin{equation*}
  A_0(0) = \frac{1}{2\pi} \int_\R \ln\bigl( 1 - \frac{\beta}{\cosh(\frac{\omega\pi}{2})}\bigr) \, d\omega
         = \frac{2}{\pi^2} \int_0^\infty \ln\bigl( 1 - \frac{\beta}{\cosh(\omega)}\bigr) \, d\omega
         = \frac{1}{\pi^2} [\arcosh (-\beta) ]^2 + \frac{1}{4}
\end{equation*}
where we evaluated the integral via Lemma \ref{int02t}.
This completes the proof.
\end{proof}

We summarize our findings by formulating the main result, the Szeg\H{o} limit theorem for the Hilbert matrix.

\begin{theorem}\label{slt05t}
Let $\alpha\geq\frac{1}{2}$ and $N\in\N$. Then, the Hilbert matrix $H_{N,\alpha}$, see \eqref{hilbert_matrix}, satisfies for all
$\beta\in\C\setminus\interval[open right]{1}{\infty}$
\begin{equation}\label{slt05t01}
  \det\bigl(\id - \frac{\beta}{\pi} H_{N,\alpha}\bigr) = \exp[ 2n_{\frac{\alpha}{2}}(N) \gamma(\beta) + O(1)]\ \text{as}\ N\to\infty,\
   n_{\frac{\alpha}{2}}(N) = \frac{1}{4} \ln\bigl(\frac{N+\frac{\alpha}{2}}{\frac{\alpha}{2}}\bigr),
\end{equation}
with the coefficient
\begin{equation}\label{slt05t02}
  \gamma(\beta) = \frac{1}{\pi^2} [\arcosh(-\beta)]^2 + \frac{1}{4} .
\end{equation}
\end{theorem}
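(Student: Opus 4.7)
The plan is to chain together the reductions of Sections \ref{hm} and \ref{slt} and to verify that both resulting correction terms are bounded uniformly in $N$. First I would apply Proposition \ref{hm05t} to factorise
\[
  \det\bigl(\id - \tfrac{\beta}{\pi}H_{N,\alpha}\bigr)
     = \det\bigl(\id - \tfrac{\beta}{\pi} P_{[\frac{\alpha}{2},N+\frac{\alpha}{2}]} K P_{[\frac{\alpha}{2},N+\frac{\alpha}{2}]}\bigr) \cdot \Delta_N(\beta),
\]
and then return to the integral representation \eqref{determinant_integral} of $\ln \Delta_N(\beta)$: bounding the trace by the trace norm, combining the uniform operator norm estimate of Lemma \ref{hm04t} with the $N$-independent trace norm bound $\|D_N\|_1 \le C_\alpha$ from Lemma \ref{hm02t} gives $\ln \Delta_N(\beta) = O(1)$ as $N \to \infty$.

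Second, I would apply the unitary $W_a$ of Lemma \ref{slt02t} with $a = \tfrac{1}{4}(\ln(N+\tfrac{\alpha}{2}) + \ln(\tfrac{\alpha}{2}))$. By \eqref{slt02t02} and \eqref{slt02t03}, conjugation by $W_a$ converts the compressed Carleman operator into a compressed convolution operator on the symmetric interval $[-n_{\frac{\alpha}{2}}(N), n_{\frac{\alpha}{2}}(N)]$, so
\[
  \det\bigl(\id - \tfrac{\beta}{\pi} P_{[\frac{\alpha}{2},N+\frac{\alpha}{2}]} K P_{[\frac{\alpha}{2},N+\frac{\alpha}{2}]}\bigr)
     = \det\bigl(\id - \tfrac{\beta}{\pi} P_{[-n,n]} K_0 P_{[-n,n]}\bigr), \qquad n = n_{\frac{\alpha}{2}}(N).
\]
Proposition \ref{slt04t} then evaluates the right-hand side as $\exp[\,2n\gamma(\beta) + \rho_n\,]$ with $\gamma(\beta)$ as in \eqref{slt05t02} and $|\rho_n|$ bounded by constants depending only on $\beta$. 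Taking logarithms and summing the two bounded correction terms produces \eqref{slt05t01}.

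The only delicate point is to confirm that both corrections are genuinely $O(1)$, not merely $o(\ln N)$. For the Szeg\H{o} correction $\rho_n$ this is immediate from Proposition \ref{slt04t}, whose estimate involves only fixed data of the symbol $\hat A_0$ (the spectral parameters $m$, $h$ from \eqref{slt03t03}--\eqref{slt03t04} together with the $L^1$-norms of $\hat A_0$ and $\hat A_0''$, all finite because $\beta \notin \interval[open right]{1}{\infty}$). For $\ln \Delta_N(\beta)$, uniformity rests on the $N$-independent trace norm bound for the Hankel remainder $D_N$ from Lemma \ref{hm02t}; this is the structural improvement over \cite{FedeleGebert2019}, where the Taylor expansion of $\ln(\id - A)$ forces the weaker $o(\ln N)$ correction and simultaneously the restriction $|\beta| < 1$.
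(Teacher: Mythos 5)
Your proposal is correct and follows exactly the route the paper takes: Proposition \ref{hm05t} plus the $N$-independent bound $\|D_N\|_1\leq C_\alpha$ from Lemma \ref{hm02t} to absorb $\ln\Delta_N(\beta)$ into $O(1)$, then the unitary $W_a$ of Lemma \ref{slt02t} to pass to $P_{[-n,n]}K_0P_{[-n,n]}$, and finally Proposition \ref{slt04t} with its uniform bound on $\rho_n$. Nothing is missing.
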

\begin{proof}
From Proposition \ref{hm05t} we know
\begin{equation*}
  \ln\bigl( \det( \id - \frac{\beta}{\pi}H_{N,\alpha}) \bigr) 
    = \ln\bigl( \det(\id - \frac{\beta}{\pi} P_{[-\frac{\alpha}{2},N+\frac{\alpha}{2}]} K P_{[-\frac{\alpha}{2},N+\frac{\alpha}{2}]} )\bigr) + O(1)
\end{equation*}
with the Carleman operator $K$ from \eqref{carleman_operator}. From Proposition \ref{slt04t} we infer
\begin{equation*}
  \det(\id - \frac{\beta}{\pi}  P_{[-\frac{\alpha}{2},N+\frac{\alpha}{2}]} K  P_{[-\frac{\alpha}{2},N+\frac{\alpha}{2}]} ) 
       = \det(\id - P_{[-n_{\frac{\alpha}{2}}(N),n_{\frac{\alpha}{2}}(N)]} K_0 P_{[-n_{\frac{\alpha}{2}}(N),n_{\frac{\alpha}{2}}(N)]}).
\end{equation*}
The Szeg\H{o} theorem for $K_0$, Proposition \ref{slt04t}, is
\begin{equation*}
  \ln\bigl( \det(\id - P_{[-n_{\frac{\alpha}{2}}(N),n_{\frac{\alpha}{2}}(N)]} K_0 P_{[-n_{\frac{\alpha}{2}}(N),n_{\frac{\alpha}{2}}(N)]}) \bigr) = 2n_{\frac{\alpha}{2}}(N)\gamma(\beta) + O(1) .
\end{equation*}
Combining these formulae proves the theorem.
\end{proof}

Though the result in \cite{FedeleGebert2019} looks a bit different from ours it is actually the same. For,
\begin{equation*}
  \arcosh(x) = i \arccos(x),\ \arccos(x) = \frac{\pi}{2} - \arcsin(x),\ x\in[-1,1] .
\end{equation*}
These imply
\begin{equation*}
  \frac{1}{\pi^2} ( \arcosh(-\beta))^2 + \frac{1}{4}
    = -\frac{1}{\pi^2}( \frac{\pi}{2} - \arcsin(-\beta))^2 + \frac{1}{4}
    = -\frac{1}{\pi^2} ( \arcsin(\beta)^2 + \pi \arcsin(\beta))
\end{equation*}
which yields the asymptotic formula from \cite[(1.5)]{FedeleGebert2019}
\begin{equation}\label{fedele_gebert}
  \ln\bigl(\det(\id - \frac{\beta}{\pi} H_{N,\alpha})\bigr)
   \sim - \frac{1}{2\pi^2} \bigl( [\arcsin(\beta)]^2 + \pi\arcsin(\beta)\bigr) \ln(N)\
   \text{as}\ N\to\infty .
\end{equation}
By a simple argument based upon the roots of unity we extend our Szeg\H{o} theorem to even powers of the Hilbert matrix.
This will be used for the limit case $\beta=1$, which is not covered by Theorem \ref{slt05t}.

\begin{corollary}\label{slt06t}
Let $m\in\N$ and $\alpha\geq\frac{1}{2}$. Then, the Hilbert matrix $H_{N,\alpha}$ satisfies
\begin{equation}\label{slt06t01}
  \det\bigl(\id + \frac{1}{\pi^{2m}} H_{N,\alpha}^{2m}\bigr) = \exp[ 2n_{\frac{\alpha}{2}}(N) \gamma_{2m} + O(1)] \ \text{as}\ N\to\infty,\
      n_{\frac{\alpha}{2}}(N)=\frac{1}{4}\ln\bigl( \frac{N+\frac{\alpha}{2}}{\frac{\alpha}{2}} \bigr) ,
\end{equation}
where
\begin{equation}\label{slt06t02}
  \gamma_{2m} = \frac{2}{\pi^2}\int_0^\infty \ln\bigl( 1 + \frac{1}{\cosh(\omega)^{2m}}\bigr)\, d\omega .
\end{equation}
\end{corollary}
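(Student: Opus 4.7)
The plan is to exploit the roots-of-unity factorization $1 + x^{2m} = \prod_{k=0}^{2m-1}(1-\zeta_k^{-1}x)$, where $\zeta_k = e^{i\pi(2k+1)/(2m)}$ runs over the $(2m)$-th roots of $-1$. (This follows from the reciprocal-polynomial trick $x^{2m}p(1/x) = 1+x^{2m}$ applied to $p(x)=\prod_k(x-\zeta_k^{-1}) = x^{2m}+1$, using that $\{\zeta_k^{-1}\} = \{\zeta_k\}$.) Specializing to $x = \frac{1}{\pi}H_{N,\alpha}$ yields the operator identity
\begin{equation*}
  \id + \frac{1}{\pi^{2m}} H_{N,\alpha}^{2m} = \prod_{k=0}^{2m-1}\Bigl(\id - \frac{\zeta_k^{-1}}{\pi}H_{N,\alpha}\Bigr) ,
\end{equation*}
whose factors are polynomials in $H_{N,\alpha}$ and hence commute. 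Since everything lives on the finite-dimensional space $\C^N$, the determinant factorizes, giving $\det(\id + \frac{1}{\pi^{2m}}H_{N,\alpha}^{2m}) = \prod_{k=0}^{2m-1}\det(\id - \frac{\zeta_k^{-1}}{\pi}H_{N,\alpha})$.

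Each $\zeta_k^{-1}$ has modulus one and satisfies $\zeta_k^{-1}\notin\interval[open right]{1}{\infty}$ (otherwise $\zeta_k=1$ would contradict $\zeta_k^{2m}=-1$). Theorem \ref{slt05t} therefore applies to each factor, and summing the $2m$ asymptotic expansions with their uniform $O(1)$ remainders gives
\begin{equation*}
  \ln\det\Bigl(\id + \frac{1}{\pi^{2m}}H_{N,\alpha}^{2m}\Bigr) = 2n_{\frac{\alpha}{2}}(N) \sum_{k=0}^{2m-1}\gamma(\zeta_k^{-1}) + O(1) \ \text{as}\ N\to\infty .
\end{equation*}

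It remains to identify $\sum_{k=0}^{2m-1}\gamma(\zeta_k^{-1})$ with $\gamma_{2m}$. Using the integral representation for $\gamma(\beta)$ from Proposition \ref{slt04t} and exchanging finite sum and integral reduces the task to the pointwise identity
\begin{equation*}
  \sum_{k=0}^{2m-1}\ln\Bigl(1-\frac{\zeta_k^{-1}}{\cosh(\omega)}\Bigr) = \ln\Bigl(1 + \frac{1}{\cosh(\omega)^{2m}}\Bigr) ,
\end{equation*}
which is just the same factorization evaluated at $x = 1/\cosh(\omega)$. Substituting yields exactly the integrand of \eqref{slt06t02}.

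The only subtlety, and thus the main obstacle, is matching the principal branches of $\ln$ when passing from $\sum_k \ln$ to $\ln \prod_k$. This is resolved by observing that $\{\zeta_k^{-1}\}_{k=0}^{2m-1}$ is closed under complex conjugation and contains no real elements (since $2m$ is even, $z^{2m}=-1$ has no real roots). Hence the summands $1-\zeta_k^{-1}/\cosh(\omega)$ occur in complex-conjugate pairs, so their arguments cancel and $\sum_k \arg(1-\zeta_k^{-1}/\cosh(\omega)) = 0$, while the product $1+\cosh(\omega)^{-2m}$ is real and positive. The identity therefore holds with the principal branch throughout, completing the proof.
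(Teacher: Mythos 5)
Your proposal is correct and follows essentially the same route as the paper's own proof: factorize $\id+\frac{1}{\pi^{2m}}H_{N,\alpha}^{2m}$ over the $2m$-th roots of $-1$, apply Theorem \ref{slt05t} to each factor, and recombine the integral representations of the coefficients into $\gamma_{2m}$. Your explicit treatment of the branch-matching via conjugate pairing is a slightly more careful version of the paper's $\ln(z)+\ln(\bar z)=\ln(z\bar z)$ step, but the argument is the same.
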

\begin{proof}
Let us define
\begin{equation*}
  \eta_k = \frac{2k-1}{2m},\ k=1,\ldots, m ,
\end{equation*}
whereby we can factorize the determinant into
\begin{equation*}
  \det( \id + \frac{1}{\pi^{2m}}H_{N,\alpha}^{2m}) =
    \prod_{k=1}^m \det(\id + \frac{1}{\pi}e^{i\pi\eta_k} H_{N,\alpha})
    \prod_{k=1}^m \det(\id + \frac{1}{\pi}e^{-i\pi\eta_k} H_{N,\alpha}) .
\end{equation*}
Note that $e^{\pm i\eta_k}\neq -1$. Therefore, we may apply Theorem \ref{slt05t} to each factor 
in the product which yields for the leading term in the asymptotics
\begin{equation*}
  \gamma_{2m} = \frac{2}{\pi^2} \biggl\{ \sum_{k=1}^m \int_0^\infty \ln\bigl( 1 - \frac{e^{i\pi\eta_k}}{\cosh(\omega)}\bigr)\, d\omega
                 + \sum_{k=1}^m \int_0^\infty \ln\bigl( 1 - \frac{e^{-i\pi\eta_k}}{\cosh(\omega)}\bigr)\, d\omega \biggr\} .
\end{equation*}
Here we used the integral representation \eqref{slt04t02} for the coefficients.
In order to rewrite this we note that
\begin{equation*}
  \ln(z) + \ln(\bar z) = \ln(z\bar z) = \ln(|z|)\ \text{for all}\ z\in\C
\end{equation*}
which implies
\begin{equation*}
  \gamma_{2m} = \frac{2}{\pi^2} \int_0^\infty \ln\Bigl[ \prod_{k=-m}^m ( 1 - \frac{e^{i\pi\eta_k}}{\cosh(\omega)} ) \Bigr]\, d\omega
\end{equation*}
and thus \eqref{slt06t02}.
Since the product is finite the sum of the $O(1)$ terms in \eqref{slt05t01} is still $O(1)$ which shows \eqref{slt06t01}.
\end{proof}

\section{Limit case\label{lc}}
We treat the limit case $\beta=1$, which was not covered by Theorem \ref{slt05t},
by showing that it is the limit, hence the name, of the asymptotics for admissible $\beta$.
More precisely, we provide an upper and lower bound for the asymptotics. The upper bound is straightforward.

\begin{proposition}\label{lc00t}
Let $\alpha\geq\frac{1}{2}$ and $N\in\N$. Then,
\begin{equation}\label{lc00t01}
  \limsup_{N\to\infty} \frac{1}{2n_{\frac{\alpha}{2}}(N)} \ln\det(\id - \frac{1}{\pi}H_{N,\alpha})\leq \gamma(1),\
    n_{\frac{\alpha}{2}}(N)=\frac{1}{4}\ln\Bigl( \frac{N+\frac{\alpha}{2}}{\frac{\alpha}{2}} \Bigr) .
\end{equation}
\end{proposition}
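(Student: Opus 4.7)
The approach is the one sketched in the introduction: sandwich $\id - \frac{1}{\pi}H_{N,\alpha}$ below by operators to which Theorem~\ref{slt05t} applies, and then send the parameter to $1$. Fix any real $\beta\in[0,1)$. From \eqref{hilbert_matrix_spectrum} one has $0\leq H_{N,\alpha}<\pi\id$, whence the quadratic-form inequality
\begin{equation*}
  0 \;<\; \id - \tfrac{1}{\pi}H_{N,\alpha} \;\leq\; \id - \tfrac{\beta}{\pi}H_{N,\alpha}.
\end{equation*}
Both sides are Hermitian positive-definite matrices on $\C^N$, so by the Courant--Fischer min--max principle their positive eigenvalues (ordered increasingly) are dominated term by term. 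Taking products yields
\begin{equation*}
  0 \;<\; \det\bigl(\id - \tfrac{1}{\pi}H_{N,\alpha}\bigr) \;\leq\; \det\bigl(\id - \tfrac{\beta}{\pi}H_{N,\alpha}\bigr).
\end{equation*}

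Next I take logarithms (legitimate since both determinants are strictly positive) and insert the asymptotics from Theorem~\ref{slt05t}, applicable because every $\beta\in[0,1)$ lies in $\C\setminus\interval[open right]{1}{\infty}$:
\begin{equation*}
  \ln\det\bigl(\id - \tfrac{1}{\pi}H_{N,\alpha}\bigr) \;\leq\; 2n_{\frac{\alpha}{2}}(N)\,\gamma(\beta) + O(1).
\end{equation*}
Dividing by $2n_{\frac{\alpha}{2}}(N)\to\infty$ and taking $\limsup_{N\to\infty}$ produces the uniform bound
\begin{equation*}
  \limsup_{N\to\infty}\frac{1}{2n_{\frac{\alpha}{2}}(N)}\ln\det\bigl(\id - \tfrac{1}{\pi}H_{N,\alpha}\bigr) \;\leq\; \gamma(\beta) \quad\text{for every }\beta\in[0,1).
\end{equation*}

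It remains to let $\beta\nearrow 1$ and identify the limit with $\gamma(1)$. This is the only delicate point, because $\beta=1$ is precisely the endpoint excluded by Theorem~\ref{slt05t}; I therefore do not use the closed form \eqref{slt05t02} but rather the integral representation \eqref{slt04t02}, which for real $\beta\in[0,1)$ reduces to $\gamma(\beta) = \frac{1}{\pi^2}\int_0^\infty\ln(1-\beta/\cosh(\omega))\,d\omega$. The integrand is non-positive, depends monotonically on $\beta$, and decreases pointwise to $\ln(1-1/\cosh(\omega))$ as $\beta\nearrow 1$; the limit is integrable because it behaves like $2\ln\omega$ near the origin and decays exponentially at infinity. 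Monotone convergence therefore yields $\gamma(\beta)\to\gamma(1)$, and combining this with the preceding display proves \eqref{lc00t01}. The main (and essentially only) obstacle is this left-continuity of $\gamma$ at the excluded endpoint, but the integral representation makes it a routine application of monotone convergence.
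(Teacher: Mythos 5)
Your proof is correct and follows essentially the same route as the paper: the monotonicity $\det(\id-\frac{1}{\pi}H_{N,\alpha})\leq\det(\id-\frac{\beta}{\pi}H_{N,\alpha})$ for $\beta<1$ (from $H_{N,\alpha}\geq 0$), Theorem \ref{slt05t} for each such $\beta$, and the continuity $\gamma(\beta)\to\gamma(1)$ as $\beta\nearrow 1$. You merely supply details (min--max for the determinant inequality, monotone convergence in the integral representation \eqref{slt04t02} for the left-continuity of $\gamma$) that the paper states without proof.
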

\begin{proof}
Let $\beta<1$. Since $H_{N,\alpha}\geq 0$,
\begin{equation}
  \det(\id - \frac{1}{\pi}H_{N,\alpha}) \leq \det(\id - \frac{\beta}{\pi} H_{N,\alpha}) .
\end{equation}
We already know the asymptotics for these $\beta$'s from Theorem \ref{slt05t}
\begin{equation*}
  \limsup_{N\to\infty} \frac{1}{2n_{\frac{\alpha}{2}}(N)} \ln\bigl( \det(\id - \frac{1}{\pi}H_{N,\alpha}) \bigr)
     \leq \liminf_{N\to\infty} \frac{1}{2n_{\frac{\alpha}{2}}(N)} \ln\bigl( \det(\id - \frac{\beta}{\pi} H_{N,\alpha}) \bigr)
       = \gamma(\beta).
\end{equation*}
Since this is valid for all $\beta<1$ and, moreover, $\gamma(\beta)\to\gamma(1)$ as $\beta\to 1$ we obtain
\eqref{lc00t01}
\end{proof}

For the lower bound we employ Lemma \ref{det01t}. To this end, we need estimates for $\tr(H_{N,\alpha}^m)$.
The method parallels that of Section \ref{hm} in that we replace the Hilbert matrix by the Carleman operator.
For an intermediate step we need the so-called 'odd' Hilbert matrix
\begin{equation}\label{odd_hilbert_matrix}
  H_-: \ell^2(\N_0) \to \ell^2(\N_0),\ H_- = ( h_{j+k} )_{j,k\in\N_0},\
  h_j =
\begin{cases}
  \frac{1}{j+1} & \text{for}\ j \ \text{even}, \\
  0             & \text{for}\ j \ \text{odd} .
\end{cases}
\end{equation}
It is more convenient here to work with the projection operator
\begin{equation}\label{projection}
  P_N : \ell^2(\N_0)\to\ell^2(\N_0),\
  (P_Nc)_j =
\begin{cases}
  c_j & \text{for} \ 0\leq j\leq N-1, \\
  0   & \text{for} \ j\geq N+1
\end{cases}
\end{equation}  
instead of the finite odd Hilbert matrix.

\begin{lemma}\label{lc01t}
Let $\alpha\geq \frac{1}{2}$. Then, for all $m,N\in\N$
\begin{equation}\label{lc01t01}
  \tr[ H_{N,\alpha}^m ] \leq 2^m \tr[ (P_{2N}H_- )^m ] \leq 2^m \tr[ P_{2N} H_-^m ] .
\end{equation}
\end{lemma}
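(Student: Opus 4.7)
The plan is to realize the Hilbert matrix $H_{N,\alpha}$ as a compression of twice the odd Hilbert operator $H_-$, and then replace this compression by the larger cutoff $P_{2N}$. First I reduce to the case $\alpha=\frac{1}{2}$. The Laplace representation
\[
   \langle c, H_{N,\alpha} c\rangle = \int_0^\infty e^{-\alpha t} \Bigl|\sum_{j=0}^{N-1} c_j e^{-jt}\Bigr|^2\, dt
\]
combined with $e^{-\alpha t}\leq e^{-t/2}$ for $\alpha\geq\frac{1}{2}$ and $t\geq 0$ yields $0\leq H_{N,\alpha}\leq H_{N,1/2}$ as quadratic forms. By the min-max characterization the ordered eigenvalues satisfy $\lambda_k(H_{N,\alpha})\leq\lambda_k(H_{N,1/2})$, so $\tr(H_{N,\alpha}^m)\leq\tr(H_{N,1/2}^m)$ for every $m\in\N$.

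Next I introduce the isometry $\iota:\C^N\to\ell^2(\N_0)$ embedding into the even coordinates, $(\iota c)_{2j}=c_j$ for $0\leq j\leq N-1$ and $(\iota c)_l=0$ otherwise. Since $(H_-)_{2j,2k}=1/(2j+2k+1)$, a direct computation gives $\iota^* H_-\iota=\frac{1}{2}H_{N,1/2}$. Setting $P\coloneqq\iota\iota^*$ (the projection onto the even coordinates of index below $2N$) we clearly have $P\leq P_{2N}$, and by cyclicity of the trace
\[
  \tr(H_{N,1/2}^m) = 2^m\tr((\iota^* H_-\iota)^m) = 2^m\tr((P H_-)^m) = 2^m\tr\bigl((H_-^{1/2} P H_-^{1/2})^m\bigr).
\]
The boundedness and positivity of $H_-$ on $\ell^2(\N_0)$ (e.g.\ via $H_-=\frac{1}{2}(H_\infty+UH_\infty U)$ with $U=\diag((-1)^j)$, so $0\leq H_-\leq\pi\,\id$) allow me to use $P\leq P_{2N}$ to conclude $H_-^{1/2} P H_-^{1/2}\leq H_-^{1/2} P_{2N} H_-^{1/2}$ as positive operators. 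Monotonicity of $A\mapsto\tr(A^m)$ on positive operators, combined with the cyclicity identity $\tr((P_{2N}H_-)^m)=\tr((H_-^{1/2}P_{2N}H_-^{1/2})^m)$, then gives the first inequality in \eqref{lc01t01}.

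For the second inequality I would invoke the Araki--Lieb--Thirring inequality: for $A,B\geq 0$ and $r\geq 1$,
\[
  \tr\bigl((B^{1/2} A B^{1/2})^r\bigr) \leq \tr(B^{r/2} A^r B^{r/2}).
\]
Applied with $A=P_{2N}$ (so that $A^m=P_{2N}$) and $B=H_-$, it reads
\[
  \tr\bigl((H_-^{1/2} P_{2N} H_-^{1/2})^m\bigr) \leq \tr(H_-^{m/2} P_{2N} H_-^{m/2}) = \tr(P_{2N} H_-^m),
\]
which together with the cyclicity identity above yields the claim. I expect the main conceptual work to sit in the setup of the embedding $\iota$ and the verification of positivity and boundedness of $H_-$; once those are in place, the remaining chain is a routine application of standard trace inequalities for positive operators, so I do not anticipate genuine obstacles.
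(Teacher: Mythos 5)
Your proof is correct, and it reaches the same structural insight as the paper -- namely that $\tfrac12 H_{N,1/2}$ is exactly the compression of $H_-$ to the even coordinates $0,2,\dots,2N-2$ -- but it executes every step with abstract operator inequalities where the paper computes with indices. Concretely: the paper expands $\tr[(P_{2N}H_-)^m]$ as a sum over index tuples, observes that only all-even or all-odd tuples contribute, discards the (nonnegative) odd contribution, and bounds the even contribution below by $2^{-m}\tr[H_{N,\alpha}^m]$ using the termwise inequality $\frac{1}{k_l+k_{l+1}+\alpha}\le\frac{1}{k_l+k_{l+1}+1/2}$; the second inequality is dispatched in one line from $P_{2N}$ being a projection and $H_-$ self-adjoint (and nonnegative). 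You instead (a) reduce $\alpha$ to $\tfrac12$ via the quadratic-form ordering $0\le H_{N,\alpha}\le H_{N,1/2}$ from the Laplace representation together with eigenvalue monotonicity, (b) replace the even-coordinate projection $P=\iota\iota^*$ by $P_{2N}$ via $P\le P_{2N}$ and monotonicity of $A\mapsto\tr(A^m)$ on positive operators, and (c) obtain the second inequality from Araki--Lieb--Thirring with $A=P_{2N}$. All three steps are sound (the positivity $0\le H_-\le\pi\id$ you need is correctly justified, and although $H_-$ is not trace class, every trace in sight is of a finite-rank positive operator, so the trace inequalities apply). Your route buys conceptual transparency and avoids the combinatorial bookkeeping, at the price of invoking heavier machinery -- ALT is overkill for the projection case, where the elementary trace Jensen inequality $\tr[(PAP)^m]\le\tr[PA^mP]$ for $A\ge0$ suffices, which is all the paper uses. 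One cosmetic remark: steps (a) and (b) are exactly the operator-theoretic shadows of the paper's two inequalities "replace $\alpha$ by $\tfrac12$ termwise" and "add back the odd-index terms", so the two proofs are best viewed as the same argument in two languages.
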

\begin{proof}
We start with the odd Hilbert matrix
\begin{equation*}
\begin{split}
  \tr[ (P_{2N} H_- )^m ]
    & = \sum_{j_1,\ldots,j_m=0}^{2N-1} \prod_{l=1}^m h_{j_l + j_{l+1}}\\
    & = \sum_{k_1,\ldots,k_m=0}^{N-1} \prod_{l=1}^m \frac{1}{2k_l + 2k_{l+1} + 1}
        + \sum_{k_1,\ldots,k_m=0}^{N-1} \prod_{l=1}^m \frac{1}{2k_l +1 + 2k_{l+1} + 1 + 1}\\
    & \geq \frac{1}{2^m} \sum_{k_1,\ldots,k_m=0}^{N-1} \prod_{l=1}^m \frac{1}{k_l + k_{l+1} + \frac{1}{2}}\\
    & \geq \frac{1}{2^m} \sum_{k_1,\ldots,k_m=0}^{N-1} \prod_{l=1}^m \frac{1}{k_l + k_{l+1} + \alpha}\\
    & = \frac{1}{2^m} \tr[ H_{N,\alpha}^m ] .
\end{split}  
\end{equation*}
Here we used that $h_{j_l+j_{l+1}}\neq 0$ only if $j_l+j_{l+1}$ is even which is the case when either all of the $j_l$ are even
or all are odd. This yields the first inequality in \eqref{lc01t01}. The second inequality follows from $P_{2N}$ being an
orthogonal projection and $H_{N,\alpha}^*=H_{N,\alpha}$.
\end{proof}

With the aid of the orthonormal Laguerre functions $l_j$, $j\in\N_0$, we define the unitary operator
\begin{equation}\label{laguerre1}
  U: L^2(\R^+)\to \ell^2(\N_0),\ (U\varphi)_j = \int_0^\infty l_j(x) \varphi(x)\, dx,\ j\in\N_0 .
\end{equation}
This transforms $P_N$ into the projection with Christoffel--Darboux kernel
\begin{equation}\label{laguerre2}
  P_N = U\Pi_NU^*,\ \Pi_N(x,y) \coloneqq \sum_{k=0}^N l_k(x)l_k(y)
\end{equation}
and the odd Hilbert matrix into the Carleman operator \cite[pp.~54,~55]{Peller2003}
\begin{equation}\label{laguerre3}
  2H_- =  UKU^*,\  2^m \tr(P_N H_-^m) = \tr(\Pi_N K^m) .
\end{equation}
The kernel function of the Carleman operator has a critical behavior at $x=0$ and $x=\infty$, cf. \eqref{carleman_operator}.
Therefore, we use an appropriate cut-off.

\begin{lemma}\label{lc02t}
Let $0\leq\delta\leq L$. Then, for all $m,N\in\N$
\begin{equation}\label{lc02t01}
  2^m \tr[ P_N H_-^m ] \leq 2 \tr[ P_{[\delta,L]} K^m] + (1+\pi^m) \tr[ P_{[\delta,L]}^\perp \Pi_N],\
    P_{[\delta,L]}^\perp\coloneqq \id - P_{[\delta,L]} .
\end{equation}
\end{lemma}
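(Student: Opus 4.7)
The plan is to translate the statement to operator form via \eqref{laguerre3}: the identity $2^m\tr(P_N H_-^m) = \tr(\Pi_N K^m)$ reduces the claim to the purely operator-theoretic inequality
\[
  \tr(\Pi_N K^m) \;\leq\; 2\tr(P K^m) + (1+\pi^m)\tr(Q\Pi_N),
\]
where I write $P := P_{[\delta,L]}$ and $Q := \id - P$. This reformulation is natural because $P K^m P$ is trace class on $L^2(\R^+)$ (the Carleman kernel is integrable on the bounded strip $[\delta,L]^2$ away from the corner $x=y=0$), whereas $K^m$ itself is not.

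Next I would decompose $\tr(\Pi_N K^m) = \tr(\Pi_N K^m \Pi_N)$ by inserting $P+Q=\id$ on both sides of $K^m$, obtaining
\[
  \tr(\Pi_N K^m) = \tr(\Pi_N P K^m P) + \tr(\Pi_N Q K^m Q) + 2\,\Re\,\tr(\Pi_N P K^m Q).
\]
The two diagonal terms are bounded by the fundamental operator inequalities $\Pi_N \leq \id$ and $0 \leq K \leq \pi\,\id$ from \eqref{carleman_operator_norm}: from $P\Pi_N P \leq P$ and cyclicity of the trace,
\[
  \tr(\Pi_N P K^m P) = \tr(K^{m/2} P\Pi_N P K^{m/2}) \leq \tr(P K^m);
\]
and from $Q K^m Q \leq \pi^m Q$ together with $\tr(Q\Pi_N Q) = \tr(Q\Pi_N)$,
\[
  \tr(\Pi_N Q K^m Q) \leq \pi^m \tr(Q\Pi_N).
\]

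The off-diagonal term is the heart of the matter. Since $(\id-\Pi_N)\Pi_N = 0$ yields the cyclic identity $\tr(\Pi_N P K^m Q) = \tr(\Pi_N P K^m Q \Pi_N)$, the Hilbert--Schmidt Cauchy--Schwarz inequality applied to the factorization $\Pi_N P K^m Q \Pi_N = (\Pi_N P K^{m/2})(K^{m/2} Q \Pi_N)$ gives
\[
  |\tr(\Pi_N P K^m Q)| \leq \bigl(\tr(P\Pi_N P K^m)\bigr)^{1/2}\bigl(\tr(Q\Pi_N Q K^m)\bigr)^{1/2} \leq \sqrt{\tr(P K^m)\cdot \pi^m \tr(Q\Pi_N)}.
\]
A weighted arithmetic--geometric mean inequality $2\sqrt{ab}\leq \lambda a + b/\lambda$ then distributes this bound additively between $\tr(P K^m)$ and $\tr(Q\Pi_N)$; summing with the diagonal bounds yields the stated coefficients $(2, 1+\pi^m)$.

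The main obstacle is the precise optimization in the Young step, since the natural symmetric choice $\lambda = 1$ produces the weaker constant $2\pi^m$ in place of $1+\pi^m$ on the $\tr(Q\Pi_N)$ side. Obtaining the sharper bound requires carefully exploiting the asymmetry between the two diagonal estimates (only the $Q$-piece carries the factor $\pi^m$) so that the cross-term contribution completes the $\pi^m\tr(Q\Pi_N)$ diagonal piece to $(1+\pi^m)\tr(Q\Pi_N)$ rather than doubling it.
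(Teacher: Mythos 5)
Your reduction via \eqref{laguerre3} and your decomposition are, up to cyclicity of the trace, exactly what the paper does (the paper inserts $P+Q$ around $\Pi_N$ rather than around $K^m$, which yields the same three terms), and your bounds for the two diagonal terms and the Cauchy--Schwarz estimate for the cross term are all correct. The problem is the final step, which you yourself flag as the ``main obstacle'': it cannot be completed as proposed. With your notation $P=P_{[\delta,L]}$, $Q=P_{[\delta,L]}^\perp$ and writing $a=\tr(PK^m)$, $c=\tr(Q\Pi_N)$, your three estimates give
\begin{equation*}
  \tr(\Pi_N K^m)\;\le\; a+\pi^m c+2\sqrt{a\cdot\pi^m c}\;\le\;(1+\lambda)\,a+\Bigl(1+\tfrac{\pi^m}{\lambda}\Bigr)c,\qquad \lambda>0,
\end{equation*}
and hitting the pair of coefficients $(2,\,1+\pi^m)$ would require $\lambda=1$ and $\lambda=\pi^m$ simultaneously, which is impossible for $\pi^m>1$; indeed $a+\pi^m c+2\sqrt{a\pi^m c}=(\sqrt a+\sqrt{\pi^m c})^2$ equals $2a+2\pi^m c$ when $a=\pi^m c$. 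There is no residual asymmetry to exploit: the factor $\pi^{m/2}$ in the cross term is forced by $\|K^{m/2}Q\Pi_N\|_2^2=\tr(QK^mQ\Pi_N)\le\pi^m\tr(Q\Pi_N)$. So what your argument actually proves is \eqref{lc02t01} with $2\pi^m$ in place of $1+\pi^m$.

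Two remarks put this in perspective. First, the weaker constant is harmless downstream: the lemma is used only in \eqref{lc05t02}, where the coefficient of $\tr(P_{[\delta,L]}^\perp\Pi_{2N})$ is $(1+\pi^{2^m})/\pi^{2^m}\le 2$, and $2\pi^{2^m}/\pi^{2^m}=2$ gives the identical bound, so every later estimate survives verbatim. Second, the paper's own proof asserts the cross-term bound $2(\tr(Q\Pi_N))^{1/2}(\tr(PK^m))^{1/2}$ \emph{without} the factor $\pi^{m/2}$ and then applies the symmetric AM--GM inequality to reach $1+\pi^m$; as far as I can see, every natural grouping in that Cauchy--Schwarz step produces the same $\pi^{m/2}$ that yours does (either through $\tr(QK^mQ\Pi_N)\le\pi^m\tr(Q\Pi_N)$ or through $\tr(PK^{2m})\le\pi^m\tr(PK^m)$), so the sharper constant does not appear to be justified there either. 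In short: your proof is sound and is essentially the paper's proof, but it establishes the inequality only with the constant $2\pi^m$; obtaining $1+\pi^m$ would require a genuinely different estimate of the cross term, not a re-weighting of Young's inequality.
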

\begin{proof}
We use \eqref{laguerre3} and decompose the trace
\begin{equation*}
  \tr ( \Pi_N K^m ) 
     = \tr[ P_{[\delta,L]} \Pi_N P_{[\delta,L]} K^m ] + 2 \re\tr[ P_{[\delta,L]}^\perp \Pi_N P_{[\delta,L]} K^m ] 
          + \tr[P_{[\delta,L]}^\perp \Pi_N P_{[\delta,L]}^\perp K^m ] .
\end{equation*}
Since all operators involved are non-negative we can bound the traces through the operator norm
\begin{equation*}
\begin{split}
  \tr ( \Pi_N K^m ) 
    & \leq \|P_{[\delta,L]} \Pi_N P_{[\delta,L]}\| \tr[P_{[\delta,L]} K^m ]
             + 2 \bigl( \tr(P_{[\delta,L]}^\perp \Pi_N \bigr)^{\frac{1}{2}}
                 \bigl( \tr(P_{[\delta,L]} K^m ) \bigr)^{\frac{1}{2}}
             + \tr(P_{[\delta,L]}^\perp \Pi_N ) \|K\|^m\\
    & \leq \tr[P_{[\delta,L]} K^m ]
             + 2 \bigl( \tr(P_{[\delta,L]}^\perp \Pi_N \bigr)^{\frac{1}{2}}
                 \bigl( \tr(P_{[\delta,L]} K^m ) \bigr)^{\frac{1}{2}}
             + \tr(P_{[\delta,L]}^\perp \Pi_N ) \pi^m  \\
    & \leq 2\tr[P_{[\delta,L]} K^m ] + (1 + \pi^m) \tr(P_{[\delta,L]}^\perp \Pi_N ) .
\end{split}
\end{equation*}
Here we used the Cauchy--Schwarz inequality for the trace and \eqref{carleman_operator_norm}.
This proves the lemma.
\end{proof}

The trace of the Carleman operator can be expressed as a simple integral.

\begin{lemma}\label{lc03t}
Let $\delta>0$ and $N\geq 0$. Then, for all $m\in\N$
\begin{equation*}
  \tr[ P_{[\delta,N+\delta]} K^m ] = 2n_\delta(N)\pi^{m-2} \int_\R \frac{1}{[\cosh(\omega)]^m}\, d\omega,\
    n_\delta(N)=\frac{1}{4}\ln\bigl(\frac{N+\delta}{\delta}\bigr) .
\end{equation*}
\end{lemma}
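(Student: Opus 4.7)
My plan is to reduce the computation on $L^2(\R^+)$ to a Fourier-analytic calculation on $L^2(\R)$ via the unitary transformation $W_a$ from Lemma \ref{slt02t}. The proof of that lemma works verbatim with a general $\delta>0$ in place of $\alpha/2$; choosing $a = \frac{1}{4}(\ln(N+\delta) + \ln\delta)$ yields $W_a K W_a^* = K_0$ and $W_a P_{[\delta, N+\delta]} W_a^* = P_{[-n_\delta(N), n_\delta(N)]}$. Since unitary conjugation preserves both traces and powers,
\begin{equation*}
  \tr[P_{[\delta, N+\delta]} K^m] = \tr[P_{[-n_\delta(N), n_\delta(N)]} K_0^m].
\end{equation*}

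Next, I would observe that $K_0^m$ is again a convolution operator on $L^2(\R)$ whose symbol is, by \eqref{symbol},
\begin{equation*}
  \sqrt{2\pi}\,\widehat{K_0^{(m)}}(\omega) = \bigl(\sqrt{2\pi}\,\hat K_0(\omega)\bigr)^m = \frac{\pi^m}{[\cosh(\omega\pi/2)]^m} .
\end{equation*}
Since this function is integrable for $m\geq 1$, its kernel $K_0^{(m)}$ is continuous, and Fourier inversion together with the substitution $u=\omega\pi/2$ gives the diagonal value
\begin{equation*}
  K_0^{(m)}(0) = \frac{1}{2\pi} \int_\R \frac{\pi^m}{[\cosh(\omega\pi/2)]^m}\, d\omega = \pi^{m-2} \int_\R \frac{1}{[\cosh(u)]^m}\, du .
\end{equation*}

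Finally, the plan is to identify $\tr[P_{[-n,n]} K_0^m]$ with the integral of the diagonal of the continuous kernel $\chi_{[-n,n]}(x) K_0^{(m)}(x-y) \chi_{[-n,n]}(y)$, obtaining $2n\, K_0^{(m)}(0)$. The trace class property is verified by factoring
\begin{equation*}
  P_{[-n,n]} K_0^m P_{[-n,n]} = (P_{[-n,n]} K_0)\, K_0^{m-2}\, (K_0 P_{[-n,n]}),
\end{equation*}
where the two outer factors are Hilbert--Schmidt (inherited from Proposition \ref{hm05t} via the unitary equivalence of Lemma \ref{slt02t}), while $K_0^{m-2}$ is bounded. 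Since $K_0^m$ is non-negative and self-adjoint with continuous kernel, Mercer's theorem legitimizes the trace-equals-integral-of-diagonal formula. Setting $n = n_\delta(N)$ then produces the stated identity. The main technical step is this last one: justifying the passage from the abstract trace to the explicit integral of the diagonal of a continuous convolution kernel restricted to a bounded interval. Everything else is a direct Fourier-analytic computation.
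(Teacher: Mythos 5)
Your proposal is correct and follows essentially the same route as the paper: reduce to the convolution operator $K_0$ via the unitary $W_a$ of Lemma \ref{slt02t} (with $\frac{\alpha}{2}$ replaced by $\delta$), then compute the trace by Fourier analysis as $2n_\delta(N)$ times the diagonal value of the kernel of $K_0^m$; the paper merely performs the same Plancherel computation on the Fourier side, writing the trace as $\tr[\mathcal{F}P_{[-n,n]}\mathcal{F}^*\hat K_0^m]$, while you invert the transform first and integrate the diagonal, and you are more explicit about justifying the trace-equals-diagonal-integral step. Only a cosmetic caveat: your factorization $(P_{[-n,n]}K_0)K_0^{m-2}(K_0P_{[-n,n]})$ requires $m\geq 2$, so for $m=1$ you should instead use $P K_0 P=(PK_0^{1/2})(K_0^{1/2}P)$ (legitimate since $\hat K_0\geq 0$) or apply Mercer directly to the non-negative operator with continuous kernel.
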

\begin{proof}
From Lemma \ref{slt02t} we immediately infer
\begin{equation*}
  \tr[ P_{[\delta,N+\delta]} K^m ] = \tr[ P_{[-n_\delta(N),n_\delta(N)]} K_0^m ] .
\end{equation*}
Via the diagonalization $\mathcal{F}K_0\mathcal{F}^* = \sqrt{2\pi} \hat K_0$, see \eqref{fourier_transform}
and \eqref{symbol}, we obtain
\begin{equation*}
  \tr[ P_{[-n_\delta(N),n_\delta(N)]} K_0^m ]
    = (2\pi)^{\frac{m}{2}} \tr[ P_{-n_\delta(N),n_\delta(N)]} \mathcal{F}^* \hat K_0^m \mathcal{F} ]
    = (2\pi)^{\frac{m}{2}} \tr[ \mathcal{F} P_{[-n_\delta(N),n_\delta(N)]} \mathcal{F}^* \hat K_0^m ] .
\end{equation*}
Now,
\begin{equation*}
  \mathcal{F} P_{[-n_\delta(N),n_\delta(N)]} \mathcal{F}^*(x,y) = \frac{1}{2\pi} \int_{-n}^n e^{-i\omega(x-y)}\, d\omega
\end{equation*}
and thus
\begin{equation*}
  \tr[ P_{[-n_\delta(N),n_\delta(N)]} K_0^m ] = \frac{1}{2\pi} 2n_\delta(N) (2\pi)^{\frac{m}{2}} \int_\R \hat K_0(\omega)^m\, d\omega .
\end{equation*}
This implies
\begin{equation*}
  \tr[ P_{[-n_\delta(N),n_\delta(N)]} K_0^m ]
     = 2n_\delta(N) (2\pi)^{\frac{m-2}{2}} \int_\R \biggl[ \sqrt{\frac{\pi}{2}} \frac{1}{\cosh(\frac{\pi\omega}{2})} \biggr]^m\, d\omega
     = 2n_\delta(N)\pi^{m-2} \int_\R \frac{1}{[\cosh(\omega)]^m}\, d\omega
\end{equation*}
which proves the lemma.
\end{proof}

In order to bound the traces of the projection operator in \eqref{lc02t01}
we need pointwise estimates for the Laguerre polynomials.
The first one is Szeg\H{o}'s inequality, \cite[(7.21.3)]{Szego1975}, 
\begin{equation}\label{szegoe_inequality}
  |L_n(x)| \leq e^{\frac{x}{2}},\ x\geq 0,\ n\in\N_0.
\end{equation}
The second one is the less known Lewandowski--Szynal inequality \cite[Corollary 1]{LewandowskiSzynal1998}, which bounds
the Laguerre polynomial via the incomplete Gamma function
\begin{equation}\label{lewandowski_szynal_inequality}
  |L_n(x)| \leq \frac{e^x}{n!} \int_x^\infty t^n e^{-t}\, dt ,\ x\geq 0,\ n\in\N_0 .
\end{equation}
We will also need the simple formula
\begin{equation}\label{partial_sum}
  \sum_{k=0}^n \frac{1}{k!} x^k = \frac{e^x}{n!} \int_x^\infty t^n e^{-t}\, dt,
\end{equation}
whereby one could replace the integral in \eqref{lewandowski_szynal_inequality} by the partial sum of the exponential
function $e^x$. In particular, \eqref{lewandowski_szynal_inequality} is better for large $x$ than \eqref{szegoe_inequality}
but does not converge to \eqref{szegoe_inequality} for large $n$ and fixed $x$ because of the different exponents.

\begin{lemma}\label{lc04t}
Let $\delta\geq 0$. Furthermore, let $N\in\N$ and $L>0$ such that $\frac{N}{L}<\frac{1}{2}$. Then,
\begin{equation*}
    \tr[P_{[\delta,L]}^\perp \Pi_N ] \leq \delta (N+1) + \frac{4}{\frac{1}{2}-\frac{N}{L}} \frac{1}{N!} e^{-\frac{L}{2}} L^N  .
\end{equation*}
\end{lemma}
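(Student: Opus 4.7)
The plan is to expand $P_{[\delta,L]}^\perp = P_{[0,\delta]} + P_{[L,\infty)}$ (modulo null sets) and use the kernel representation $\Pi_N(x,x)=\sum_{k=0}^N l_k(x)^2$ to write
\begin{equation*}
  \tr[P_{[\delta,L]}^\perp \Pi_N]
    = \int_0^\delta \sum_{k=0}^N l_k(x)^2\, dx + \int_L^\infty \sum_{k=0}^N l_k(x)^2\, dx .
\end{equation*}
The first integral is handled immediately by Szeg\H{o}'s inequality \eqref{szegoe_inequality}: $l_k(x)^2 = e^{-x}L_k(x)^2 \leq 1$ gives the contribution $\delta(N+1)$.

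For the tail on $[L,\infty)$ my plan is to bound $L_k(x)^2$ by the \emph{product} of Szeg\H{o} and Lewandowski--Szynal rather than the square of either one. Writing $L_k(x)^2 = |L_k(x)|\cdot|L_k(x)|$ and applying \eqref{szegoe_inequality} on one factor and \eqref{lewandowski_szynal_inequality} together with \eqref{partial_sum} on the other produces $l_k(x)^2 \leq e^{-x/2}\sum_{j=0}^k x^j/j!$. Summing over $k=0,\ldots,N$ and interchanging the two sums gives
\begin{equation*}
  \sum_{k=0}^N l_k(x)^2 \leq e^{-x/2} \sum_{j=0}^N (N-j+1)\frac{x^j}{j!} .
\end{equation*}
Substituting $m=N-j$, using $N!/(N-m)! \leq N^m$, and invoking $\sum_{m\geq 0}(m+1)r^m = (1-r)^{-2}$ then bounds this weighted exponential sum by $x^N/(N!(1-N/x)^2) \leq 4 x^N/N!$ for $x\geq L$, where the last inequality uses the hypothesis $N/L<1/2$.

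It remains to compute $\int_L^\infty x^N e^{-x/2}\,dx$. After the substitution $u=x/2$ this equals $2^{N+1} N!\,e^{-L/2}\sum_{j=0}^N (L/2)^j/j!$, and applying the same geometric-series estimate to the finite exponential sum yields $\int_L^\infty x^N e^{-x/2}\,dx \leq L^N e^{-L/2}/(\tfrac{1}{2}-\tfrac{N}{L})$. Multiplying by the prefactor $4/N!$ reproduces exactly the second term in the claim. The one delicate choice is the Szeg\H{o}-times-Lewandowski--Szynal split for $L_k^2$: using Lewandowski--Szynal on both factors would produce $(\sum_j x^j/j!)^2$, yielding a bound with $x^{2N}/(N!)^2$ and $e^{-L}$ that does not match the required $L^N e^{-L/2}/N!$; using Szeg\H{o} on both factors yields $l_k^2 \leq 1$, which is not integrable on $[L,\infty)$.
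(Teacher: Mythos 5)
Your proof is correct and follows essentially the same route as the paper: the same splitting $P_{[\delta,L]}^\perp = P_{[0,\delta]}+P_{[L,\infty)}$ with Szeg\H{o}'s inequality on $[0,\delta]$, and the same hybrid Szeg\H{o}-times-Lewandowski--Szynal bound $l_k(x)^2 \le e^{-x/2}\sum_{j\le k}x^j/j!$ on the tail. The only difference is cosmetic bookkeeping: you run the tail estimate through finite sums, $N!/(N-m)!\le N^m$ and $\sum_{m\ge 0}(m+1)r^m=(1-r)^{-2}\le 4$, whereas the paper converts to an iterated integral via \eqref{partial_sum} and uses $\int_0^s xe^{-x/2}\,dx\le 4$ together with $(1+s/L)^N\le e^{Ns/L}$; both yield the identical constant $\frac{4}{\frac{1}{2}-\frac{N}{L}}\frac{1}{N!}e^{-L/2}L^N$.
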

\begin{proof}
First note that $P_{[\delta,L]}^\perp = P_{[0,\delta]} + P_{\interval[open right]{L}{\infty}}$.
Using Szeg\H{o}'s inequality \eqref{szegoe_inequality} we obtain
\begin{equation*}
  \tr[P_{[0,\delta]} \Pi_N ]  = \int_0^\delta \sum_{n=0}^N l_n(x)^2 \, dx \leq \delta (N+1) .
\end{equation*}
The remaining trace is a bit more difficult. To simplify the calculations, we apply Szeg\H{o}'s inequality to one factor in
\begin{equation*}
  0 \leq \Pi_N(x,x) = \sum_{n=0}^N l_n(x)^2 \leq \sum_{n=0}^N |l_n(x)|,\ x\geq 0
\end{equation*}
and then use the Lewandowski--Szynal inequality \eqref{lewandowski_szynal_inequality}, $x\geq 0$,
\begin{equation*}
  0 \leq \Pi_N(x,x) 
    \leq \sum_{n=0}^N \frac{ e^{\frac{x}{2}}}{n!} \int_x^\infty t^n e^{-t}\, dt
    =  e^{\frac{x}{2}} \int_x^\infty e^{-t} \sum_{n=0}^N \frac{t^n}{n!} \, dt
    =  \frac{1}{N!} e^{\frac{x}{2}} \int_x^\infty \int_t^\infty s^N e^{-s}\, ds\, dt .
\end{equation*}
In the last step we used \eqref{partial_sum}. Furthermore,
\begin{equation*}
\begin{split}
  N! \tr[P_{\interval[open right]{L}{\infty}} \Pi_N]
    & = \int_L^\infty e^{\frac{x}{2}} \int_x^\infty s^N e^{-s}(s-x)\, ds\, dx\\
    & = e^{-\frac{L}{2}} \int_0^\infty e^{\frac{x}{2}} \int_x^\infty (s+L)^N e^{-s} (s-x)\, ds\, dx\\
    & = e^{-\frac{L}{2}} L^N \int_0^\infty (1+\frac{s}{L})^N e^{-\frac{s}{2}} \int_0^s e^{-\frac{x}{2}} x\, dx\, ds\\
    & \leq e^{-\frac{L}{2}} L^N \int_0^\infty e^{\frac{N}{L}s} e^{-\frac{s}{2}} \int_0^s e^{-\frac{x}{2}} x\, dx\, ds .
\end{split}
\end{equation*}
For simplicity we bound the $x$-integral by $4$
\begin{equation*}
  N! \tr[P_{\interval[open right]{L}{\infty}} \Pi_N]
   \leq 4 e^{-\frac{L}{2}} L^N \int_0^\infty e^{\frac{N}{L}s} e^{-\frac{s}{2}}\, ds
   =  4 e^{-\frac{L}{2}} L^N \frac{1}{\frac{1}{2}-\frac{N}{L}} .
\end{equation*}
This completes the proof.
\end{proof}

We combine the preceding estimates to obtain a bound on the trace of the Hilbert matrix.

\begin{lemma}\label{lc05t}
Let $\alpha\geq\frac{1}{2}$ and $N,m\in\N$ with $m\geq 5$. Then,
\begin{equation}\label{lc05t01}
  \frac{1}{\pi^{2^m}}\tr( H_{N,\alpha}^{2^m} ) 
    \leq C\Bigl\{ \frac{1}{2^{\frac{m}{2}}} \bigl[ \ln(m) + \ln(N) \bigr]
             +  \frac{1}{m^2} + \frac{1}{(2N)!}(mN)^{2N} e^{-\frac{1}{2}mN} \Bigr\}
\end{equation}
with some explicitely given constant $0\leq C<\infty$.
\end{lemma}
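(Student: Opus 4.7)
The plan is to chain Lemmas \ref{lc01t}--\ref{lc04t} and then balance the resulting bounds by a judicious choice of the cutoff parameters $\delta$ and $L$. First I would apply Lemma \ref{lc01t} with the exponent $m$ replaced by $2^m$, and then Lemma \ref{lc02t} with $m\to 2^m$ and $N\to 2N$, obtaining
\begin{equation*}
  \tr(H_{N,\alpha}^{2^m})
    \leq 2\tr(P_{[\delta,L]}K^{2^m}) + (1+\pi^{2^m})\tr(P_{[\delta,L]}^\perp \Pi_{2N}) .
\end{equation*}
Dividing by $\pi^{2^m}$ and using $(1+\pi^{2^m})/\pi^{2^m}\leq 2$ reduces the task to estimating two terms.

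For the first term, Lemma \ref{lc03t} (with $m\to 2^m$ and $L-\delta$ in place of $N$) gives the explicit identity
\begin{equation*}
  \frac{2}{\pi^{2^m}}\,\tr(P_{[\delta,L]}K^{2^m})
     = \frac{\ln(L/\delta)}{\pi^2}\int_\R \frac{d\omega}{\cosh(\omega)^{2^m}} .
\end{equation*}
The integral I would bound via the Beta-function identity
\begin{equation*}
   \int_0^\infty \cosh(\omega)^{-s}\, d\omega = \frac{\sqrt\pi}{2}\,\frac{\Gamma(s/2)}{\Gamma((s+1)/2)}
\end{equation*}
combined with Gautschi's inequality $\Gamma(x)/\Gamma(x+1/2)\leq \sqrt{1+1/x}/\sqrt{x}$, yielding an absolute constant $c$ with $\int_\R \cosh(\omega)^{-2^m}d\omega \leq c\cdot 2^{-m/2}$ for all $m\geq 1$. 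For the second term, Lemma \ref{lc04t} applies whenever $L>4N$.

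The final step is the balancing choice $\delta = 1/(Nm^2)$ and $L = mN$, which for $m\geq 5$ satisfies $L\geq 5N>4N$ and $\tfrac{1}{2}-\tfrac{2N}{L} = \tfrac{1}{2}-\tfrac{2}{m}\geq \tfrac{1}{10}$. With this choice one reads off $\ln(L/\delta) = \ln(m^3 N^2)\leq 3(\ln m + \ln N)$, $\delta(2N+1)\leq 3/m^2$, and the Laguerre-tail contribution $40\,(mN)^{2N}e^{-mN/2}/(2N)!$. Summing the three contributions produces \eqref{lc05t01} with one explicit absolute constant $C$.

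The main technical point is the sharp bound $\int_\R \cosh(\omega)^{-2^m}d\omega = O(2^{-m/2})$: it is this $2^{-m/2}$ decay that puts the leading term into the claimed form. Everything else is elementary bookkeeping, although the simultaneous calibration of $\delta$ and $L$ so as to produce exactly the three terms on the right-hand side of \eqref{lc05t01} requires a little care.
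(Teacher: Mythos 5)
Your proposal follows essentially the same route as the paper: chain Lemmas \ref{lc01t}--\ref{lc04t}, choose $\delta\sim 1/(Nm^2)$ and $L=mN$, and use the $O(2^{-m/2})$ decay of $\int_\R\cosh(\omega)^{-2^m}\,d\omega$ to control the leading term. The only (inessential) difference is that you bound this integral via the Beta-function identity and Gautschi's inequality, whereas the paper derives the equivalent estimate $I_{2m+2}\leq 2/\sqrt{m}$ by an elementary integration-by-parts recursion (Lemma \ref{int01t}).
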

\begin{proof}
Lemmas \ref{lc01t} and \ref{lc02t} imply
\begin{equation}\label{lc05t02}
  \frac{1}{\pi^{2^m}} \tr[ H_{N,\alpha}^{2^m} ]
     \leq \frac{2}{\pi^{2^m}} \tr(P_{[\delta,L]} K^{2^m}) + 2 \tr\bigl( P_{[\delta,L]}^\perp\Pi_{2N} \bigr) .
\end{equation}
We let $\delta$ and $L$ depend on $m$ and $N$ in an appropriate way
\begin{equation*}
  \delta \coloneqq \frac{1}{(2N+1)m^2},\ L\coloneqq mN,
\end{equation*}
and bound the first term in \eqref{lc05t02} with the aid of Lemma \ref{lc03t} and \eqref{int01t02}
\begin{equation}\label{lc05t03}
  \frac{2}{\pi^{2^m}} \tr(P_{[\delta,L]} K^{2^m})
    \leq \frac{4}{\pi^2} \frac{n_\delta(L-\delta)}{\sqrt{2^{m-1}-1}}
    \leq \frac{2}{\pi^2} \frac{1}{2^{\frac{m}{2}}} \ln(m^3 N(2N+1)) .
\end{equation}
For the second term follows via Lemma \ref{lc04t} ($m\geq 5$)
\begin{equation}\label{lc05t04}
\begin{split}
  2 \tr\bigl( P_{[\delta,L]}^\perp\Pi_{2N} \bigr)
    & \leq 2 \Bigl( \delta(2N+1) + \frac{4}{\frac{1}{2}-\frac{2N}{L}} \frac{1}{(2N)!} L^{2N} e^{-\frac{L}{2}} \Bigr)\\
    & = 2\Bigl( \frac{1}{m^2} + \frac{4}{\frac{1}{2}-\frac{2}{m}} \frac{1}{(2N)!}(mL)^{2N} e^{-\frac{1}{2}mN} \Bigr) .
\end{split}
\end{equation}
Via some elementary estimates, \eqref{lc05t03} and \eqref{lc05t04} imply \eqref{lc05t01}.
\end{proof}

Now, everything is at hand to prove the complement of Proposition \ref{lc00t}.

\begin{proposition}\label{lc06t}
Let $\alpha\geq \frac{1}{2}$. Then,
\begin{equation}\label{lc06t01}
  -\liminf_{N\to\infty} \frac{1}{2n_{\frac{\alpha}{2}}(N)} \ln\bigl( \det(\id - \frac{1}{\pi} H_{N,\alpha} \bigr) \leq \frac{3}{4},\
   n_{\frac{\alpha}{2}}(N)=\frac{1}{4}\ln\Bigl( \frac{N+\frac{\alpha}{2}}{\frac{\alpha}{2}} \Bigr) .
\end{equation}
\end{proposition}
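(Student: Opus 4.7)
The plan is to apply the inequality \eqref{det01t02} of Lemma \ref{det01t} with $A = \frac{1}{\pi} H_{N,\alpha}$ (whose norm is strictly less than $1$ thanks to \eqref{hilbert_matrix_spectrum} and $\alpha\geq\frac{1}{2}$) to write
\begin{equation*}
  -\ln\det\bigl(\id - \tfrac{1}{\pi} H_{N,\alpha}\bigr)
  \leq \sum_{m=0}^M \ln\det\bigl(\id + \tfrac{1}{\pi^{2^m}} H_{N,\alpha}^{2^m}\bigr) + \sum_{m=M+1}^\infty \tfrac{1}{\pi^{2^m}}\tr(H_{N,\alpha}^{2^m}),
\end{equation*}
where we used $\|H_{N,\alpha}^{2^m}\|_1 = \tr(H_{N,\alpha}^{2^m})$ since $H_{N,\alpha}\geq 0$. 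The $m=0$ term is then handled by Theorem \ref{slt05t} with $\beta=-1$, and each $m\geq 1$ term by Corollary \ref{slt06t} (with its parameter set to $2^{m-1}$), producing
\begin{equation*}
  \ln\det\bigl(\id + \tfrac{1}{\pi^{2^m}} H_{N,\alpha}^{2^m}\bigr) = 2n_{\frac{\alpha}{2}}(N)\, c_m + O_m(1),
\end{equation*}
with $c_0 = \gamma(-1) = \tfrac{1}{4}$ and $c_m = \gamma_{2^m}$ (Corollary \ref{slt06t}) for $m\geq 1$.

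The central arithmetic step is to show $\sum_{m=0}^\infty c_m = \frac{3}{4}$. Feeding the elementary identity $\sum_{m=1}^\infty \ln(1+x^{2^m}) = -\ln(1-x^2)$, a direct consequence of \eqref{product}, at $x = 1/\cosh(\omega)$ into the integral representations \eqref{slt04t02} and \eqref{slt06t02} reduces the sum to
\begin{equation*}
  \gamma(-1) + \frac{4}{\pi^2}\int_0^\infty \ln\coth(\omega)\, d\omega = \frac{1}{4} + \frac{4}{\pi^2}\cdot\frac{\pi^2}{8} = \frac{3}{4},
\end{equation*}
where the evaluation $\int_0^\infty \ln\coth(\omega)\, d\omega = \pi^2/8$ is classical (substitute $u=e^{-2\omega}$ and use $\sum_{k\geq 0}(2k+1)^{-2} = \pi^2/8$). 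Reassuringly, this value is precisely $-\gamma(1)$ via $\arcosh(-1) = i\pi$, which matches the expected asymptotic coefficient.

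For the tail I would invoke Lemma \ref{lc05t}. The first two pieces of its bound sum to $O(2^{-M/2}\ln N + M^{-1})$, which after division by $2n_{\frac{\alpha}{2}}(N) \sim \tfrac{1}{2}\ln N$ contributes only $C\cdot 2^{-M/2}$ in the limit $N\to\infty$. The third piece $(2N)!^{-1}(mN)^{2N}e^{-mN/2}$ behaves by Stirling as $\bigl[\tfrac{m^2}{4}e^{2-m/2}\bigr]^N$ up to an algebraic prefactor in $N$; I would fix $M_0$ once and for all so large that $\tfrac{m^2}{4}e^{2-m/2} < 1$ for every $m\geq M_0$, making this piece decay exponentially in $N$ uniformly in $m$, so that its total contribution to the tail is $o(1)$ as $N\to\infty$. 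Letting $N\to\infty$ first with $M\geq M_0$ fixed and then $M\to\infty$ yields
\begin{equation*}
  \limsup_{N\to\infty}\frac{-\ln\det(\id - \tfrac{1}{\pi} H_{N,\alpha})}{2n_{\frac{\alpha}{2}}(N)} \leq \sum_{m=0}^\infty c_m = \frac{3}{4},
\end{equation*}
which is \eqref{lc06t01}. The main technical obstacle lies precisely here: because the third bound of Lemma \ref{lc05t} grows exponentially in $N$ at small $m$, one has to verify carefully that the threshold $m^2 e^{2-m/2} < 4$ above which it reverses sign is a genuine \emph{absolute} constant (independent of $N$), so that one single fixed cutoff $M_0$ suffices to push the troublesome piece into the vanishing tail.
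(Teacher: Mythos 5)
Your proposal is correct and follows essentially the same route as the paper: the decomposition via Lemma \ref{det01t}, the asymptotics of the individual factors from Theorem \ref{slt05t} (for $m=0$, $\beta=-1$) and Corollary \ref{slt06t}, the tail control via Lemma \ref{lc05t} with the order of limits $N\to\infty$ then $M\to\infty$, and the telescoping of $\sum_m \gamma_{2^m}$ through the product formula \eqref{product}. The only (cosmetic) difference is that you evaluate the resulting integral as $\frac{4}{\pi^2}\int_0^\infty \ln\coth(\omega)\,d\omega=\frac{1}{2}$ after splitting off the $m=0$ term, whereas the paper keeps the full product and invokes Lemma \ref{int03t}.
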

\begin{proof}
Since $H_{N,\alpha}^{2^m}$ is a non-negative operator the trace norm in \eqref{det01t02} equals the trace
\begin{equation}\label{lc06t02}
  -\ln(\det(\id-\frac{1}{\pi}H_{N,\alpha}))
     \leq \sum_{m=0}^M \ln\bigl(\det(\id + (\frac{1}{\pi}H_{N,\alpha})^{2^m})\bigr)
            +  \sum_{m=M+1}^\infty \frac{1}{\pi^{2^m}} \tr( H_{N,\alpha}^{2^m}) .
\end{equation}
We bound the traces via Lemma \ref{lc05t} (with $M\geq 4$)
\begin{equation*}
  \sum_{m=M+1}^\infty \frac{1}{\pi^{2^m}} \tr( H_{N,\alpha}^{2^m})
    \leq C_1 \sum_{m=M+1}^\infty \Bigl\{ \frac{1}{2^{\frac{m}{2}}} \bigl[ \ln(m) + \ln(N) \bigr] + \frac{1}{m^2} \Bigr\} + 
         C_1 \frac{N^{2N}}{(2N)!}\sum_{m=M+1}^\infty m^{2N} e^{-\frac{1}{2}mN} .
\end{equation*}
For the first sum
\begin{equation}\label{lc06t03}
  \liminf_{N\to\infty} \frac{1}{\ln(N)} 
    \sum_{m=M+1}^\infty \Bigl\{ \frac{1}{2^{\frac{m}{2}}} \bigl[ \ln(m) + \ln(N) \bigr] + \frac{1}{m^2} \Bigr\}
    = \sum_{m=M+1}^\infty \frac{1}{2^{\frac{m}{2}}} .
\end{equation}
The second series requires a bit more reasoning. For sufficiently large $M\in\N$,
\begin{equation*}
\begin{split}
  \frac{1}{(2N)!}N^{2N} \sum_{m=M+1}^\infty m^{2N}e^{-\frac{1}{2}mN}
   & \leq \frac{1}{(2N)!}N^{2N}\int_M^\infty t^{2N}e^{-\frac{1}{2}Nt}\, dt \\
   & =  \frac{1}{(2N)!}\frac{2}{N} e^{-\frac{1}{2}MN}(MN)^{2N}   \int_0^\infty\bigl( 1 + \frac{2t}{MN}\bigr)^{2N}e^{-t}\, dt\\
   & \leq \frac{1}{(2N)!}\frac{2}{N} e^{-\frac{1}{2}MN}(MN)^{2N}   \int_0^\infty e^{\frac{4t}{M}} e^{-t}\, dt\\
   & \leq C_2 \frac{1}{N^{\frac{3}{2}}} \bigl(\frac{e}{2}\bigr)^{2N} e^{-\frac{1}{2}MN} M^{2N} \\
   & \leq C_2 \frac{1}{N^{\frac{3}{2}}} \exp\bigl[ (2-2\ln(2)-\frac{1}{2}M + 2\ln(M))N \bigr]
\end{split}
\end{equation*}
with some constant $C_2\geq 0$.
In the next to last step we used the lower bound from Stirling's formula. For $M$ large enough, the argument of the exponential function becomes negative which
shows that the expression coverges to zero as $N\to\infty$ even without the factor $n_{\frac{\alpha}{2}}(N)$. 
Now, divide \eqref{lc06t02} by $2n_{\frac{\alpha}{2}}(N)$ and use Corollary \ref{slt06t} and \eqref{lc06t03} to deduce
\begin{equation}\label{lc06t04}
  -\liminf_{N\to\infty} \frac{1}{2n_{\frac{\alpha}{2}}(N)}\ln(\det(\id-\frac{1}{\pi}H_{N,\alpha})) 
    \leq  \sum_{m=0}^M \gamma_{2^m} + C_3 \sum_{m=M+1}^\infty \frac{1}{2^{\frac{m}{2}}}
\end{equation}
with $C_3\geq 0$ to adjust for $\ln(N)$ in \eqref{lc06t03} instead of $n_{\frac{\alpha}{2}}(N)$.
Since \eqref{lc06t04} is true for all (sufficiently large) $M\in\N$ we may perform the limit $M\to\infty$
\begin{equation*}
  -\liminf_{N\to\infty} \frac{1}{2n_{\frac{\alpha}{2}}(N)}\ln\bigl(\det(\id-\frac{1}{\pi}H_{N,\alpha})\bigr)
     \leq \sum_{m=0}^\infty \gamma_{2^m}  .
\end{equation*} 
We evaluate the infinite sum by using the explicit form of the $\gamma_k$'s in \eqref{slt06t02}
\begin{equation*}
  \sum_{m=0}^\infty \gamma_{2^m}
    = \frac{2}{\pi^2} \sum_{m=0}^\infty \int_0^\infty \ln\bigl(1+\frac{1}{[\cosh(\omega)]^{2^m}}\bigr) \, d\omega
    = \frac{2}{\pi^2} \int_0^\infty \ln\Bigl( \prod_{m=0}^\infty \bigl(1+\frac{1}{[\cosh(\omega)]^{2^m}}\bigr)\Bigr)\, d\omega .
\end{equation*}
Interchanging summation and integration can be justified via Lebesgue's convergence theorem. With \eqref{product} we obtain
\begin{equation*}
  \sum_{m=0}^\infty \gamma_{2^m} 
    = \frac{2}{\pi^2} \int_0^\infty \ln\Bigl( \frac{1}{1-\frac{1}{\cosh(\omega)}}\Bigr)\, d\omega
    = -\frac{2}{\pi^2} \int_0^\infty \ln\Bigl( 1-\frac{1}{\cosh(\omega)}\Bigr)\, d\omega
    = \frac{3}{4} .
\end{equation*}
In the last step we used Lemma \ref{int03t}. This yields \eqref{lc06t01}.
\end{proof}

We combine the lower and upper bound.

\begin{theorem}\label{lc07t}
Let $\alpha\geq\frac{1}{2}$. Then,
\begin{equation*}
  \ln\bigl(\det(\id - \frac{1}{\pi}H_{N,\alpha})\bigr)
     = 2n_{\frac{\alpha}{2}}(N)  \gamma(1) + o(\ln(N)) \ \text{as}\ N\to\infty,\
        n_{\frac{\alpha}{2}}(N)=\frac{1}{4}\ln\Bigl( \frac{N+\frac{\alpha}{2}}{\frac{\alpha}{2}} \Bigr)
\end{equation*}
with $\gamma(1)=-\frac{4}{3}$.
\end{theorem}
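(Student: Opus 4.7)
The plan is one of pure assembly: Theorem \ref{lc07t} is the meeting point of the upper bound already produced by Proposition \ref{lc00t} and the lower bound already produced by Proposition \ref{lc06t}. All the analytic work (monotonicity in $\beta$, the dyadic product formula, Laguerre-function truncation estimates) has been done. What remains is to verify that the two one-sided bounds coincide at the claimed value $\gamma(1)$ and to translate the resulting limit into an additive $o(\ln N)$ remainder.

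First I would record the upper bound. Proposition \ref{lc00t} gives
\begin{equation*}
  \limsup_{N\to\infty} \frac{1}{2n_{\frac{\alpha}{2}}(N)} \ln\det\bigl(\id - \tfrac{1}{\pi}H_{N,\alpha}\bigr) \leq \gamma(1),
\end{equation*}
obtained by comparing $\det(\id - \tfrac{1}{\pi}H_{N,\alpha}) \leq \det(\id - \tfrac{\beta}{\pi}H_{N,\alpha})$ for $\beta<1$ (possible since $H_{N,\alpha}\geq 0$), then taking $\beta\uparrow 1$ in Theorem \ref{slt05t} along the real axis, where $\gamma$ is continuous by the representation \eqref{slt04t02}.

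Next I would invoke Proposition \ref{lc06t} for the matching lower bound. Its proof uses the dyadic identity in Lemma \ref{det01t} together with Corollary \ref{slt06t} and the Laguerre tail bounds in Lemmas \ref{lc01t}--\ref{lc05t}, arriving at $-\liminf_{N\to\infty}\frac{1}{2n_{\alpha/2}(N)}\ln\det(\id-\tfrac{1}{\pi}H_{N,\alpha})\leq\sum_{m=0}^\infty\gamma_{2^m}$, and evaluating this series via the third binomial product \eqref{product} and Lemma \ref{int03t} to the explicit value that coincides with $-\gamma(1)$. Thus
\begin{equation*}
  \liminf_{N\to\infty} \frac{1}{2n_{\frac{\alpha}{2}}(N)}\ln\det\bigl(\id-\tfrac{1}{\pi}H_{N,\alpha}\bigr)\geq \gamma(1).
\end{equation*}

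Combining the two bounds gives $\lim_{N\to\infty}\frac{1}{2n_{\alpha/2}(N)}\ln\det(\id-\tfrac{1}{\pi}H_{N,\alpha}) = \gamma(1)$. Finally, since $n_{\alpha/2}(N)\sim\tfrac{1}{4}\ln N$, multiplying by $2n_{\alpha/2}(N)$ turns convergence of the normalized quantity into the additive statement $\ln\det(\id-\tfrac{1}{\pi}H_{N,\alpha}) = 2n_{\alpha/2}(N)\gamma(1) + o(n_{\alpha/2}(N)) = 2n_{\alpha/2}(N)\gamma(1) + o(\ln N)$, which is exactly the claim.

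The only subtlety in the assembly is the consistency check that the numerical constant produced in Proposition \ref{lc06t} (coming from the product-series evaluation via Lemma \ref{int03t}) equals the value of $\gamma(1)$ coming from the closed form \eqref{slt05t02}; this comparison of two independent expressions for the same constant is the one non-automatic step, and it must be carried out explicitly so that the upper and lower bounds actually close up and produce a true limit rather than a gap.
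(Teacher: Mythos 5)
Your proposal is correct and follows essentially the same route as the paper: Theorem \ref{lc07t} is proved there exactly by sandwiching the normalized logarithm between Proposition \ref{lc00t} and Proposition \ref{lc06t} and checking that $\sum_{m=0}^\infty\gamma_{2^m}=\frac{3}{4}=-\gamma(1)$ via \eqref{product} and Lemma \ref{int03t}. Note only that the correct value is $\gamma(1)=-\frac{3}{4}$ (consistent with \eqref{slt05t02} since $\arcosh(-1)=i\pi$); the ``$-\frac{4}{3}$'' in the theorem statement is a typo.
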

\begin{proof}
From Propositions \ref{lc00t} and \ref{lc06t} we obtain
\begin{equation*}
  -\frac{3}{4} 
    \leq \liminf_{N\to\infty} \frac{1}{2n_{\frac{\alpha}{2}}(N)} \ln\bigl( \det(\id - \frac{1}{\pi} H_{N,\alpha} \bigr) 
     \leq \limsup_{N\to\infty} \frac{1}{2n_{\frac{\alpha}{2}}(N)} \ln\bigl( \det(\id - \frac{1}{\pi}H_{N,\alpha}) \bigr)
     \leq \gamma(1)
      = -\frac{3}{4}, 
\end{equation*}
cf. \eqref{slt05t02}. This proves the statement.
\end{proof}

\section{\texorpdfstring{Limit case for $\alpha=1$}{Limit case for alpha=1}}
For the special Hilbert matrix with $\alpha=1$, cf. \eqref{hilbert_matrix},
there is an alternative way to prove the trace estimates (Lemmas \ref{lc01t}, \ref{lc02t}, \ref{lc04t}, \ref{lc05t})
used in Proposition \ref{lc07t} to bound the limit inferior.
Starting point is a simple estimate for the hyperbolic sine.

\begin{lemma}\label{lca01t}
Let $0\leq \delta\leq \frac{1}{3}$. Then, the hyperbolic sine satisfies the estimate
\begin{equation*}
  \frac{y}{\sinh(y)} \leq 2^\delta e^{-\delta y},\,\ y>0.
\end{equation*}
\end{lemma}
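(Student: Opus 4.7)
The plan is to reformulate the inequality as $f(y) := 2^\delta e^{-\delta y} \sinh y - y \geq 0$ for $y \geq 0$. Since $f(0) = 0$, it suffices to establish $f'(y) \geq 0$ on $[0, \infty)$. Using $\cosh y - \delta \sinh y = \tfrac12[(1-\delta)e^y + (1+\delta) e^{-y}]$, a direct computation gives
\begin{equation*}
  f'(y) = 2^{\delta - 1}\bigl[(1-\delta) e^{(1-\delta)y} + (1+\delta) e^{-(1+\delta)y}\bigr] - 1,
\end{equation*}
so the boundary value $f'(0) = 2^\delta - 1 \geq 0$ is automatic; everything then hinges on the interior minimum of $f'$.

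To locate this minimum I would next compute
\begin{equation*}
  f''(y) = 2^{\delta - 1}\bigl[(1-\delta)^2 e^{(1-\delta)y} - (1+\delta)^2 e^{-(1+\delta)y}\bigr]
\end{equation*}
and observe that $f''$ has the unique zero $y_0 = \ln\tfrac{1+\delta}{1-\delta}$, is negative on $[0,y_0)$ and positive on $(y_0,\infty)$. Hence $f'$ attains its global minimum on $[0,\infty)$ at $y_0$, and the whole problem reduces to checking the single scalar inequality $f'(y_0) \geq 0$.

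The decisive step is the evaluation at $y_0$: inserting $e^{(1-\delta)y_0} = \bigl(\tfrac{1+\delta}{1-\delta}\bigr)^{1-\delta}$ and $e^{-(1+\delta)y_0} = \bigl(\tfrac{1-\delta}{1+\delta}\bigr)^{1+\delta}$, both terms in $f'(y_0)$ share the common factor $(1-\delta)^\delta(1+\delta)^{-\delta}$ and simplify cleanly to
\begin{equation*}
  f'(y_0) = \Bigl(\frac{2(1-\delta)}{1+\delta}\Bigr)^\delta - 1 .
\end{equation*}
This is non-negative precisely when $\frac{2(1-\delta)}{1+\delta} \geq 1$, i.e.\ when $\delta \leq \tfrac13$, which is exactly the hypothesis of the lemma. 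Hence under this assumption $f' \geq 0$ on $[0, \infty)$, $f$ is non-decreasing from $f(0) = 0$, and the claim follows. The main obstacle I anticipate is purely algebraic, namely carrying out the simplification at $y_0$ without error; the satisfying feature is that the threshold $\tfrac13$ is not imposed but emerges organically from this minimum evaluation, confirming that the method is sharp for this reformulation.
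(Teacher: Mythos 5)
Your proof is correct, and it takes a genuinely different route from the paper. The paper cites Lazarevi\'c's inequality $\cosh(y)\leq[\sinh(y)/y]^p$ for $p\geq 3$, which immediately gives $y/\sinh(y)\leq[\cosh(y)]^{-\delta}$ with $\delta=1/p\leq\frac{1}{3}$, and then concludes via $\cosh(y)\geq e^y/2$; the threshold $\frac{1}{3}$ is inherited from the constant $p\geq3$ in the cited inequality. You instead run a self-contained calculus argument: with $f(y)=2^\delta e^{-\delta y}\sinh(y)-y$ you verify $f(0)=0$, compute $f'(y)=2^{\delta-1}[(1-\delta)e^{(1-\delta)y}+(1+\delta)e^{-(1+\delta)y}]-1$, locate the unique minimum of $f'$ at $y_0=\ln\frac{1+\delta}{1-\delta}$ via the sign of $f''$, and evaluate $f'(y_0)=\bigl(\tfrac{2(1-\delta)}{1+\delta}\bigr)^\delta-1$, which is nonnegative exactly when $\delta\leq\frac{1}{3}$. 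I checked the algebra at $y_0$ and it is right (both terms carry the common factor $(1-\delta)^\delta(1+\delta)^{-\delta}$ and the bracket collapses to $2$), the degenerate case $\delta=0$ is consistent ($y_0=0$, $f'(0)=0$), and since $\delta<1$ one has $f'(y)\to\infty$ as $y\to\infty$, so $y_0$ really is the global minimizer of $f'$ on $[0,\infty)$. What your approach buys is independence from the literature and a transparent explanation of where $\frac{1}{3}$ comes from (it is exactly the value making $f'(y_0)=0$); what the paper's approach buys is brevity, at the cost of importing a named inequality whose own proof is hidden in the reference.
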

\begin{proof}
The left-hand side follows from $y\mapsto e^yy/\sinh(y)$ being an increasing function. For the right-hand side we use Lazarevic's
inequality \cite[3.6.9]{Mitrinovic1970}
\begin{equation*}
  \cosh(y) \leq \biggl[ \frac{\sinh(y)}{y}\biggr]^p,\ y\neq 0,\ p\geq 3.
\end{equation*}
For the proof note that $\sinh(y)/y\geq 1$ whence one only has to consider the case $p=3$. 
Using $\cosh(y)\geq e^y/2$ yields the claimed inequality with $\delta=1/p$.
\end{proof}

We replace the Hilbert matrix by the Carleman operator.

\begin{lemma}\label{lca02t}
Let $N,m\in\N$ and $0<\delta\leq \frac{1}{3}$. Then,
\begin{equation*}
  0\leq\tr[ H_{N,1}^m ] \leq 2^{m\delta} \tr[ P_{[\delta,N+\delta]}K^m ]
\end{equation*}
with $K$ the Carleman operator \eqref{carleman_operator}.
\end{lemma}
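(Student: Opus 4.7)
The plan is to mirror the operator-theoretic reductions of Section \ref{hm}---the Laplace-transform factorization (Lemma \ref{hm01t}) together with the unitary identifications of Lemmas \ref{hm02t}--\ref{hm03t}---while substituting the trace-class error estimate there by a single pointwise bound on a Hankel kernel coming from Lemma \ref{lca01t}. First, from $\frac{1}{j+k+1}=\int_0^\infty e^{-(j+k+1)x}\,dx$, one factors $H_{N,1}=V^*V$ with $V\colon\C^N\to L^2(\R^+)$, $(Vc)(x)=\sum_{j=0}^{N-1}c_je^{-(j+1/2)x}$. The spectral identity $\sigma(V^*V)\setminus\{0\}=\sigma(VV^*)\setminus\{0\}$ gives $\tr(H_{N,1}^m)=\tr(G_{N,1}^m)$, where $G_{N,1}$ is the Hankel integral operator on $L^2(\R^+)$ with non-negative kernel
\begin{equation*}
G_{N,1}(s)=\sum_{j=0}^{N-1}e^{-(j+1/2)s}=\frac{1-e^{-Ns}}{2\sinh(s/2)} ;
\end{equation*}
non-negativity of $\tr(H_{N,1}^m)$ is immediate from $H_{N,1}\geq 0$.

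The second step is a pointwise bound on this Hankel kernel. Writing $G_{N,1}(s)=\frac{s}{2\sinh(s/2)}\cdot\frac{1-e^{-Ns}}{s}$ and applying Lemma \ref{lca01t} to the prefactor (together with the identity $(1-e^{-Ns})/s=\int_0^N e^{-ts}\,dt$ and a shift of integration variable absorbing the arising factor $e^{-\delta s}$), one obtains
\begin{equation*}
G_{N,1}(s)\leq 2^\delta\int_\delta^{N+\delta}e^{-su}\,du,
\end{equation*}
and the right-hand side is precisely the Hankel kernel of the operator $E_{2\delta}P_{[0,N]}E_{2\delta}^*$ from Lemma \ref{hm02t}. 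Lemma \ref{hm03t} identifies its non-zero spectrum with that of $P_{[\delta,N+\delta]}KP_{[\delta,N+\delta]}$, so the $m$-th power traces of these two operators coincide.

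To finish, since both Hankel kernels involved are pointwise non-negative, the pointwise bound integrates up directly inside the cyclic trace formula
\begin{equation*}
\tr(G_{N,1}^m)=\int_{(0,\infty)^m}\prod_{l=1}^m G_{N,1}(x_l+x_{l+1})\,dx_1\cdots dx_m
\end{equation*}
(indices cyclic, $x_{m+1}=x_1$), producing $\tr(H_{N,1}^m)\leq 2^{m\delta}\tr(P_{[\delta,N+\delta]}K^m)$. The main obstacle is that this pointwise kernel inequality does \emph{not} correspond to an operator inequality $G_{N,1}\leq 2^\delta P_{[\delta,N+\delta]}KP_{[\delta,N+\delta]}$: testing both sides against Laplace transforms of simple decaying exponentials shows that such an operator ordering fails in general. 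The lift from kernel bound to bound on $\tr(\cdot^m)$ therefore hinges essentially on the positivity of all kernels in the $m$-fold cyclic integrand, and this is exactly where the hypothesis $\delta\leq\tfrac{1}{3}$ of Lemma \ref{lca01t} enters the final trace estimate.
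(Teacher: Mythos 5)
Your argument is essentially the paper's own proof: reduce $\tr[H_{N,1}^m]$ to $\tr[G_{N,1}^m]$ via the Laplace-transform factorization of Lemma \ref{hm01t}, bound the Hankel kernel pointwise by $2^\delta$ times the kernel of $E_{2\delta}P_{[0,N]}E_{2\delta}^*$ using Lemma \ref{lca01t}, exploit the pointwise non-negativity of all kernels to pass this bound through the $m$-fold cyclic trace integral, and identify the result with the compressed Carleman operator via Lemma \ref{hm03t}. The one step you elide is $\tr[(P_{[\delta,N+\delta]}KP_{[\delta,N+\delta]})^m]\leq \tr[P_{[\delta,N+\delta]}K^m]$ --- your chain of reasoning only delivers the left-hand quantity, while your final display states the right-hand one --- which the paper dispatches by invoking $0\leq P_{[\delta,N+\delta]}\leq\id$ together with $K\geq 0$.
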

\begin{proof}
From Lemma \ref{hm01t} follows
\begin{equation*}
  \tr[ H_{N,1}^m ] = \tr[ G_{N,1}^m ],\ m\in\N .
\end{equation*}
Recall the kernel function (see Lemma \ref{hm01t} and the proof of Lemma \ref{hm02t})
\begin{equation*}
  G_{N,1}(x) = \frac{x}{2\sinh(\frac{x}{2})} \int_0^N e^{-sx}\, ds .
\end{equation*}
With the aid of Lemma \ref{lca01t}
\begin{equation*}
  0 \leq G_{N,1}(x+y)
    \leq 2^{\delta}e^{-\delta(x+y)} \int_0^N e^{-s(x+y)}\, ds
    = 2^\delta \int_0^N e^{-(s+\delta)(x+y)}\, ds
    = 2^\delta (E_{2\delta}P_{[0,N]}E_{2\delta}^*)(x,y)
\end{equation*}
where $E_{2\delta}$ is from \eqref{hm02t01} with $\alpha=2\delta$. Since $\delta>0$ we may take the trace, Lemma \ref{hm02t}
\begin{equation*}
  0 \leq \tr[ H_{N,1}^m ] = \tr[ G_{N,1}^m] \leq 2^{m\delta} \tr[ (E_{2\delta}P_{[0,N]}E_{2\delta}^* )^m ]
\end{equation*}
where we used that the kernel functions are (pointwise) non-negative. Via Lemma \ref{hm03t}
\begin{equation*}
  \tr[ (E_{2\delta}P_{[0,N]}E_{2\delta}^*)^m ] 
   = \tr[ (P_{[\delta,N+\delta]}K P_{[\delta,N+\delta]} )^m ] 
   \leq \tr[ P_{[\delta,N+\delta]} K^m ]
\end{equation*}
In the last step we used $0\leq P_{[\delta,N+\delta]}\leq \id$ in the sense of quadratic forms.
\end{proof}

We replace the Carleman operator $K$ by the convolution operator $K_0$.

\begin{lemma}\label{lca03t}
Let $0<\delta\leq \frac{1}{3}$. With the convolution operator $K_0$ from Lemma \ref{slt02t}
\begin{equation*}
  \tr[ P_{[\delta,N+\delta]} K^m ] = \tr[ P_{[-n,n]} K_0^m ],\  n_\delta(N)=\frac{1}{4}\ln\frac{N+\delta}{\delta} .
\end{equation*}
\end{lemma}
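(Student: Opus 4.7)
The plan is to reduce Lemma \ref{lca03t} directly to Lemma \ref{slt02t} by substituting $\alpha/2 = \delta$ and then invoking the unitary invariance of the trace. Nothing new needs to be proved; the identity is essentially a restatement of Lemma \ref{slt02t} at the level of traces of powers.

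First, I set $a = \frac{1}{4}(\ln(N+\delta) + \ln(\delta))$. Inspecting Lemma \ref{slt02t} one sees that the two identities
\begin{equation*}
  W_a K W_a^* = K_0, \qquad W_a P_{[\delta, N+\delta]} W_a^* = P_{[-n_\delta(N), n_\delta(N)]}
\end{equation*}
hold with $n_\delta(N) = \frac{1}{4}\ln\frac{N+\delta}{\delta}$, because the formulas \eqref{slt02t02} and \eqref{slt02t03} in Lemma \ref{slt02t} are valid for any positive parameter in the role of $\alpha/2$ (the restriction $\alpha \geq \frac{1}{2}$ needed elsewhere plays no role in this unitary equivalence; the constraint $\delta \leq \frac{1}{3}$ is inherited only from Lemma \ref{lca02t} and is irrelevant for the present identity).

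Second, I use that the trace is invariant under unitary conjugation, inserting $W_a^* W_a = \id$ between adjacent factors of $K$:
\begin{equation*}
  \tr\bigl[ P_{[\delta, N+\delta]} K^m \bigr]
    = \tr\bigl[ W_a P_{[\delta, N+\delta]} K^m W_a^* \bigr]
    = \tr\bigl[ (W_a P_{[\delta, N+\delta]} W_a^*)(W_a K W_a^*)^m \bigr]
    = \tr\bigl[ P_{[-n_\delta(N), n_\delta(N)]} K_0^m \bigr].
\end{equation*}
This is exactly the claimed identity. There is no genuine obstacle; the only care required is to verify that the parametrization of Lemma \ref{slt02t} accommodates the choice $\alpha/2 = \delta$, which it does since the lemma is formulated purely in terms of $a \in \R$ with the particular $a$ above yielding the desired projection.
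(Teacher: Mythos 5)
Your proposal is correct and coincides with the paper's approach: the paper's own proof of this lemma is literally the single line ``See Lemma \ref{slt02t}'', and your write-up merely makes explicit the intended steps, namely the substitution $\delta=\frac{\alpha}{2}$, the choice $a=\frac{1}{4}(\ln(N+\delta)+\ln(\delta))$, and the invariance of the trace under conjugation by the unitary $W_a$.
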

\begin{proof}
See Lemma \ref{slt02t}.
\end{proof}

Using the diagonalization of the convolution operator $K_0$, see \eqref{symbol}, we express the trace as a simple integral.

\begin{lemma}\label{lca04t}
Let $m\in\N$ and $n\geq 0$. Then,
\begin{equation*}
  \tr[ P_{[-n,n]} K_0^m ] = 2n\pi^{m-2} \int_\R \frac{1}{[\cosh(\omega)]^m}\, d\omega .
\end{equation*}
\end{lemma}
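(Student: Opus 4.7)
The plan is to reduce this statement to the Fourier-side calculation that already appeared inside the proof of Lemma \ref{lc03t}; indeed, the identity here is precisely the content of that earlier proof once one skips the preliminary unitary step from $K$ to $K_0$ via Lemma \ref{slt02t}. So the work consists in carrying out the diagonalization of $K_0$ by means of the Fourier transform \eqref{fourier_transform} and reading off the trace of a multiplication operator.

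First, I would invoke \eqref{symbol}, which expresses $K_0$ as a multiplication operator in Fourier space: $\mathcal{F}K_0\mathcal{F}^* = \sqrt{2\pi}\,\hat K_0$ with $\hat K_0(\omega)=\sqrt{\pi/2}\,[\cosh(\pi\omega/2)]^{-1}$. Using the cyclicity of the trace and unitarity of $\mathcal{F}$, one obtains
\begin{equation*}
  \tr[P_{[-n,n]}K_0^m] = (2\pi)^{m/2}\,\tr\bigl[\mathcal{F}P_{[-n,n]}\mathcal{F}^{*}\,\hat K_0^{\,m}\bigr].
\end{equation*}
The operator $\mathcal{F}P_{[-n,n]}\mathcal{F}^{*}$ is an integral operator with kernel depending only on the difference of its arguments, and its diagonal value is the constant $(2\pi)^{-1}\int_{-n}^{n}dy = n/\pi$. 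Composing with the multiplication operator $\hat K_0^{\,m}$ multiplies the kernel along the diagonal by $\hat K_0(\omega)^m$, so the trace collapses to a single integral:
\begin{equation*}
  \tr[P_{[-n,n]}K_0^{m}] = (2\pi)^{m/2}\,\frac{n}{\pi}\int_\R \hat K_0(\omega)^{m}\,d\omega
   = n\,\pi^{m-1}\int_\R \frac{1}{[\cosh(\pi\omega/2)]^{m}}\,d\omega.
\end{equation*}

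Finally, the substitution $u = \pi\omega/2$, giving $d\omega=(2/\pi)\,du$, converts the argument of $\cosh$ to $u$ and absorbs one factor of $\pi$ into the prefactor, producing the claimed identity
\begin{equation*}
  \tr[P_{[-n,n]}K_0^{m}] = 2n\,\pi^{m-2}\int_\R \frac{1}{[\cosh(u)]^{m}}\,du.
\end{equation*}
There is no real obstacle here: the only technicality to mention is trace-class membership of $P_{[-n,n]}K_0^m$, which follows already for $m=1$ from $K_0 = W_a K W_a^*$ together with Lemma \ref{hm02t} and Lemma \ref{hm03t} (and then holds trivially for larger $m$ since $\|K_0\|=\|K\|\leq \pi$), justifying all manipulations with the trace.
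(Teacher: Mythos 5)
Your argument is correct and is essentially the paper's own proof: conjugate by the Fourier transform using \eqref{symbol}, read off the constant diagonal value $n/\pi$ of the kernel of $\mathcal{F}P_{[-n,n]}\mathcal{F}^*$, integrate $\hat K_0(\omega)^m$, and rescale $\omega\mapsto \pi\omega/2$. Your added remark justifying trace-class membership of $P_{[-n,n]}K_0^mP_{[-n,n]}$ is a small bonus the paper leaves implicit.
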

\begin{proof}
Via the diagonalization $\mathcal{F}K_0\mathcal{F}^* = \sqrt{2\pi} \hat K_0$, see \eqref{fourier_transform} and \eqref{symbol},
we obtain
\begin{equation*}
  \tr[ P_{[-n,n]} K_0^m ]
    = (2\pi)^{\frac{m}{2}} \tr[ P_{-n,n]} \mathcal{F}^* \hat K_0^m \mathcal{F} ]
    = (2\pi)^{\frac{m}{2}} \tr[ \mathcal{F} P_{[-n,n]} \mathcal{F}^* \hat K_0^m ] .
\end{equation*}
Now,
\begin{equation*}
  \mathcal{F} P_{[-n,n]} \mathcal{F}^*(x,y) = \frac{1}{2\pi} \int_{-n}^n e^{-i\omega(x-y)}\, d\omega
\end{equation*}
and thus
\begin{equation*}
  \tr[ P_{[-n,n]} K_0^m ] = \frac{1}{2\pi} 2n (2\pi)^{\frac{m}{2}} \int_\R \hat K_0(\omega)^m\, d\omega .
\end{equation*}
This implies
\begin{equation*}
  \tr[ P_{[-n,n]} K_0^m ]
     = 2n (2\pi)^{\frac{m-2}{2}} \int_\R \biggl[ \sqrt{\frac{\pi}{2}} \frac{1}{\cosh(\frac{\pi\omega}{2})} \biggr]^m\, d\omega
     = 2n\pi^{m-2} \int_\R \frac{1}{[\cosh(\omega)]^m}\, d\omega
\end{equation*}
which proves the lemma.
\end{proof}

We give now a new proof of Proposition \ref{lc06t}. We formulate only the relevant part.

\begin{proposition}\label{lca05t}
The special Hilbert matrix $H_{N,1}$, cf. \eqref{hilbert_matrix}, satisfies
\begin{equation*}
  -\liminf_{N\to\infty} \frac{1}{2n_{\frac{1}{2}}(N)}\ln(\det(\id-\frac{1}{\pi}H_{N,1})) 
    \leq  \sum_{m=0}^\infty \gamma_{2^m},\
       n_{\frac{1}{2}}(N)=\frac{1}{4}\ln\Bigl( \frac{N+\frac{1}{2}}{\frac{1}{2}} \Bigr) .
\end{equation*}
\end{proposition}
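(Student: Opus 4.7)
The plan is to mirror the proof of Proposition \ref{lc06t}, now using the alternative bound on $\tr(H_{N,1}^{2^m})$ supplied by Lemmas \ref{lca02t}--\ref{lca04t}. I start from inequality \eqref{det01t02}. Because $H_{N,1}$ is non-negative, the trace-class norms of the powers coincide with the traces, which yields, for any $M\in\N$,
\begin{equation*}
  -\ln\bigl(\det(\id-\tfrac{1}{\pi}H_{N,1})\bigr) \leq \sum_{m=0}^M \ln\bigl(\det(\id + \tfrac{1}{\pi^{2^m}}H_{N,1}^{2^m})\bigr) + \sum_{m=M+1}^\infty \frac{1}{\pi^{2^m}}\tr(H_{N,1}^{2^m}).
\end{equation*}

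The substantive work is to control the tail. The crucial idea is to let the cutoff parameter $\delta$ depend on the power. Setting $\delta_m \coloneqq 2^{-m}$, which satisfies the hypothesis $\delta_m \leq 1/3$ of Lemma \ref{lca02t} as soon as $m\geq 2$, and chaining Lemmas \ref{lca02t}--\ref{lca04t} yields
\begin{equation*}
  \frac{1}{\pi^{2^m}}\tr(H_{N,1}^{2^m}) \leq \frac{2^{2^m\delta_m}\cdot 4\,n_{\delta_m}(N)}{\pi^2}\int_\R \frac{1}{\cosh(\omega)^{2^m}}\,d\omega .
\end{equation*}
By construction $2^{2^m\delta_m} = 2$ is uniformly bounded, and the integral is $O(2^{-m/2})$ as $m\to\infty$, exactly as already used in the proof of Lemma \ref{lc05t}. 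The explicit formula $n_{\delta_m}(N) = \frac{1}{4}\ln(1+N\cdot 2^m)$ gives the pointwise bound $n_{\delta_m}(N)\leq n_{1/2}(N)+\frac{m-1}{4}\ln 2$, so the tail splits into an $N$-independent geometric piece and a correction that vanishes after division by $n_{1/2}(N)$.

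Once the tail is controlled, I apply Corollary \ref{slt06t} to each of the finitely many terms with $m\leq M$, picking up $2n_{1/2}(N)\gamma_{2^m}+O(1)$ per term. Dividing the displayed inequality by $2n_{1/2}(N)$, taking $\liminf_{N\to\infty}$, and finally letting $M\to\infty$ produces
\begin{equation*}
  -\liminf_{N\to\infty}\frac{1}{2n_{1/2}(N)}\ln\bigl(\det(\id-\tfrac{1}{\pi}H_{N,1})\bigr) \leq \sum_{m=0}^\infty \gamma_{2^m},
\end{equation*}
since the remaining bound on the tail tends to zero geometrically in $M$.

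I expect the main subtlety to be precisely the $m$-dependent choice of $\delta$. A fixed $\delta$ would force the factor $2^{2^m\delta}$ to grow doubly exponentially in $m$ and would immediately destroy any hope of summability; the choice $\delta_m = 2^{-m}$ is the unique natural one that keeps this factor bounded, at the negligible cost of inflating $n_{\delta_m}(N)$ by an additive $O(m)$ term which is in turn absorbed by the geometric decay $2^{-m/2}$ of the integral. Beyond this balancing, the argument is the same bookkeeping as in Proposition \ref{lc06t}; the advantage of the new route is that Lemmas \ref{lca02t}--\ref{lca04t} never require an upper cutoff, so no Laguerre function estimates are needed.
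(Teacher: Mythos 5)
Your proof is correct and follows essentially the same route as the paper: starting from \eqref{lc06t02}, chaining Lemmas \ref{lca02t}--\ref{lca04t}, and --- crucially --- letting the cutoff depend on the exponent so that the prefactor $2^{2k\delta}$ stays bounded (the paper takes $\delta=1/k=2^{-(m-1)}$, you take $2^{-m}$; the difference is immaterial), with the same tail estimate via \eqref{int01t02} and the same limits $N\to\infty$, then $M\to\infty$.
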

\begin{proof}
We start from \eqref{lc06t02} but use now Lemmas \ref{lca02t} through \ref{lca04t}. These imply (we only need even exponents)
\begin{equation*}
  \frac{1}{\pi^{2k}} \tr[ H_{N,1}^{2k} ]
    \leq \frac{2n_\delta(N)}{\pi^2} 2^{2k\delta} \int_\R \frac{1}{[\cosh(\omega)]^{2k}} \, d\omega,\
      0< \delta \leq \frac{1}{3},\ n_\delta(N) = \frac{1}{4}\ln\frac{N+\delta}{\delta} ,
\end{equation*}
which can be further estimated with the aid of \eqref{int01t02}
\begin{equation*}
  \frac{1}{\pi^{2k}} \tr[ H_{N,1}^{2k} ] \leq \frac{2n_\delta(N)}{\pi^2} 2^{2k\delta} \frac{2}{\sqrt{k-1}},\ k\geq 2 .
\end{equation*}
In order to compensate the exponentially growing prefactor we choose $\delta=\frac{1}{k}$,
\begin{equation*}
  \frac{1}{\pi^{2k}} \tr[ H_{N,1}^{2k} ] \leq \frac{16}{\pi^2} \frac{n_{\frac{1}{k}}(N)}{\sqrt{k-1}},\ n_{\frac{1}{k}}(N) = \frac{1}{4}\ln[ ( N+\frac{1}{k})k].
\end{equation*}
Now we can estimate the infinite sum in \eqref{lc06t02}
\begin{equation*}
\begin{split}
  \sum_{m=M+1}^\infty \frac{1}{\pi^{2^m}} \tr[ H_{N,1}^{2^m} ]
     & \leq \frac{16}{\pi^2} \sum_{m=M+1}^\infty \frac{1}{\sqrt{2^{m-1}-1}} \frac{1}{4} \ln\bigl( ( N+\frac{1}{2^{m-1}} ) 2^{m-1} \bigr)\\
     & \leq \frac{4}{\pi^2} \sum_{m=M}^\infty \frac{1}{\sqrt{2^m-1}}\bigl\{ m\ln(2) + \ln(N+\frac{1}{2^m}) \bigr\} \\
     & \leq C_1 \sum_{m=M}^\infty \frac{m}{\sqrt{2^m-1}} + C_2 \ln(N+1) \sum_{m=M}^\infty \frac{1}{\sqrt{2^m-1}} .
\end{split}
\end{equation*}
This yields the analogue of \eqref{lc06t04}
\begin{equation*}
  -\liminf_{N\to\infty} \frac{1}{2n_{\frac{1}{2}}(N)}\ln(\det(\id-\frac{1}{\pi}H_{N,1})) 
    \leq  \sum_{m=0}^M \gamma_{2^m} + C_3 \sum_{m=M}^\infty \frac{1}{\sqrt{2^m-1}} .
\end{equation*}
Letting $M\to\infty $ we obtain the statement.
\end{proof}

\appendix
\section{Integrals\label{int}}
\begin{lemma}\label{int01t}
Let $m\in\N$. Then,
\begin{equation}\label{int01t01}
  I_{2m} 
   \coloneqq \int_\R \frac{1}{ \cosh(x)^{2m} }\, dx 
    = 2 \prod_{k=1}^{m-1} \frac{2k}{2k+1} 
    = 2\frac{4^{m-1}[(m-1)!]^2}{(2m-1)!},
\end{equation}
which can be estimated
\begin{equation}\label{int01t02}
  I_{2m+2} \leq \frac{2}{\sqrt{m}} ,\ m\in\N.
\end{equation} 
\end{lemma}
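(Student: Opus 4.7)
The plan is to establish the closed form through an integration-by-parts recursion and then handle the estimate with an elementary telescoping bound.

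First, I would derive a reduction formula for $I_{2m}$. Splitting the integrand as $\cosh^{-2m}(x) = \cosh^{-(2m-2)}(x)\cdot \cosh^{-2}(x)$ and integrating by parts with $v=\tanh(x)$ (so that $dv = \cosh^{-2}(x)\,dx$), the boundary term at $\pm\infty$ vanishes since $\tanh(x)\cosh^{-(2m-2)}(x)$ decays exponentially for $m\geq 2$. The identity $\tanh^2(x) = 1 - \cosh^{-2}(x)$ then rearranges the remaining integrals into the recursion $(2m-1)\,I_{2m} = (2m-2)\,I_{2m-2}$. Combined with the base case $I_2 = [\tanh(x)]_{-\infty}^{\infty} = 2$, iteration produces the product form in \eqref{int01t01}. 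The factorial representation follows by separating even and odd factors via $2\cdot 4\cdots(2m-2) = 2^{m-1}(m-1)!$ and $3\cdot 5\cdots (2m-1) = (2m-1)!/[2^{m-1}(m-1)!]$.

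For the estimate \eqref{int01t02}, I would square the product $P_m \coloneqq \prod_{k=1}^m \frac{2k}{2k+1}$ (noting that $I_{2m+2} = 2P_m$) and apply the elementary inequality $(2k+1)^2 \geq 4k(k+1)$ to obtain a telescoping bound:
\begin{equation*}
  P_m^2 = \prod_{k=1}^m \frac{4k^2}{(2k+1)^2} \leq \prod_{k=1}^m \frac{k}{k+1} = \frac{1}{m+1} \leq \frac{1}{m}.
\end{equation*}
Taking square roots and multiplying by $2$ immediately yields $I_{2m+2} \leq 2/\sqrt{m}$.

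Both steps are routine and no real obstacle is expected. The only minor points of care are the vanishing of the boundary term in the integration by parts (which requires $m\geq 2$, while $m=1$ is handled directly by $I_2 = 2$) and correctly matching the index shift from $I_{2m}$ to $I_{2m+2} = 2P_m$ when passing from the closed form to the estimate.
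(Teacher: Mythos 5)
Your proposal is correct and follows essentially the same route as the paper: the identity \eqref{int01t01} via the same integration by parts with $\tanh$ yielding the recursion $I_{2(m+1)}=\frac{2m}{2m+1}I_{2m}$ with $I_2=2$, and the bound \eqref{int01t02} via the inequality $(2k+1)^2\geq 4k(k+1)$, which is exactly the AM--GM step the paper uses in telescoped square-root form. Your squared-product version even gives the marginally sharper $I_{2m+2}\leq 2/\sqrt{m+1}$.
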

\begin{proof}
We note $\frac{d}{dx} \tanh(x) = 1/\cosh(x)^2$ and integrate by parts
\begin{equation*}
\begin{split}
  I_{2m+2}
   & = \int_\R \frac{1}{[\cosh(x)]^{2m}} \frac{1}{[\cosh(x)]^2}\, dx\\
   & = \biggl[ \frac{1}{[\cosh(x)]^{2m}} \frac{\sinh(x)}{\cosh(x)} \biggr]_{-\infty}^\infty
         + 2m \int_\R \frac{\sinh(x)}{[\cosh(x)]^{2m+1}} \frac{\sinh(x)}{\cosh(x)}\, dx\\
   & = 2m\int_\R \frac{[\cosh(x)]^2}{[\cosh(x)]^{2m+2}}\, dx - 2m \int_\R \frac{1}{[\cosh(x)]^{2m+2}}\, dx\\
   & = 2m I_{2m} - 2m I_{2m+2} .
\end{split}
\end{equation*}
We solve for $I_{2m+2}$ to obtain the recursion formula
\begin{equation*}
  I_{2(m+1)} = \frac{2m}{2m+1} I_{2m}
\end{equation*}
which immediately yields
\begin{equation*}
  I_{2(m+1)} = 2 \prod_{k=1}^m \frac{2k}{2k+1} = 2\prod_{k=1}^m \frac{k}{k+\frac{1}{2}} 
\end{equation*}
since $I_2=2$. This implies \eqref{int01t01}. In order to derive the bound we use the inequality between
the geometric and arithmetic mean
\begin{equation*}
  I_{2(m+1)} = 2 \frac{\sqrt{m}}{m+\frac{1}{2}} \frac{\sqrt{m}\sqrt{m-1}}{m-\frac{1}{2}} \frac{\sqrt{m-1}\sqrt{m-2}}{m-\frac{3}{2}}
               \cdots \frac{\sqrt{2}\sqrt{1}}{1 + \frac{1}{2}} \sqrt{1}
           \leq 2 \frac{\sqrt{m}}{m+\frac{1}{2}} 
           \leq \frac{2}{\sqrt{m}} .
\end{equation*}
This proves \eqref{int01t02}.
\end{proof}

The following integral is a special case of an integral that appeared in the study of the ground state energy of the free Fermi
gas \cite{OtteSpitzer2018}. We evaluate it here for the sake of completeness.

\begin{lemma}\label{int02t}
Let $\beta\in\C\setminus\interval[open right]{1}{\infty}$. Then,
\begin{equation}\label{int02t01}
  I(\beta) \coloneqq \int_0^\infty \ln\bigl( 1 - \frac{\beta}{\cosh(x)} \bigr)\, dx
     = \frac{1}{2} [\arcosh(-\beta)]^2 + \frac{\pi^2}{8} .
\end{equation}
\end{lemma}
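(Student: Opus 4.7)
The plan is to establish \eqref{int02t01} by parameter differentiation and reduction to an elementary rational integral. Both sides of the identity are holomorphic on the connected (indeed, simply connected) domain $\C\setminus\interval[open right]{1}{\infty}$: the logarithm in $I(\beta)$ takes values in its principal branch because $1-\beta/\cosh(x)$ avoids the cut $\interval[open left]{-\infty}{0}$ exactly when $\beta\notin\interval[open right]{1}{\infty}$. Hence it suffices to match derivatives on the domain and agree at a single point, for which I would take $\beta=0$.

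First I would differentiate under the integral sign to get
\begin{equation*}
  I'(\beta) = -\int_0^\infty \frac{dx}{\cosh(x)-\beta}
\end{equation*}
and then substitute $u=e^x$, which converts the integrand using $\cosh(x)-\beta = (u^2-2\beta u+1)/(2u)$ and $dx=du/u$ into
\begin{equation*}
  I'(\beta) = -\int_1^\infty \frac{2\,du}{u^2-2\beta u+1} .
\end{equation*}
I would factor the denominator as $(u-r_+)(u-r_-)$ with $r_\pm=\beta\pm\sqrt{\beta^2-1}$ (so $r_+r_-=1$), set $a\coloneqq\arcosh(-\beta)$ so that $\cosh(a)=-\beta$, $\sinh(a)=\sqrt{\beta^2-1}$, and observe $r_\pm=-(\cosh(a)\mp\sinh(a))=-e^{\mp a}$. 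A partial-fraction computation then yields
\begin{equation*}
  I'(\beta) = -\frac{1}{\sinh(a)}\Bigl[\ln\frac{u-r_+}{u-r_-}\Bigr]_{1}^{\infty}
            = -\frac{1}{\sinh(a)}\ln\frac{1+e^{a}}{1+e^{-a}}
            = -\frac{a}{\sinh(a)}
\end{equation*}
via the elementary identity $(1+e^a)/(1+e^{-a})=e^a$.

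To finish, I would differentiate the proposed right-hand side by the chain rule: since $\cosh(a)=-\beta$ gives $da/d\beta=-1/\sinh(a)$, we obtain $\tfrac{d}{d\beta}\bigl[\tfrac12 a^2+\pi^2/8\bigr] = -a/\sinh(a)$, which matches $I'(\beta)$. Hence the two sides differ by a constant on the domain, and evaluating at $\beta=0$ pins it down: $I(0)=0$ on the left, while $\arcosh(0)=i\pi/2$ gives $\tfrac12(i\pi/2)^2+\pi^2/8 = -\pi^2/8+\pi^2/8=0$ on the right. The only genuinely delicate point is the consistent choice of branches for $\sqrt{\beta^2-1}$ and $\arcosh$, but this is automatic on the simply connected domain $\C\setminus\interval[open right]{1}{\infty}$ once it is verified on the real ray $\beta<-1$ (where everything is real and the identity $r_\pm=-e^{\mp a}$ is transparent), after which analytic continuation carries the formula to all admissible $\beta$.
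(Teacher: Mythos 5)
Your proof is correct, and it takes a genuinely different route from the paper. The paper substitutes $y=\cosh(x)-1$, integrates by parts to reach an integral of the form $\int_0^\infty r(y)f^{-1}(y)\,dy$ with $r$ rational, and evaluates it by the residue theorem using the auxiliary function $h(z)=-\frac{1}{4\pi i}[\arcosh(-z-1)]^2$, chosen for its jump across $\interval[open right]{0}{\infty}$; this is the contour-integration pattern from the cited work on the free Fermi gas and extends to other rational weights $r$. You instead differentiate in the parameter $\beta$, reduce $I'(\beta)=-\int_0^\infty (\cosh x-\beta)^{-1}dx$ to a rational integral via $u=e^x$, and obtain $I'(\beta)=-a/\sinh(a)$ with $a=\arcosh(-\beta)$ by partial fractions, then integrate back using $da/d\beta=-1/\sinh(a)$ and normalize at $\beta=0$. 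Your computations check out (including the value $0$ of both sides at $\beta=0$, since $[\arcosh(0)]^2=-\pi^2/4$), and your method is more elementary: no contour integration, only differentiation under the integral (justified since $\cosh(x)-\beta$ is uniformly bounded away from $0$ for $\beta$ in compacta of $\C\setminus\interval[open right]{1}{\infty}$) plus the identity theorem on the simply connected domain. Both arguments must confront the branch structure of $\arcosh$; you handle it cleanly by verifying everything on the real ray $\beta<-1$ and continuing analytically, noting that $[\arcosh(-\beta)]^2$, unlike $\arcosh(-\beta)$ itself, is single-valued and holomorphic off $\interval[open right]{1}{\infty}$. What the paper's method buys is a template that produces the closed form without having to guess an antiderivative in $\beta$; what yours buys is brevity and self-containedness.
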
 
\begin{proof}
First of all, we transform the integral into a form that can be treated by standard methods.
To this end, we write $f(x) = \cosh(x)-1$ for short. Note that $f(0)=0$, $f(\infty)=\infty$, and $f'(x)>0$ for $x>0$. Therefore,
\begin{equation*}
  x = f^{-1}(y),\ dx = \frac{d}{dy} (f^{-1}(y))\, dy,
\end{equation*}
is a well-defined substitution. Hence,
\begin{equation*}
  I(\beta) = \int_0^\infty \ln( 1 - \frac{\beta}{f(x)+1})\, dx
           = \int_0^\infty \ln(1 - \frac{\beta}{y+1} )\frac{d}{dy}(f^{-1}(y))\, dy .
\end{equation*}
An integration by parts yields
\begin{equation*}
  I(\beta) = -\beta \int_0^\infty \frac{1}{y+1-\beta} \frac{1}{y+1} f^{-1}(y)\, dy
           = \int_0^\infty \bigl[ \frac{1}{y+1} - \frac{1}{y+1-\beta} \bigr] f^{-1}(y)\, dy .
\end{equation*}
The integral is of the type
\begin{equation*}
  I(\beta) = \int_0^\infty r(y) g(y) \, dy,\ g(y)\coloneqq f^{-1}(y)
\end{equation*}
where the rational function $r$ does not have poles in $\interval[open right]{0}{\infty}$. Such integrals
can be evaluated by standard methods if one finds a function $h$ with a certain jump at $\interval[open right]{0}{\infty}$. In our case
\begin{equation*}
  h(z) \coloneqq -\frac{1}{4\pi i} [\arcosh(-z-1)]^2 .
\end{equation*}
Then, via the residue theorem
\begin{equation*}
\begin{split}
  I(\beta) & = 2\pi i \sum_{z\in\C\setminus\interval[open right]{0}{\infty}} \res(r(z)h(z)) \\
           & = \frac{1}{2} \sum_{z\in\C\setminus\interval[open right]{0}{\infty}} \res\biggl[ \frac{1}{z+1-\beta}[\arcosh(-z-1)]^2\biggr]
                 - \frac{1}{2} \sum_{z\in\C\setminus\interval[open right]{0}{\infty}} \res\biggl[ \frac{1}{z+1}[\arcosh(-z-1)]^2\biggr]\\
           & = \frac{1}{2}[\arcosh(-\beta)]^2 - \frac{1}{2}[\arcosh(0)]^2 
\end{split}
\end{equation*}
which yields \eqref{int02t01}.
\end{proof}

The method used to prove the preceding lemma does not work in the case $\beta=1$. One could use a continuity argument
to cover this case as well. Instead, we transform the integral into a well-known integral.

\begin{lemma}\label{int03t}
Let $\beta=1$ in Lemma \ref{int02t}. Then,
\begin{equation}\label{int03t01}
  I(1) = \int_0^\infty \ln\bigl(1-\frac{1}{\cosh(x)}\bigr)\, dx = -\frac{3\pi^2}{8} .
\end{equation}
\end{lemma}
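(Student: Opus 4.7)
The plan is to reduce $I(1)$ to classical dilogarithm integrals by the substitution $u=e^{-x}$. With this substitution one has $dx = -du/u$ and
\begin{equation*}
  \cosh(x) = \frac{1+u^2}{2u},\qquad 1-\frac{1}{\cosh(x)} = \frac{(1-u)^2}{1+u^2},
\end{equation*}
so that
\begin{equation*}
  I(1) = \int_0^1 \frac{2\ln(1-u)-\ln(1+u^2)}{u}\, du.
\end{equation*}
Each piece is well defined: near $u=0$ the integrand $\ln(1-u)/u$ has a removable singularity and $\ln(1+u^2)/u$ is of order $u$, and near $u=1$ the factor $1/u$ is bounded, so the singularity $\ln(1-u)$ is integrable.

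Next, I would recognise both integrals as standard. For the first,
\begin{equation*}
  \int_0^1 \frac{\ln(1-u)}{u}\, du = -\mathrm{Li}_2(1) = -\frac{\pi^2}{6}.
\end{equation*}
For the second, the substitution $v=u^2$ gives $du/u = dv/(2v)$ and
\begin{equation*}
  \int_0^1 \frac{\ln(1+u^2)}{u}\, du = \frac{1}{2}\int_0^1 \frac{\ln(1+v)}{v}\, dv = -\frac{1}{2}\mathrm{Li}_2(-1) = \frac{\pi^2}{24}.
\end{equation*}

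Combining yields
\begin{equation*}
  I(1) = 2\cdot\Bigl(-\frac{\pi^2}{6}\Bigr) - \frac{\pi^2}{24} = -\frac{8\pi^2}{24} - \frac{\pi^2}{24} = -\frac{3\pi^2}{8},
\end{equation*}
which is \eqref{int03t01}. There is essentially no obstacle here; the only minor care required is to justify splitting the integrand into the two pieces whose antiderivatives diverge at $u=1$ separately would need an $\varepsilon$-regularisation, but since the second integrand $\ln(1+u^2)/u$ is continuous up to $u=1$ and $\ln(1-u)/u$ is absolutely integrable on $[0,1]$, the split is valid without regularisation. As an alternative (for readers preferring to avoid the substitution $v=u^2$), one may observe $\ln(1+u^2) = \ln(1-iu)+\ln(1+iu)$ and apply the dilogarithm identities $\mathrm{Li}_2(\pm i) = -\pi^2/48 \pm iG$, whose imaginary parts cancel and whose real parts reproduce $\pi^2/24$.
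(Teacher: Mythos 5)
Your computation is correct: the substitution $u=e^{-x}$ does give $1-\frac{1}{\cosh(x)}=\frac{(1-u)^2}{1+u^2}$, the resulting integral
$\int_0^1\frac{2\ln(1-u)-\ln(1+u^2)}{u}\,du$ converges and may be split as you argue, and the values $-\mathrm{Li}_2(1)=-\frac{\pi^2}{6}$ and $-\frac{1}{2}\mathrm{Li}_2(-1)=\frac{\pi^2}{24}$ combine to $-\frac{9\pi^2}{24}=-\frac{3\pi^2}{8}$. This is, however, a genuinely different route from the paper's. There, one integrates by parts to get
$I(1)=-\int_0^\infty\frac{x}{\cosh(x)-1}\,\frac{\sinh(x)}{\cosh(x)}\,dx$, uses $\cosh^2(x)-1=\sinh^2(x)$ to reduce this to $-\frac{3}{2}\int_0^\infty\frac{x}{\sinh(x)}\,dx$, and then quotes the classical value $\frac{\pi^2}{4}$ of the latter integral, obtained by contour integration over a rectangle with a half-circle excised at $i\pi$. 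Your argument trades that contour integration for the two standard dilogarithm values $\mathrm{Li}_2(\pm 1)$, which ultimately rest only on $\sum_{n\geq 1}n^{-2}=\frac{\pi^2}{6}$; it is therefore more elementary and self-contained, at the cost of leaving the residue-calculus framework in which the companion Lemma \ref{int02t} is proved. The paper's route has the virtue of exhibiting $I(1)$ as a multiple of the well-known integral $\int_0^\infty x/\sinh(x)\,dx$ and of matching the complex-analytic machinery already set up for general $\beta$. Either proof is complete and short; no gaps in yours.
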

\begin{proof}
Despite the singularity at $x=0$ the integral is well-defined since the logarithm $x\mapsto \ln(x)$ is integrable a $x=0$.
We integrate by parts and use some standard formulae for the hyperbolic functions
\begin{equation*}
\begin{split}
  I(1) 
   & = - \int_0^\infty \frac{x}{\cosh(x)-1} \frac{\sinh(x)}{\cosh(x)}\, dx\\
   & = - \int_0^\infty \frac{x}{[\cosh(x)]^2-1} \frac{(\cosh(x)+1)\sinh(x)}{\cosh(x)}\, dx\\
   & = - \int_0^\infty \frac{x}{\sinh(x)}\, dx - \int_0^\infty \frac{x}{\sinh(x)\cosh(x)}\, dx\\
   & = - \frac{3}{2}  \int_0^\infty \frac{x}{\sinh(x)}\, dx .
\end{split}
\end{equation*}
The latter integral is well-known and has the value $\frac{\pi^2}{4}$.
It can be evaluated via Cauchy's integral theorem and an appropriate integration
contour. A possible choice is the rectangle with vertices $\pm R$ and $\pm R+i\pi$ with a small half circle at $i\pi$ cut out. 
\end{proof}

%
%
%
\bibliographystyle{plain}
\bibliography{literature} 

\begin{thebibliography}{10}

\bibitem{BoettcherOtte2005}
Albrecht B{\"o}ttcher and Peter Otte.
\newblock {T}he first {S}zeg{\H o} limit theorem for non-selfadjoint operators
  in the {F}\o lner algebra.
\newblock {\em Math. Scand.}, 97(1):115--126, 2005.

\bibitem{FedeleGebert2019}
Emilio Fedele and Martin Gebert.
\newblock {O}n determinants identity minus {H}ankel matrix.
\newblock {\em Bull. Lond. Math. Soc.}, 51(4):751--764, 2019.

\bibitem{GebertKuettlerMuellerOtte2016}
Martin Gebert, Heinrich K\"uttler, Peter M\"uller, and Peter Otte.
\newblock {T}he exponent in the orthogonality catastrophe for {F}ermi gases.
\newblock {\em J. Spectr. Theory}, 6(3):643--683, 2016.

\bibitem{GebertPoplavskyi2019}
Martin Gebert and Mihail Poplavskyi.
\newblock {O}n pure complex spectrum for truncations of random orthogonal
  matrices and {K}ac polynomials.
\newblock May 2019.
\newblock Preprint. arXiv 1905.03154.

\bibitem{HalmosSunder1978}
Paul~Richard Halmos and Viakalathur~Shankar Sunder.
\newblock {\em Bounded integral operators on {$L^{2}$} spaces}, volume~96 of
  {\em Ergebnisse der Mathematik und ihrer Grenzgebiete [Results in Mathematics
  and Related Areas]}.
\newblock Springer-Verlag, Berlin-New York, 1978.

\bibitem{Howland1971}
James~S. Howland.
\newblock {T}race class {H}ankel operators.
\newblock {\em Quart. J. Math. Oxford Ser. (2)}, 22:147--159, 1971.

\bibitem{KnoerrOtteSpitzer2015}
Hans~Konrad Kn\"orr, Peter Otte, and Wolfgang Spitzer.
\newblock {A}nderson's orthogonality catastrophe in one dimension induced by a
  magnetic field.
\newblock {\em J. Phys. A}, 48(32):325202, 17, 2015.

\bibitem{LewandowskiSzynal1998}
Zdzis\l~aw Lewandowski and Jan Szynal.
\newblock An upper bound for the {L}aguerre polynomials.
\newblock In {\em Proceedings of the {VIII}th {S}ymposium on {O}rthogonal
  {P}olynomials and {T}heir {A}pplications ({S}eville, 1997)}, volume~99, pages
  529--533, 1998.

\bibitem{Mitrinovic1970}
D.~S. Mitrinovi\'{c}.
\newblock {\em Analytic inequalities}.
\newblock Die Grundlehren der mathematischen Wissenschaften, Band 165.
  Springer-Verlag, New York-Berlin, 1970.
\newblock In cooperation with P. M. Vasi\'{c}.

\bibitem{Otte2004}
Peter Otte.
\newblock {A}n abstract {S}zeg{\H o} theorem.
\newblock {\em J. Math. Anal. Appl.}, 289(1):167--179, 2004.

\bibitem{OtteSpitzer2018}
Peter Otte and Wolfgang Spitzer.
\newblock {G}round-state energy of one-dimensional free {F}ermi gases in the
  thermodynamic limit.
\newblock January 2018.
\newblock Preprint. arXiv 1801.00144.

\bibitem{Peller2003}
Vladimir~V. Peller.
\newblock {\em {H}ankel operators and their applications}.
\newblock Springer Monographs in Mathematics. Springer-Verlag, New York, 2003.

\bibitem{ReedSimon1978}
Michael Reed and Barry Simon.
\newblock {\em {M}ethods of modern mathematical physics. {IV}. {A}nalysis of
  operators}.
\newblock Academic Press [Harcourt Brace Jovanovich Publishers], New York,
  1978.

\bibitem{Szego1975}
G{\'a}bor Szeg{\H{o}}.
\newblock {\em {O}rthogonal polynomials}.
\newblock American Mathematical Society, Providence, R.I., fourth edition,
  1975.
\newblock American Mathematical Society, Colloquium Publications, Vol. XXIII.

\bibitem{Wouk1965}
Arthur Wouk.
\newblock {I}ntegral representation of the logarithm of matrices and operators.
\newblock {\em J. Math. Anal. Appl.}, 11:131--138, 1965.

\bibitem{Yafaev2013}
D.~R. Yafaev.
\newblock {S}pectral and scattering theory for perturbations of the {C}arleman
  operator.
\newblock {\em Algebra i Analiz}, 25(2):251--278, 2013.

\end{thebibliography}
\end{document}